\newtheorem{lemma}{Lemma}
\newtheorem{theorem}{Theorem}
\newtheorem{proposition}{Proposition}
\newtheorem{definition}{Definition}
\newcommand{\figname}{Fig.}
\newcommand\fig[1]{\figname~\ref{#1}}
\newcommand{\cmark}{\ding{51}}
\newcommand{\slfrac}[2]{\left.#1\middle/#2\right.}
  \providecommand\BibTeX{{%
    \normalfont B\kern-0.5em{\scshape i\kern-0.25em b}\kern-0.8em\TeX}}}
\begin{document}
\title{Fast Parallel Algorithms for Enumeration of Simple, Temporal, and Hop-Constrained Cycles}

\author{Jovan Blanu\v{s}a}
\affiliation{
  \institution{IBM Research Europe - Zurich}
  \city{Zurich}
  \country{Switzerland}
}
\email{jov@zurich.ibm.com}
\additionalaffiliation{
  \institution{Ecole Polytechnique Fédérale de Lausanne}
  \department{School of Computer and Communication Sciences}
  \city{CH-1015 Lausanne}
  \country{Switzerland}
}

\author{Kubilay Atasu}
\affiliation{
  \institution{IBM Research Europe - Zurich}
  \city{Zurich}
  \country{Switzerland}
}
\email{kat@zurich.ibm.com}

\author{Paolo Ienne}
\affiliation{
  \institution{Ecole Polytechnique Fédérale de Lausanne}
  \department{School of Computer and Communication Sciences}
  \city{CH-1015 Lausanne}
  \country{Switzerland}
}
\email{paolo.ienne@epfl.ch}

\begin{CCSXML}
<ccs2012>
<concept>
<concept_id>10003752.10003809.10010170</concept_id>
<concept_desc>Theory of computation~Parallel algorithms</concept_desc>
<concept_significance>500</concept_significance>
</concept>
<concept>
<concept_id>10003752.10003809.10003635</concept_id>
<concept_desc>Theory of computation~Graph algorithms analysis</concept_desc>
<concept_significance>500</concept_significance>
</concept>
</ccs2012>
\end{CCSXML}

\ccsdesc[500]{Theory of computation~Parallel algorithms}
\ccsdesc[500]{Theory of computation~Graph algorithms analysis}

\begin{abstract}
Cycles are one of the fundamental subgraph patterns and being able to enumerate them in graphs enables important applications in a wide variety of fields, including finance, biology, chemistry, and network science.
However, to enable cycle enumeration in real-world applications, efficient parallel algorithms are required.
In this work, we propose scalable parallelisation of state-of-the-art sequential algorithms for enumerating simple, temporal, and hop-constrained cycles.
First, we focus on the simple cycle enumeration problem and parallelise the algorithms by Johnson and by Read and Tarjan in a fine-grained manner.
We theoretically show that our resulting fine-grained parallel algorithms are scalable, with the fine-grained parallel Read-Tarjan algorithm being strongly scalable.
In contrast, we show that straightforward coarse-grained parallel versions of these simple cycle enumeration algorithms that exploit edge- or vertex-level parallelism are not scalable.
Next, we adapt our fine-grained approach to enable the enumeration of cycles under time-window, temporal, and hop constraints.
Our evaluation on a cluster with $256$ CPU cores that can execute up to $1024$ simultaneous threads demonstrates a near-linear scalability of our fine-grained parallel algorithms when enumerating cycles under the aforementioned constraints.
On the same cluster, our fine-grained parallel algorithms achieve, on average, one order of magnitude speedup compared to the respective coarse-grained parallel versions of the state-of-the-art algorithms for cycle enumeration.
The performance gap between the fine-grained and the coarse-grained parallel algorithms increases as we use more CPU cores.
\end{abstract}

\keywords{Cycle enumeration; Parallel graph algorithms; Graph pattern mining}

\maketitle

\section{Introduction}

Graphs are widely adopted for usage as a data representation tool across many domains~\cite{neo4j_whitepaper_finance, wang_recent_2020, noel_cygraph_2016, neo4j_whitepaper_reccomendation}.
A method of analysing graph-based data is to enumerate subgraph patterns, such as cycles, cliques, and motifs, in graphs~\cite{aggarwal_managing_2010}.
However, enumerating subgraph patterns often leads to long execution times because the number of subgraph patterns that can exist in a graph~\cite{aldred_maximum_2008, das_shared-memory_2020} can be orders of magnitude greater than the number of graph vertices.
As a result, fast subgraph enumeration algorithms are required that can exploit the parallelisim in modern multi-core processors.
In this paper, we focus on enumerating simple cycles in directed graphs and introduce scalable parallel algorithms for that problem.
A simple cycle is a sequence of edges that starts and ends with the same vertex and visits other vertices at most once.
Enumerating simple cycles has important applications in several domains.
For example, in electronic design automation, combinatorial loops in circuits are typically forbidden~\cite{gupta_acyclic_2005, pothukuchi_dhuria_2021}, and such loops can be detected by enumerating simple cycles.
In a software bug tracking system, a dependency between two software bugs requires one bug to be addressed before the other~\cite{sas_example}. Circular bug dependencies are undesirable and can be detected by finding simple cycles.
Other applications include detecting feedback loops in biological networks~\cite{kwon_analysis_2007, klamt_computing_2009} and detecting unstable relationships in social networks~\cite{giscard_evaluating_2017,zhou_cycle_2018}.

Various types of constraints are often imposed on the simple cycles because the search for simple cycles may otherwise be computationally impossible~\cite{peng_towards_2019, qiu_real-time_2018, kumar_2scent_2018}.
For instance, temporal ordering constraints can be imposed when searching for simple cycles in temporal graphs that have edges annotated with timestamps.
Simple cycles enumerated under this constraint contain edges ordered in time; such cycles are referred to as temporal cycles~\cite{kumar_2scent_2018}.
Enumerating temporal cycles has applications in the financial domain, where a temporal cycle can be an indicator of money laundering~\cite{hajdu_temporal_2020, li_flowscope_2020, AMLSim}, credit card fraud~\cite{qiu_real-time_2018}, or circular trading used for manipulating stock prices~\cite{palshikar_collusion_2008, islam_approach_2009, jiang_trading_2013}.
Other types of constraints include hop constraints~\cite{peng_towards_2019, qiu_real-time_2018}, which limit the length of paths explored during the search for cycles, and time-window constraints~\cite{kumar_2scent_2018}, which restrict the search to cycles that occur within a time window of a given size.
Hop-constrained cycles can be used to detect fraudulent behaviour in e-commerce networks~\cite{qiu_real-time_2018}.
In addition, long cycles in financial transaction networks are less likely to be associated with money laundering because they increase the risk for fraudsters of being caught~\cite{li_flowscope_2020}, and imposing hop-constraints can filter out such cycles.
Furthermore, searching for cycles under temporal ordering, hop, and time-window constraints reduces the number of paths explored during the search, making the cycle enumeration problem more tractable.
Therefore, we focus on searching for cycles under these constraints.

\begin{figure}[t]
    \includegraphics[width=0.67\linewidth]{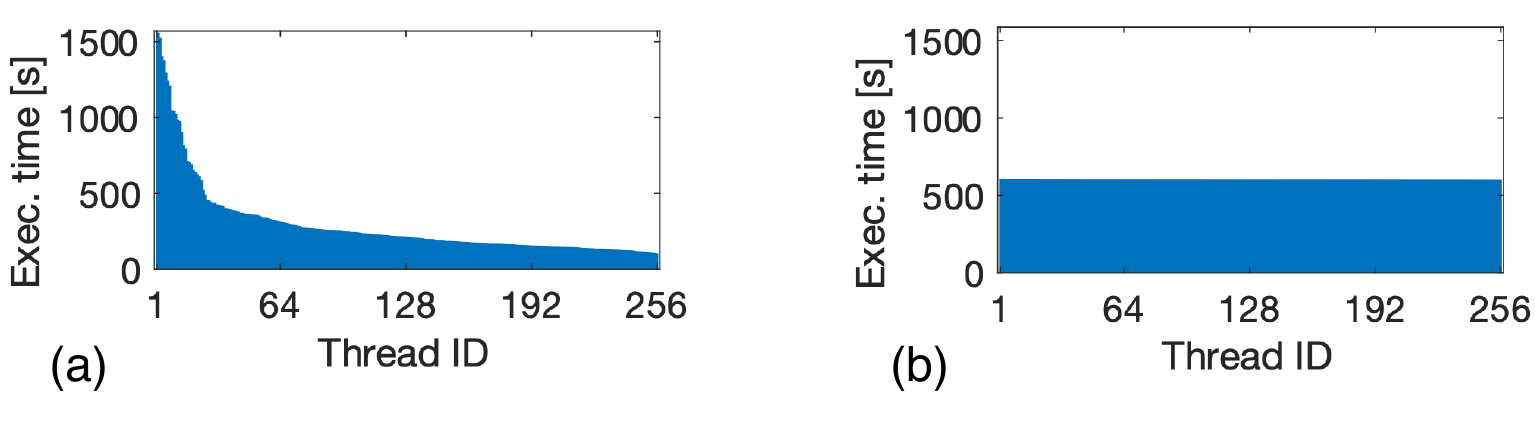}
	\vspace{-.2in}
	\caption{Per-thread execution time of (a) the coarse-grained Johnson algorithm vs. (b) our fine-grained Johnson algorithm
	using the WT graph and a $12h$ time window.
	Thanks to a perfect load balancing, our fine-grained method is $3\times$ faster on a $64$-core CPU executing $256$ threads.}
	\vspace{-.15in}
	\label{fig:motivfig}
\end{figure}

\textbf{Parallelisation challenges.} We focus on parallelising the algorithms by Johnson~\cite{johnson_finding_1975} and by Read and Tarjan~\cite{read_bounds_1975} for enumerating simple cycles because these algorithms achieve the lowest time complexity bounds reported for directed graphs~\cite{mateti_algorithms_1976, kao_enumeration_2016}.
Both algorithms are recursively formulated and construct a recursion tree in a depth-first fashion.
However, these algorithms employ different pruning techniques to limit the amount of work they perform.
In practice, the Johnson algorithm is faster than the Read-Tarjan algorithm due to more aggressive pruning techniques~\cite{kao_enumeration_2016,mateti_algorithms_1976}.
Furthermore, the state-of-the-art algorithms for temporal and hop-constrained cycle enumeration are extensions of the Johnson algorithm~\cite{kumar_2scent_2018, peng_towards_2019}.
Thus, parallelising the Johnson algorithm also enables parallelisation of these temporal and hop-constrained cycle enumeration algorithms.

The na\"ive way of parallelising the Johnson and the Read-Tarjan algorithms involves searching for cycles starting from different vertices or edges in parallel, which we refer to as the \textit{coarse-grained parallel} methods.
Such coarse-grained parallel approaches are straightforward to implement using the popular vertex-centric~\cite{malewicz_pregel_2010, mccune_thinking_2015} and edge-centric~\cite{roy_x-stream_2013} graph processing frameworks.
However, real-world graphs often exhibit a power-law or a log-normal distribution of vertex degrees~\cite{barabasi_network_2016, broido_scale-free_2019}.
In such graphs, the execution time of coarse-grained parallel approaches is dominated by searches that start from a small set of vertices or edges as illustrated in \fig{fig:motivfig}a.
This behaviour leads to a workload imbalance and limits scalability of parallel implementations.

The shortcomings of coarse-grained parallel approaches can be addressed by decomposing the search for cycles starting from a given edge or vertex into finer-grained tasks~\cite{blanusa_manycore_2020, das_shared-memory_2020, abdelhamid_scalemine_2016}.
However, parallelising the Johnson algorithm using the fine-grained approach is challenging because the pruning efficiency of this algorithm depends on a strictly sequential depth-first-search-based recursion tree exploration. 
We demonstrate that the lesser-known Read-Tarjan algorithm does not have such a requirement, and, thus, it is easier to decompose into fine-grained tasks.

\begin{table}[t]
\centering
\caption{
Our fine-grained parallel Read-Tarjan algorithm is the only solution that is both work-efficient and scalable.}
\vspace{-.05in}
\addtolength{\tabcolsep}{-1pt}
\begin{tabular}{l|cc}
\textbf{Parallel algorithm}               &  \textbf{Work-efficient} & \textbf{Scalable}  \\ \hline
Coarse-grained parallel algorithms  &  \cmark  &        \\
Our fine-grained parallel Johnson    &  & \cmark   \\
Our fine-grained parallel Read-Tarjan  & \cmark  & \cmark     \\ \hline
\end{tabular}
\label{tab:theoSummary}
\vspace{-.2in}
\end{table}

\textbf{Contributions.}
This paper presents an extension of the work by Blanu\v{s}a et al.~\cite{blanusa_scalable_2022}, which introduces the following contributions:

\emph{(i) Scalable fine-grained parallelisation of the Johnson and the Read-Tarjan algorithms.} 
To our knowledge, we are the first ones to parallelise these asymptotically-optimal cycle enumeration algorithms in a fine-grained manner and achieve an almost linear performance scaling on a system that can execute up to a thousand concurrent software threads.
Such a scalability is enabled by our decomposition of long sequential searches into fine-grained tasks, which are then dynamically scheduled across CPU cores.
To decompose the Johnson algorithm into fine-grained tasks, we have relaxed its strictly depth-first-search-based exploration, which enables this algorithm to perform multiple independent depth-first searches in parallel.
As a result, our fine-grained parallel Johnson algorithm is able to achieve an ideal load balancing as shown in \fig{fig:motivfig}b.

\emph{(ii) Theoretical analysis of the coarse- and fine-grained parallel algorithm.}
We theoretically show that both of our fine-grained parallel algorithms are scalable, which is not the case for the Johnson and the Read-Tarjan algorithms parallelised in a coarse-grained manner.
Moreover, we show that our fine-grained parallel Read-Tarjan algorithm performs asymptotically the same amount of work as its serial version, whereas our fine-grained parallel Johnson algorithm does not.
Therefore, our fine-grained parallel Read-Tarjan algorithm is the only parallel algorithm based on an asymptotically-optimal cycle enumeration algorithm that is both work-efficient and scalable, as shown in Table~\ref{tab:theoSummary}.
Interestingly, despite not being work-efficient, our fine-grained Johnson algorithm outperforms our fine-grained parallel Read-Tarjan algorithm in most of our experiments.

In this paper, we extend our prior work~\cite{blanusa_scalable_2022} with the following contributions:

\emph{(iii) General framework for parallelising temporal and hop-constrained cycle enumeration.}
We show that our method for parallelising the Johnson algorithm in a fine-grained manner can be adapted to parallelise the state-of-the-art algorithms for temporal and hop-constrained cycle enumeration.
This adaptation is possible because these state-of-the-art algorithms, such as the 2SCENT algorithm for temporal cycle enumeration~\cite{kumar_2scent_2018} and the BC-DFS algorithm~\cite{peng_towards_2019} for hop-constrained cycle enumeration, are extensions of the Johnson algorithm.
By parallelising these algorithms using our fine-grained method, we were able to achieve speedups of up to $40\times$ and $61\times$ compared to the coarse-grained parallel versions of 2SCENT and BC-DFS, respectively.

\emph{(iv) Improvements to the pruning efficiency of the Read-Tarjan algorithm.}
To make this algorithm competitive with the Johnson algorithm, we have introduced several optimisations that enhance the pruning efficiency of the Read-Tarjan algorithm.
The optimisations reduce the amount of unnecessary vertex visits that this algorithm performs.
As a result, our improved version of the Read-Tarjan algorithm is up to $6.8\times$ faster than the original version of this algorithm.

\textbf{Paper structure.} The remainder of this paper is organised as follows.
The related work and background are presented in Section~\ref{sect:related_work} and Section~\ref{section:background}, respectively.
Coarse-grained parallel versions of the Johnson and the Read-Tarjan algorithms are covered in Section~\ref{sect:vertEdgePar}.
Section~\ref{sect:tpJohnson} and Section~\ref{sect:tpReadTarjan} introduce our fine-grained parallel versions of the Johnson and the Read-Tarjan algorithms, respectively.
Section~\ref{sect:tpReadTarjan} also includes our optimisations for improving the pruning efficiency of the Read-Tarjan algorithm.
Our general framework for parallelising temporal and hop-constrained cycle enumeration algorithms is presented in Section~\ref{sect:lcycle}.
In Section~\ref{sect:experiments}, we provide an experimental evaluation of our fine-grained parallel algorithms.
Finally, we conclude our work in Section~\ref{sect:conclusion}.

\section{Related work}
\label{sect:related_work}
\textbf{Simple cycle enumeration algorithms.}
Enumeration of simple cycles of graphs is a classical computer science problem~\cite{tiernan_efficient_1970, tarjan_enumeration_1973, johnson_finding_1975, read_bounds_1975, mateti_algorithms_1976, Szwarcfiter1976ASS, kao_enumeration_2016, weinblatt_new_1972, loizou_enumerating_1982, birmele_optimal_2013, agarwal_finding_2016}. 
The backtracking-based algorithms by Johnson~\cite{johnson_finding_1975}, Read and Tarjan~\cite{read_bounds_1975}, and Szwarcfiter and Lauer~\cite{Szwarcfiter1976ASS} achieve the lowest time complexity bounds for enumerating simple cycles in directed graphs.
These algorithms implement advanced recursion tree pruning techniques to improve on the brute-force Tiernan algorithm~\cite{tiernan_efficient_1970}. Section~\ref{section:back_johnson} covers such pruning techniques in further detail.
A cycle enumeration algorithm that is asymptotically faster than the aforementioned algorithms~\cite{johnson_finding_1975, read_bounds_1975, Szwarcfiter1976ASS} has been proposed in Birmelé et al.~\cite{birmele_optimal_2013}, however, it is applicable only to undirected graphs.
Simple cycles can also be enumerated by computing the powers of the adjacency matrix~\cite{danielson_finding_1968, kamae_systematic_1967, ponstein_self-avoiding_1966} or by using circuit vector space algorithms~\cite{mateti_algorithms_1976, gibbs_cycle_1969, welch_numerical_1965}, but the complexity of such approaches grows exponentially with the size of the cycles or the size of the input graphs.

\begin{table}[t]
\centering
\caption{Capabilities of the related work versus our own. Competing algorithms either fail to exploit fine-grained parallelism or do it on top of asymptotically inferior \mbox{algorithms}.}
\addtolength{\tabcolsep}{-1pt}
\begin{tabular}{l|cccccc}
\textbf{Related work} & \textbf{\cite{kumar_2scent_2018}} & \textbf{\cite{qiu_real-time_2018}} & \textbf{\cite{peng_towards_2019}} &  \textbf{\cite{nah_efficient_2020}} & \textbf{\cite{gupta_finding_2021}}  & \textbf{Ours} \\ 
\hline 
Fine-grained parallelism    &       &       &       & \cmark &       & \cmark \\
Asymptotic optimality       & \cmark &       & \cmark &       & \cmark & \cmark \\ 
Temporal cycles             & \cmark &       &       &       &       & \cmark \\
Time-window constraints     & \cmark & \cmark &       &       &       & \cmark \\
Hop constraints    &       & \cmark & \cmark & \cmark & \cmark & \cmark \\
\hline
\end{tabular}
\label{tab:relWork}
\vspace{-.15in}
\end{table}

\textbf{Time-window, temporal ordering, and hop constraints.}
It is common to search for cycles under some additional constraints. 
For instance, in temporal graphs, it is common to search for cycles within a sliding time window, such as in Kumar and Calders~\cite{kumar_2scent_2018} and Qiu et al~\cite{qiu_real-time_2018}.
In addition, temporal ordering constraints can be imposed when searching for cycles in temporal graphs, such as in Kumar and Calders~\cite{kumar_2scent_2018}.
Furthermore, the maximum number of hops in cycles or paths can be constrained, such as in Gupta and Suzumura~\cite{gupta_finding_2021} and Peng et al.~\cite{peng_towards_2019}.
Note that hop-constrained simple cycles can also be enumerated using incremental algorithms, such as in Qiu et al.~\cite{qiu_real-time_2018}.
However, this algorithm is based on the brute-force Tiernan algorithm~\cite{tiernan_efficient_1970}, which makes it slower than nonincremental algorithms that use recursion tree pruning techniques~\cite{peng_towards_2019}.
Additionally, because incremental algorithms maintain auxiliary data structures, such as paths, to be able to construct cycles incrementally, they are not as memory-efficient as nonincremental algorithms~\cite{peng_towards_2019}.
Table~\ref{tab:relWork} offers comparisons between the capabilities of these methods and~ours. 

\textbf{Parallel and distributed algorithms for cycle enumeration.}
Cui et al. \cite{cui_multi-threading_2017} proposed a multi-threaded algorithm for detecting and removing simple cycles of a directed graph.
The algorithm divides the graph into its strongly-connected components and each thread performs a depth-first search on a different component to find cycles.
However, sizes of the strongly-connected components in real-world graphs can vary significantly~\cite{meusel_graph_2014}, which leads to a workload imbalance.
Rocha and Thatte~\cite{rocha_distributed_2015} proposed a distributed algorithm for simple cycle enumeration based on the bulk-synchronous parallel model~\cite{valiant_bridging_1990}, but it searches for cycles in a brute-force manner.
Qing et al.~\cite{nah_efficient_2020} introduced a parallel algorithm for finding length-constrained simple cycles. 
It is the only other fine-grained parallel algorithm we are aware of in the sense that it can search for cycles starting from the same vertex in parallel.
However, the way this algorithm searches for cycles is similar to the way the brute-force Tiernan algorithm~\cite{tiernan_efficient_1970} works.
To our knowledge, we are the first ones to introduce fine-grained parallel versions of asymptotically-optimal simple cycle enumeration algorithms, which do not rely on a brute-force search, as we show in Table~\ref{tab:relWork}.
Distributed algorithms for detecting the presence of cycles in graphs readily exist~\cite{Bader99apractical, oliva_distributed_2018, fraigniaud_distributed_2019}.
However, our focus is on discovering all simple cycles of a graph rather than detecting whether a graph has a cycle or not.

\section{Background}
\label{section:background}

This section introduces the main theoretical concepts used in this paper and provides an overview of the most prominent simple cycle enumeration algorithms.
The notation used is given in Table~\ref{tab:notation}.

\subsection{Preliminaries}
\label{sect:back_prelim}

\begin{table}[t]
\centering
\caption{Summary of the notation used in the paper.
\vspace{-.05in}
}
\begin{tabular}{l|l|l|l}
	\textbf{Symbol}            &    \textbf{Description}   & \textbf{Symbol}            &    \textbf{Description}      \\ \hline

    \textbf{$\mathcal{G}(\mathcal{V}, \mathcal{E})$} & Graph with vertices $\mathcal{V}$ and edges $\mathcal{E}$. &
    \textbf{$\Pi$}  & Current simple path explored by an algorithm. \\
    \textbf{$\mathcal{N}(v)$} & The set of neighbours of the vertex $v$. &
    \textbf{$\mathit{Blk}$} & Set of blocked vertices. \\
	\textbf{$u \rightarrow v$} & A directed edge from $u$ to $v$. &
    \textbf{$\mathit{Blist}$} & Unblock list of the Johnson algorithm. \\
    \textbf{$n$, $e$}    & No. vertices and edges in a graph.    &
    \textbf{$\mathit{E}$} & Path extension of the Read-Tarjan algorithm. \\
    \textbf{$\delta$} & Size of a time window. &
    \textbf{$\mathit{X_{T_i}}$} & Data structure $X$ is maintained by the thread $T_i$. \\
    \textbf{$L$} & Hop constraint. &
	\textbf{$p$}       &  No. threads used by a parallel algorithm. \\
	\textbf{$c$}       & No.  simple cycles in a graph. &
	\textbf{$T_{p}(n)$}       & Execution time of a parallel algorithm. \\ 
    \textbf{$s$} & No. maximal simple paths in a graph. &
	\textbf{$W_{p}(n)$}       & Amount of work a parallel algorithm performs. \\ 
	\hline
\end{tabular}
\label{tab:notation}
\vspace{-.15in}
\end{table}

We consider a directed graph $\mathcal{G}(\mathcal{V}, \mathcal{E})$ having a set of vertices $\mathcal{V}$ and a set of directed edges $\mathcal{E} = \{ u \rightarrow v \mid u, v \in \mathcal{V}\}$.
The set of neighbours of a given vertex $v$ is defined as $\mathcal{N}(v) = \{ w \mid \; v \rightarrow w \in \mathcal{E}\}$.
We refer to the vertex $v$ of an edge $v \rightarrow u$ as its source vertex and to the vertex $u$ as its destination vertex.
An outgoing edge of a given vertex $v$ is defined as $v \rightarrow w$ and an incoming edge is defined as $u \rightarrow v$, where $v \rightarrow w, u \rightarrow v  \in \mathcal{E}$.
A \emph{path} between the vertices $v_0$ and $v_k$, denoted as $v_0 \rightarrow v_1 \ldots \rightarrow v_k$, is a sequence of vertices such that there exists an edge between every two consecutive vertices of the sequence.
A \emph{simple path} is a path with no repeated vertices.
A simple path is \emph{maximal} if the last vertex of the path has no neighbours or all of its neighbours are already in the path~\cite{erdos_maximal_1959}.
A \textbf{cycle} is a path of non-zero length from a vertex $v$ to the same vertex $v$.
A \textbf{simple cycle} is a cycle with no repeated vertices except for the first and last vertex.
The number of maximal simple paths and the number of simple cycles in a graph are denoted as $s$ and $c$, respectively (see Table~\ref{tab:notation}).
Note that $s$ can be exponentially larger than $c$~\cite{tarjan_enumeration_1973}.
A path or a cycle is said to satisfy a \textbf{hop-constraint} $L$ if the number of edges in that path or cycle is less than or equal to $L$.
The goal of \textbf{simple cycle enumeration} is to compute all simple cycles of a directed graph $\mathcal{G}$, ideally without computing all maximal simple paths of it.

A \textbf{temporal graph} is a graph that has its edges annotated with timestamps.~\cite{paranjape_motifs_2017}.
Such a graph might contain \textit{parallel edges}, which are edges with the same source and destination vertices~\cite{Balakrishnan1997}.
An example of a temporal graph with parallel edges is given in \fig{fig:time-window}.
In temporal graphs, a \textbf{temporal cycle} is a simple cycle, in which the edges appear in the increasing order of their timestamps.
A simple cycle or a temporal cycle of a temporal graph occurs within a \textbf{time window} $\left[t_{w1}: t_{w2} \right]$ if every edge of that cycle has a timestamp $t_s$ such that $t_{w1} \leq t_s \leq t_{w2}$.
\fig{fig:time-window} shows the simple cycles of a temporal graph that occur within two different time windows of size $\delta = 5$.
This graph contains two simple cycles in the time window $\left[2: 7\right]$ (\fig{fig:time-window}a), which are also temporal cycles, and two simple cycles in the time window $\left[10: 15\right]$ (\fig{fig:time-window}b), neither being a temporal cycle.
Note that the existence of parallel edges in temporal graphs makes it possible to have several simple cycles that contain the same sequence of vertices, as shown in \fig{fig:time-window}a.
The union of several cycles that contain the same sequence of vertices is called a \textit{cycle bundle}~\cite{kumar_2scent_2018}.

\begin{figure}[t]
        \includegraphics[width=0.7\linewidth]{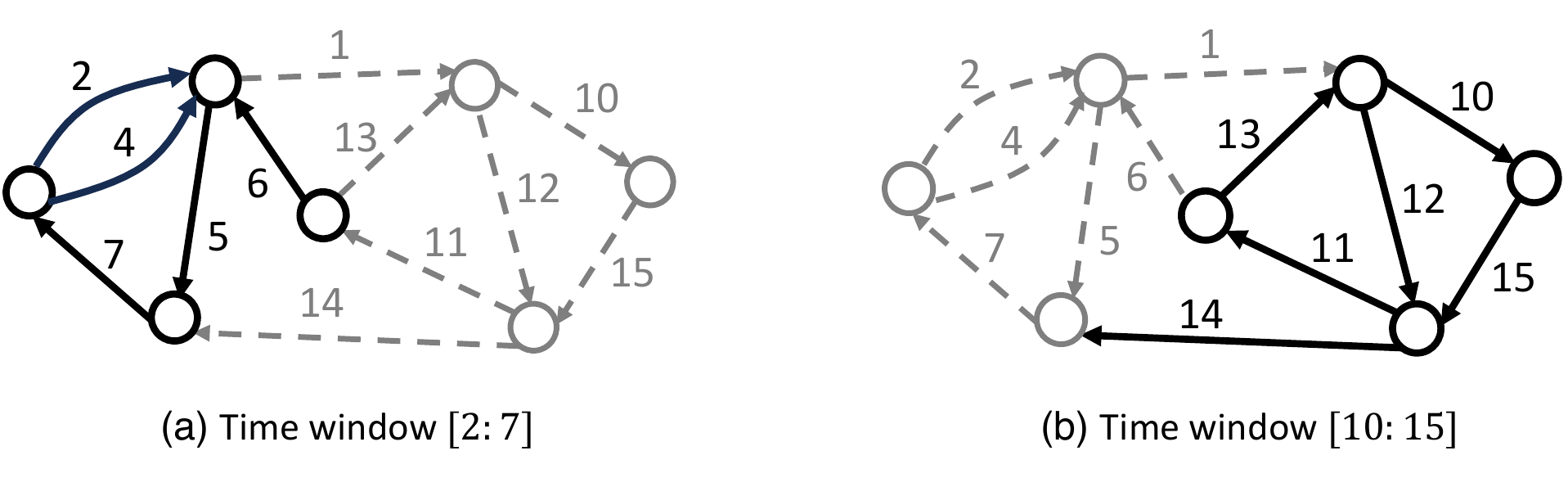}
	\vspace{-.1in}
	\caption{Two snapshots of a temporal graph associated with two different time windows of size $\delta = 5$. The solid arrows indicate the edges that belong to the respective time windows.}
	\label{fig:time-window}
	\vspace{-.1in}
\end{figure}

\subsection{Task-level parallelism}
\label{sect:tlp}
The parallel algorithms described in this paper can be implemented using shared-memory parallel processing frameworks, such as TBB~\cite{kukanov_foundations_2007}, Cilk~\cite{blumofe_cilk_1996}, and OpenMP~\cite{quinn_parallel_2004}.
These frameworks enable the decomposition of a program into tasks that can be independently executed by different software threads.
In our setup, tasks are created and scheduled dynamically. A \emph{parent} task can \emph{spawn} several \emph{child} tasks.
The \textit{depth} of a task is the number of its direct ancestors.
A dynamic task management system assigns the tasks created to the work queues of the available threads.
Furthermore, a work-stealing scheduler~\cite{blumofe_scheduling_1999, kukanov_foundations_2007, blumofe_cilk_1996} enables a thread that is not executing a task to \emph{steal} a task from the work queue of another thread.
Stealing tasks enables dynamic load balancing and ensures full utilisation of the threads when there are sufficiently many tasks.

\subsection{Work efficiency and scalability}
\label{sect:back_eff}

We use the notions of \emph{work efficiency} and \emph{scalability} to analyse parallel algorithms~\cite{par_algos}.
We refer to the time to execute a parallel algorithm on a problem of size $n$ using $p$ threads as $T_p(n)$. The size of a graph is determined by the number of vertices $n$ as well as the number of edges $e$, but we will refer only to $n$ for simplicity.
The \emph{depth} of an algorithm is the length of the longest sequence of dependent operations in the algorithm. 
The time required to execute such a sequence is equal to the execution time of the parallel algorithm using an infinite number of threads, denoted by $T_{\infty}$.
Furthermore, \emph{work} performed by a parallel algorithm on a problem of size $n$ using $p$ threads, denoted as $W_p(n)$, is the sum of the execution times of the individual threads.
The \emph{work efficiency} and the \emph{scalability} are formally defined as follows.
\begin{definition}
\label{def:workEfficiency}
(\textit{Work efficiency}) 
A parallel algorithm is work-efficient if and only if $W_p(n) \in O(T_1(n))$.
\end{definition}

\begin{definition}
\label{def:scalability}
(\textit{Scalability}) 
A parallel algorithm is scalable if and only if $\lim\limits_{n\to\infty} \left(\lim\limits_{p\to\infty} \dfrac{T_p(n)}{T_1(n)} \right) = 0$.
\end{definition}

Informally, a work-efficient parallel algorithm performs the same amount of work as its serial version, within a constant factor.
Scalability implies that, for sufficiently large inputs, increasing the number of threads increases the speedup of the parallel algorithm with respect to its serial version.

We also define the notion of \emph{strong scalability} as follows~\cite{JaJa1992-va}.

\begin{definition}
\label{def:strongScalability}
(\textit{Strong scalability}) 
A parallel algorithm is strongly scalable if and only if $\dfrac{T_1(n)}{T_p(n)} = \Theta(p)$ for large enough~$n$.
\vspace{-.03in}
\end{definition}

Whereas Definition~\ref{def:scalability} implies that the speedup $T_1(n)/T_p(n)$ achieved by a parallel algorithm with respect to its serial execution is infinite when the number of threads $p$ is infinite, Definition~\ref{def:strongScalability} implies that the speedup is always in the order of $p$.
Another related concept is weak scalability, which requires the speedup to be in the order of $p$ when the input size per thread is constant.
Note that both strong scalability and weak scalability imply scalability.

\subsection{Simple cycle enumeration algorithms}

The following algorithms for simple cycle enumeration perform recursive searches to incrementally update simple paths that can lead to cycles.
Each algorithm iterates the vertices or edges of the graph and independently constructs a recursion tree to enumerate all the cycles starting from that vertex or edge.
The difference between these algorithms is to what extent they reduce the redundant work performed during the recursive search, which we discuss next.

\label{section:back_tiernan}

\textbf{The Tiernan algorithm}~\cite{tiernan_efficient_1970} enumerates simple cycles using a brute-force search.
It recursively extends a simple path $\Pi$ by appending a neighbour $u$ of the last vertex $v$ of $\Pi$ provided that  $u$ is not already in $\Pi$.
A clear downside of this algorithm is that it can repeatedly visit vertices that can never lead to a cycle.
When searching for cycles in the graph shown in \fig{fig:background}a starting from the vertex $v_0$, this algorithm would explore the path containing $b_{1}, \ldots, b_{k}$ $2m$ times.
From each vertex $w_i$ and $u_i$, with $i \in \{1, \ldots, m\}$, the Tiernan algorithm would explore this path only to discover that it cannot lead to a simple cycle. 
As noted by Tarjan~\cite{tarjan_enumeration_1973}, the Tiernan algorithm explores every simple path and, consequently, all maximal simple paths of a graph.
Exploring a maximal simple path takes $O(e)$ time because it requires visiting each edge of the graph in the worst case.
Given a graph with $s$ maximal simple paths (see Table~\ref{tab:notation}), the worst-case time complexity of the Tiernan algorithm is~$O(se)$.

\label{section:back_johnson}

\textbf{The Johnson algorithm}~\cite{johnson_finding_1975} improves upon the Tiernan algorithm by avoiding the vertices that cannot lead to simple cycles when appended to the current simple path $\Pi$.
For this purpose, the Johnson algorithm maintains a set of blocked vertices $\mathit{Blk}$ that are avoided during the search.
In addition, a list of vertices $\mathit{Blist}[w]$ is stored for each blocked vertex $w$.
Whenever a vertex $w$ is unblocked (i.e., removed from $\mathit{Blk}$) by the Johnson algorithm, the vertices in $\mathit{Blist}[w]$ are also unblocked. 
This unblocking process is performed recursively until no more vertices can be unblocked, which we refer to as the \emph{recursive unblocking} procedure.

A vertex $v$ is blocked (i.e., added to $\mathit{Blk}$) when visited by the algorithm.
If a cycle is found after recursively exploring every neighbour of $v$ that is not blocked, the vertex $v$ is unblocked.
However, $v$ is not immediately unblocked if no cycles are found after exploring its neighbours.
Instead, the $\mathit{Blist}$ data structure is updated to enable unblocking of $v$ in a later step by adding $v$ to the list $\mathit{Blist}[w]$ of every neighbour $w$ of $v$.
This delayed unblocking of the vertices enables the Johnson algorithm to discover each cycle in $O(e)$ time in the worst case.
Because this algorithm requires $O(n+e)$ time to determine that there are no cycles, its worst-case time complexity is $O\left(n+e +ec\right)$~\cite{Szwarcfiter1976ASS}.
Note that because $s$ can be exponentially larger than $c$~\cite{tarjan_enumeration_1973}, the Johnson algorithm is asymptotically faster than the Tiernan algorithm.

In the example shown in \fig{fig:background}a, every simple path $\Pi$ that starts from $v_0$ and contains vertices $b_1,\ldots,b_k$ is a maximal simple path, and, thus, it cannot lead to a simple cycle. 
The Johnson algorithm would block $b_1,\ldots,b_k$ immediately after visiting this sequence once and then keep these vertices blocked until it finishes exploring the neighbours of $v_2$.
As a result, the Johnson algorithm visits vertices $b_1,\ldots,b_k$ only once, rather than $2m$ times the Tiernan algorithm would visit them.
Note that because these vertices get blocked during the exploration of the left subtree of the recursion tree, they are not going to be visited again during the exploration of the right subtree. 
Effectively, a portion of the right subtree is pruned (see the dotted path in \fig{fig:background}b) based on the updates made on $\mathit{Blk}$ and $\mathit{Blist}$ during the exploration of the left subtree.
This strictly sequential depth-first exploration of the recursion tree is critically important for achieving a high pruning efficiency, but it also makes scalable parallelisation of the Johnson algorithm extremely challenging, which we are going to cover in Section~\ref{sect:tpJohnson}.

\begin{figure}[t]
	\centerline{
		\includegraphics[width=0.7\linewidth]{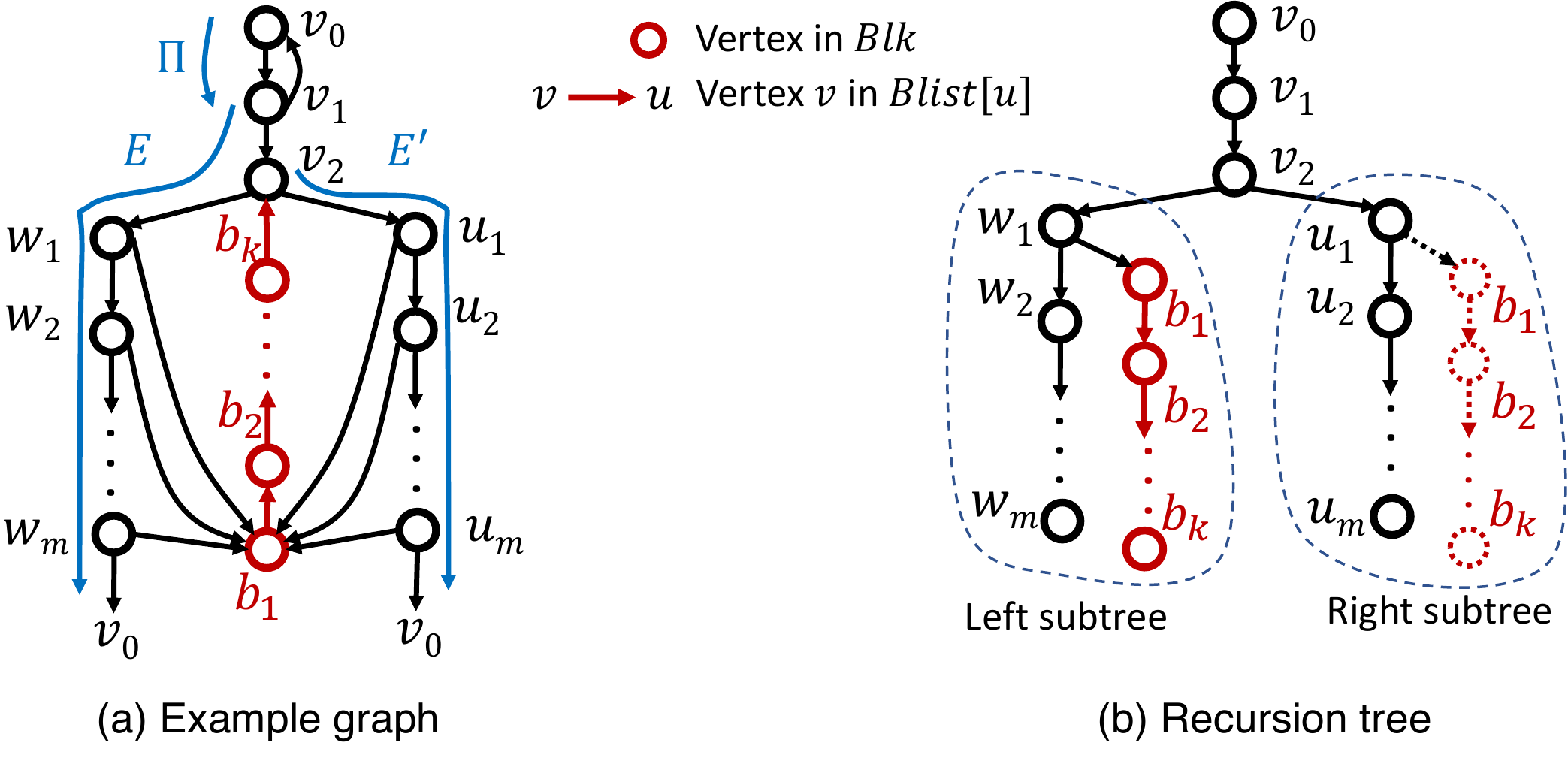}
	}
	\vspace{-.15in}
	\caption{
	(a) An example graph and (b) the recursion tree constructed when searching for cycles that start from $v_0$.
	The nodes of the recursion tree represent the recursive calls of the depth-first search.
	The dotted path of the right subtree is explored only by the Read-Tarjan algorithm.
	}
	\label{fig:background}
	\vspace{-.2in}
\end{figure}

\label{sect:read_tarjan}

\textbf{The Read-Tarjan algorithm}~\cite{read_bounds_1975} also has a worst-case time complexity of $O\left(n + e + ec\right)$.
This algorithm maintains a current path $\Pi$ between a starting vertex and a frontier vertex.
A recursive call of this algorithm iterates the neighbours of the current frontier vertex and performs a depth-first search (DFS). Assume that $v_0$ is the starting vertex and $v_1$ is the frontier vertex of $\Pi$ (see \fig{fig:background}a). From each neighbour $y \in \{v_0, v_2\}$ of $v_1$, a DFS tries to find a path extension $E$ back to $v_0$ that would form a simple cycle when appended to $\Pi$.
In the example shown in \fig{fig:background}a, the algorithm finds two path extensions, one indicated as $E$ and one that consists of the edge $v_1 \rightarrow v_0$.
The algorithm then explores each path extension by iteratively appending the vertices from it to the path $\Pi$.  
For each vertex $x$ added to $\Pi$, the algorithm also searches for an alternate path extension from that vertex $x$ to $v_0$ using a DFS.
In the example given in \fig{fig:background}a, the algorithm iterates through the vertices of the path extension $E$ and finds an alternate path extension $E^{\prime}$ from the neighbour $u_1$ of $v_2$.
If an alternate path extension is found, a child recursive call is invoked with the updated current path $\Pi$, which is $v_0 \rightarrow v_1 \rightarrow v_2$ in our example.
Otherwise, if all the vertices in $E$ have already been added to the current path $\Pi$, $\Pi$ is reported as a simple cycle. 
In our example, the Read-Tarjan algorithm explores both $E$ and $E^{\prime}$ path extensions, and each leads to the discovery~of~a~cycle.

The Read-Tarjan algorithm also maintains a set of blocked vertices $\mathit{Blk}$ for recursion-tree pruning. However, differently from the Johnson algorithm, $\mathit{Blk}$ only keeps track of the vertices that cannot lead to new cycles when exploring the current path extension within the same recursive call.
The vertices in $\mathit{Blk}$ are avoided while searching for additional path extensions that branch from the current path extension.
For instance, the left subtree of the recursion tree shown in \fig{fig:background}b demonstrates the exploration of the path extension $E$ shown in \fig{fig:background}a.
During the exploration of $E$, the vertices $b_1, \ldots, b_k$ are added to $\mathit{Blk}$ immediately after visiting $w_1$, and they are not visited again while exploring $E$.
However, when exploring another path extension $E^{\prime}$ in the right subtree, the vertices $b_1, \ldots, b_k$ are visited once again (see the dotted path of the right subtree).
As a result, the Read-Tarjan algorithm visits $b_1, \ldots, b_k$ twice instead of just once.
As we are going to show in Section~\ref{sect:tpReadTarjan}, this drawback becomes an advantage when parallelising the Read-Tarjan algorithm because it enables independent exploration of different subtrees of the recursion tree.

\section{Coarse-grained parallel methods}
\label{sect:vertEdgePar}

The most straightforward way of parallelising the Johnson and the Read-Tarjan algorithms is to search for cycles that start from different vertices in parallel.
Each such search can then be executed by a different thread that explores its own recursion tree.
This approach is beneficial because it is work-efficient and can be implemented using one of the existing graph processing frameworks, such as Pregel~\cite{malewicz_pregel_2010}, in a manner similar to the method used by Rocha and Thatte~\cite{rocha_distributed_2015}.
We refer to this parallelisation approach as the coarse-grained parallel approach.

The coarse-grained approach can express more parallelism if each thread performs a search for cycles that start from a different edge rather than a different vertex.
This assumption is supported by the fact that graphs typically have more edges than vertices.
Nevertheless, the coarse-grained approach is not scalable, which we prove here.

\begin{proposition}
\label{theorem:cg_workEff}
The coarse-grained parallel Johnson and Read-Tarjan algorithms are work-efficient.
\end{proposition}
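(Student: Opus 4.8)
The plan is to show that the coarse-grained parallel algorithm performs exactly the collection of independent recursion-tree searches that its serial version performs, merely redistributed over the threads, so that $W_p(n)$ and $T_1(n)$ agree up to a constant factor; this is then immediately Definition~\ref{def:workEfficiency}. The argument uses nothing about the particular pruning rules of Johnson or Read-Tarjan --- only that the coarse-grained scheme keeps the per-task auxiliary structures ($\mathit{Blk}$, $\mathit{Blist}$, $E$) private --- so it covers both algorithms at once.

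Concretely, I would first fix the description: the coarse-grained algorithm spawns one task per starting vertex $v$ (or per starting edge in the edge variant), and each task runs the unmodified serial routine rooted at $v$ with its own private copies of the data structures. Since no state is shared across tasks, the recursion tree explored by the task rooted at $v$ is exactly the subtree the serial algorithm explores when it processes $v$; in particular each simple cycle is discovered exactly as many times as in the serial run (e.g.\ once, from its least-indexed vertex), so no extra cycle-finding work is introduced. Writing $w(v)$ for the time spent in that subtree and $\sigma = O(n+e)$ for the cost of the one-time, shared preprocessing (reading the graph; any shared auxiliary structure such as an SCC decomposition), the serial version is just these tasks run one after another, hence $T_1(n) = \Theta\!\bigl(\sigma + \sum_v w(v)\bigr)$, which matches the $O(n+e+ec)$ worst-case bound of these serial algorithms. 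Running with $p$ threads, each of the $O(n)$ tasks (resp.\ $O(e)$ tasks) is executed by a single thread and contributes $\Theta(w(v))$ to the total, giving $\Theta\!\bigl(\sum_v w(v)\bigr)$ of useful work; the task-creation and work-stealing overhead is $O(n+e)$ because the tasks are coarse and number $O(n+e)$, and the shared preprocessing is still done once. Since every execution must read the input, $T_1(n) = \Omega(n+e)$, so these additive terms are absorbed and $W_p(n) = \Theta\!\bigl(\sigma + \sum_v w(v)\bigr) = \Theta(T_1(n)) \subseteq O(T_1(n))$.

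The only real subtlety I anticipate is bookkeeping, not combinatorics: one must ensure that any \emph{per-vertex} auxiliary structure the serial Johnson algorithm rebuilds on each outer iteration (its nested strongly-connected-component decompositions) is computed once and stored in shared read-only memory rather than recomputed inside every task, since $n$ independent recomputations would cost $\Theta\!\bigl(n(n+e)\bigr)$ and could dominate $T_1(n)$ when $c$ is small, breaking work-efficiency. With that arranged, the ``same searches, merely rescheduled'' observation is the whole proof; and it is precisely because a single $w(v)$ can already be $\Theta(T_1(n))$ on power-law graphs (cf.\ \fig{fig:motivfig}) that this work-efficient scheme will later turn out \emph{not} to be scalable.
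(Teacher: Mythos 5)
Your argument is correct and is precisely the ``same searches, merely rescheduled across threads'' observation that the paper has in mind when it declares the proof of Proposition~\ref{theorem:cg_workEff} trivial and omits it. Your extra care about not recomputing shared preprocessing inside every task is a reasonable implementation caveat but not a new idea; the serial algorithms already only launch searches from vertices (or edges) guaranteed to yield a cycle, so the per-task work is charged exactly as in the serial accounting.
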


The proof of Proposition~\ref{theorem:cg_workEff} is trivial, and we omit it for brevity.

\begin{figure}[t]
    \centering
   	\centerline{
		\includegraphics[width=0.8\linewidth]{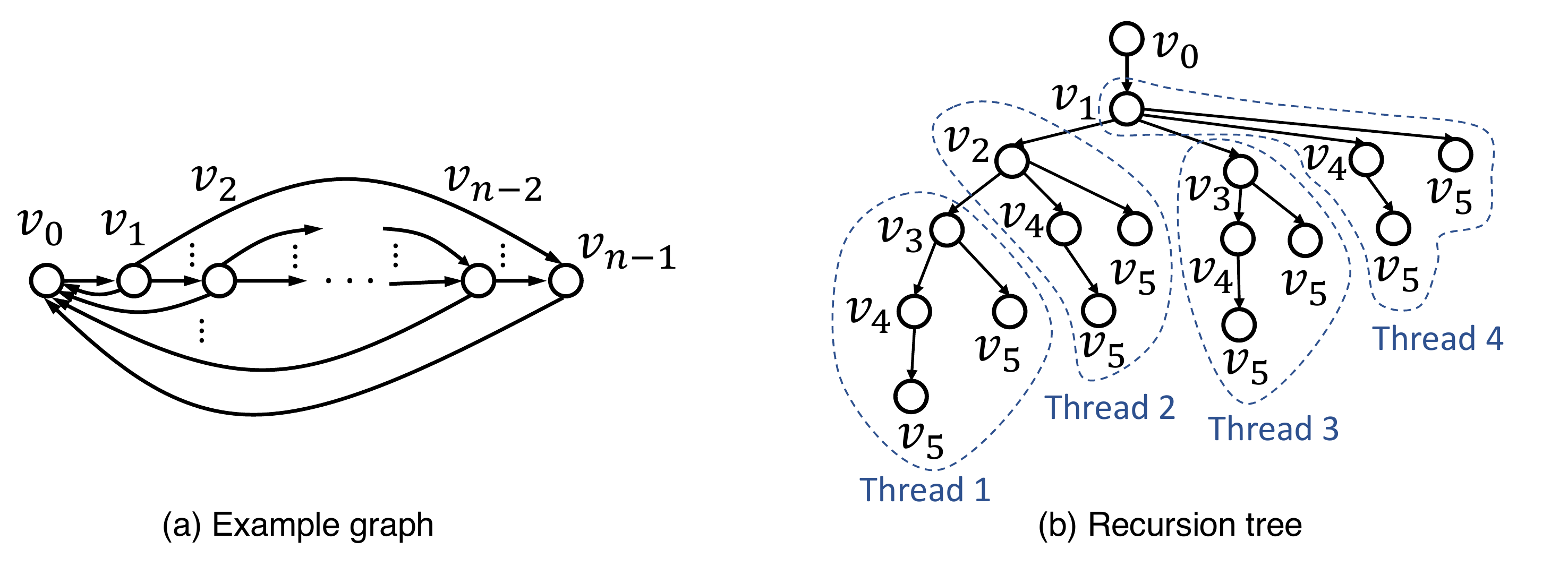}
	}
	\vspace{-.15in}
	\caption{
	(a) A graph with an exponential number of simple cycles.
	(b) The recursion tree of the Johnson algorithm for $n=6$ constructed when the algorithm starts from $v_0$. Whereas a coarse-grained parallel algorithm explores the complete recursion tree using a single thread, our fine-grained parallel algorithms can explore different regions of the recursion tree in parallel using several threads.
	}
	\vspace{-.1in}
	\label{fig:wc_example}
\end{figure}

\begin{table}[t]
\centering
\caption{
Work and depth of the coarse- and fine-grained parallel algorithms.}
\vspace{-.05in}
\addtolength{\tabcolsep}{-1pt}
\begin{tabular}{l|cc}
\textbf{Parallel algorithm}               &  \textbf{Work} & \textbf{Depth}  \\ \hline
    Coarse-grained algorithms &  $O\left(n+e+ec\right)$  & $O\left(ec\right)$        \\
    Fine-grained Johnson algorithm  &  $O\left(n+e+\min\{pce, se\}\right)$ & $O\left(e\right)$   \\
    Fine-grained Read-Tarjan algorithm & $O\left(n+e+ec\right)$  & $O\left(ne\right)$     \\ \hline
\end{tabular}
\label{tab:workDepth}
\vspace{-.1in}
\end{table}

\begin{theorem}
\label{theorem:cg_scalability}
The coarse-grained parallel Johnson and Read-Tarjan algorithms are not scalable.
\end{theorem}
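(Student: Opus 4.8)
The plan is to refute scalability directly from Definition~\ref{def:scalability}: I will exhibit an infinite family of graphs on which the inner limit $\lim_{p\to\infty} T_p(n)/T_1(n)$ stays bounded away from $0$, so that the outer limit over $n$ cannot be $0$. The key structural fact about the coarse-grained scheme is that it partitions the computation in advance into at most $n$ (resp.\ $e$) independent recursion-tree explorations, one per starting vertex (resp.\ edge), and each such exploration is strictly sequential and cannot be accelerated by extra threads. Consequently, once $p$ exceeds the number of tasks, $T_p(n)$ stops decreasing, so $T_\infty(n)$ equals the running time of the most expensive single exploration, plus $O(n+e)$ for preprocessing. It therefore suffices to construct graphs on which one exploration alone already costs $\Theta(T_1(n))$.

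For this I would use graphs of the type shown in Fig.~\ref{fig:wc_example}a: a chain of $k$ two-path ``diamonds'' closed by a back edge to a distinguished vertex $v_0$. Such a graph has $n,e=\Theta(k)$, every simple cycle has length $\Theta(k)=\Theta(e)$, and there are $c=\Theta(2^k)$ of them, all passing through $v_0$; thus $c$ is exponential in $n$ (and, as checked below, $T_1(n)=\Theta(ec)$, matching the serial complexity bound of both the Johnson and the Read-Tarjan algorithm). I would then argue that, under the duplicate-avoiding convention these algorithms use, every cycle is enumerated by the task rooted at $v_0$ — and, for the edge-parallel variant, I would additionally attach a trivial gadget giving $v_0$ in-degree and out-degree $1$, so that this is a single edge task. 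That task must at the very least report all $c$ cycles of length $\Theta(e)$, hence runs for $\Omega(ec)$ time. Every other task, by contrast, operates on the subgraph obtained by deleting $v_0$ (or the mandatory edge), which is acyclic, so it terminates in $O(n+e)$ time; summing over all tasks gives $T_1(n)=\Theta(ec)+n\cdot O(n+e)=\Theta(ec)$, since $ec=\Theta(2^k k)$ dominates $n(n+e)=O(k^2)$. Combining the two bounds, $T_\infty(n)=\Theta(ec)=\Theta(T_1(n))$ on this family.

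It then remains only to conclude: on the family above $\lim_{p\to\infty} T_p(n)/T_1(n)=T_\infty(n)/T_1(n)=\Theta(1)$, which does not vanish as $n\to\infty$, and since $T_p(n)$ denotes a worst-case quantity over graphs of size $n$, the function in Definition~\ref{def:scalability} does not tend to $0$; hence neither coarse-grained algorithm is scalable. As a by-product this also justifies the $\Omega(ec)$ depth entry for the coarse-grained algorithms in Table~\ref{tab:workDepth}.

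The step I expect to be the main obstacle is showing that essentially all of the serial work is funnelled into a single task. This has two parts: (i) making the cycle-to-task assignment convention explicit enough that the argument is not convention-dependent — handled by making $v_0$ the minimum-id vertex and, in the edge case, giving it unit in/out-degree; and (ii) verifying that the remaining tasks are genuinely cheap, i.e.\ that removing $v_0$ (or the bottleneck edge) leaves an acyclic instance, so their total cost is absorbed into the $O(n+e)$ term rather than adding another $\Theta(\cdot)$ contribution. Once the concentration of work is established, the limit computation and the appeal to Definition~\ref{def:scalability} are routine.
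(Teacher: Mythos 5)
Your proposal is correct and follows essentially the same route as the paper: the paper's proof also exhibits a worst-case family (its Fig.~\ref{fig:wc_example}a graph, a dense DAG on $v_1,\ldots,v_{n-1}$ with back edges to $v_0$, giving $c=2^{n-2}$ cycles all starting at $v_0$) in which the entire enumeration collapses onto the single task rooted at $v_0$, so that $T_\infty(n)\in\Theta(ec)=\Theta(T_1(n))$ and the ratio in Definition~\ref{def:scalability} cannot vanish. Your version merely substitutes a chain-of-diamonds witness and is somewhat more explicit about the matching lower bound on $T_\infty$ and the cheapness of the remaining tasks, which the paper leaves implicit.
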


\begin{proof}
In this case, the depth $T_{\infty}(n)$ represents the worst-case execution time of a search for cycles that starts from a single vertex or edge, and it depends on the number of cycles found during this search.
In the worst case, a single recursive search can discover all cycles of a graph.
An example of such graph is given in \fig{fig:wc_example}a, where each vertex $v_i$, with $i\in\{1,\ldots,n-1\}$, is connected to $v_0$ and to every vertex $v_j$ such that $j > i$.
In that graph, any subset of vertices $v_2,\ldots, v_{n-1}$ defines a different cycle.
Therefore, the total number of cycles in this graph is equal to the number of all such subsets $c = 2^{n-2}$.
Before the search for cycles, both the Johnson and the Read-Tarjan algorithm find all vertices that start a cycle, which is only $v_0$ in this case.
Therefore, the search for cycles will be performed only by one thread.
Because both the Johnson and the Read-Tarjan algorithms require $O(e)$ time to find each cycle, the depth of the coarse-grained algorithms is $T_{\infty}(n) \in O(e c)$.
Because $\lim\limits_{n\to\infty} T_{\infty}(n)/T_1(n) \neq 0$, the coarse-grained algorithms are not scalable based on Definition~\ref{def:scalability}.
\vspace{-.07in}
\end{proof}

Theorem~\ref{theorem:cg_scalability} shows that the main drawback of the coarse-grained parallel algorithms is their limited scalability.
This limitation is apparent for the graph shown in \fig{fig:wc_example}a, which has an exponential number of cycles in $n$. 
When using a coarse-grained parallel algorithm on this graph, all the cycles will be discovered by a single thread, and, thus, the depth of this algorithm grows linearly with $c$, as shown in Table~\ref{tab:workDepth}.
Because only one thread can be effectively utilised, increasing the number of threads will not result in a reduction of the overall execution time of the coarse-grained parallel algorithm. \fig{fig:motivfig}a shows the workload imbalance exhibited by the coarse-grained parallel algorithms in practice. Section~\ref{sect:experiments} demonstrates the limited scalability of coarse-grained parallel algorithms in further detail.
   
\section{Fine-grained parallel Johnson}
\label{sect:tpJohnson}

\begin{figure}[t]
	\centerline{
		\includegraphics[width=0.75\linewidth]{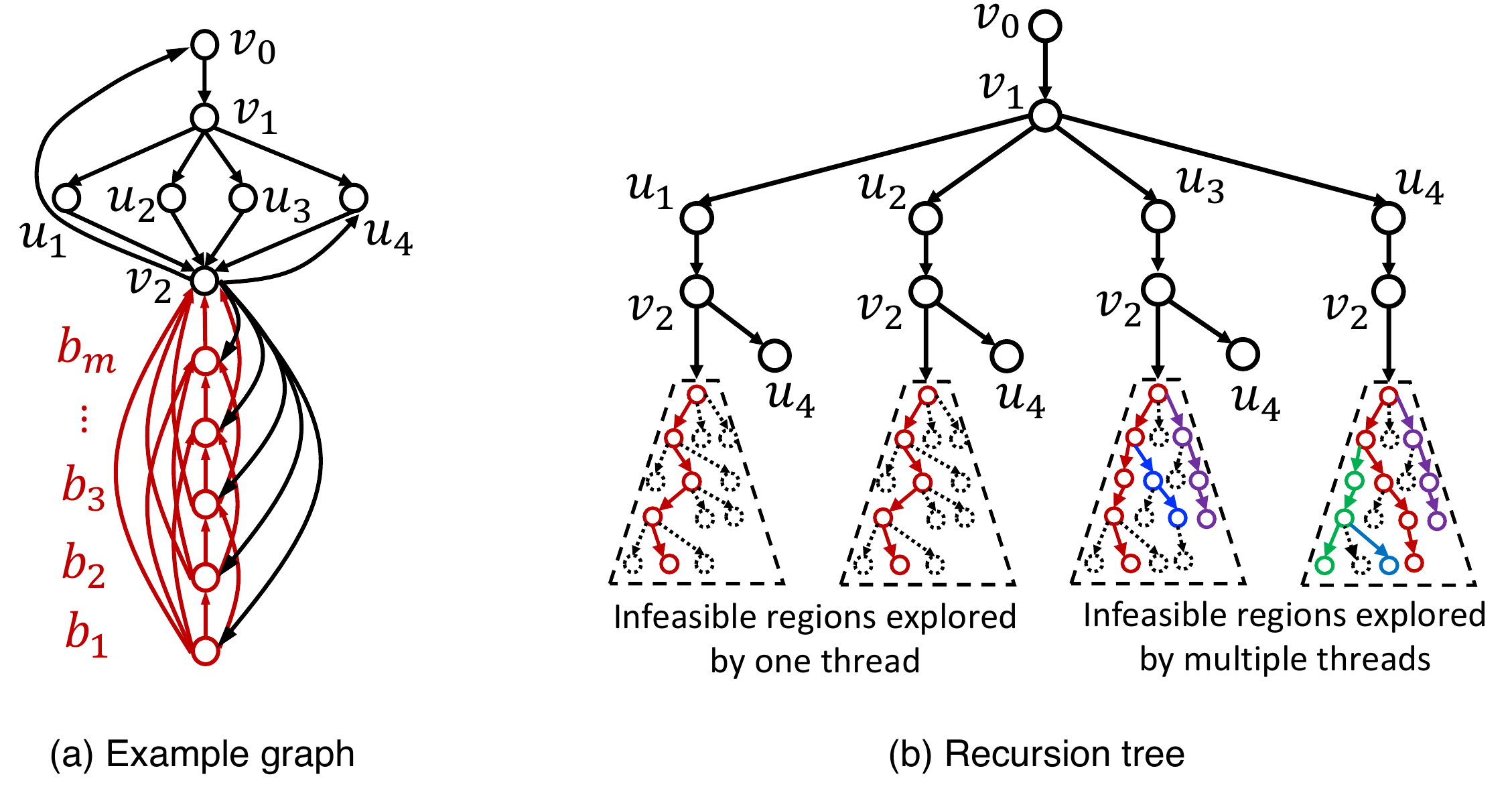}
	}
	\vspace{-.15in}
	\caption{
	(a) An example graph and (b) the recursion tree of our fine-grained Johnson algorithm when enumerating cycles that start from $v_0$.
    Each thread of our fine-grained Johnson algorithm explores the vertices $b_1,\ldots,b_m$ at most once.
	}
	\label{fig:fgj_example}
	\vspace{-.15in}
\end{figure}

To address the load imbalance issues that manifest themselves in the coarse-grained parallel Johnson algorithm, we introduce the \textit{fine-grained parallel Johnson} algorithm.
The main goal of our fine-grained algorithm is to enable several threads to explore a recursion tree concurrently, as shown in \fig{fig:wc_example}b, where each thread executes a subset of the recursive calls of this tree.
However, enabling concurrent exploration of a recursion tree is in conflict with the sequential depth-first exploration, required by the Johnson algorithm to achieve a high pruning efficiency.

In this section, we first discuss the challenges that arise when parallelising the exploration of a recursion tree of the Johnson algorithm.
Then, we introduce the \textit{copy-on-steal} mechanism used to address these challenges and present our fine-grained parallel Johnson algorithm.
Finally, we theoretically analyse our algorithm and show that it is scalable.

\subsection{Fine-grained parallelisation challenges}
The requirement of the sequential depth-first exploration of the Johnson algorithm makes it challenging to efficiently parallelise this algorithm in a fine-grained manner.
This requirement is enforced by maintaining a set of blocked vertices $\mathit{Blk}$ throughout the exploration of a recursion tree.
If threads exploring the same recursion tree simply share the same set of blocked vertices $\mathit{Blk}$, the parallel algorithm could produce incorrect results.
For example, considering the graph given in \fig{fig:fgj_example}a, a thread exploring the path $\Pi = v_0\rightarrow v_1 \rightarrow u_1 \rightarrow v_2$ visit and block the vertex $u_4$ in this case because $u_4$ cannot participate in a simple cycle that begins with $\Pi$.
Because the threads exploring this graph share the blocked vertices, another thread attempting to discover the cycle $v_0 \rightarrow v_1\rightarrow u_4\rightarrow v_2\rightarrow v_0$ would fail to do so because $u_4$ is blocked.
Therefore, this approach might not discover all cycles in a graph.

To enable several threads to correctly find all cycles while exploring the same recursion tree, the algorithm could forward a new copy of the $\mathit{Blk}$ and $\mathit{Blist}$ data structures when invoking each child recursive call.
However, this approach would redundantly explore many paths in a graph.
The reason is that a recursive call would be unaware of the vertices visited and blocked by other calls that precede it in the depth-first order except for its direct ancestors in the recursion tree.
When enumerating the simple cycles of the graph shown in \fig{fig:fgj_example}a starting from $v_0$, this approach explores all $4 \times 2^{m-1}+3$ maximal simple paths instead of just seven, that the Johnson algorithm would explore.
Hence, this approach exhaustively explores all maximal simple paths in the graph and is identical to the brute-force solution of Tiernan (see Section~\ref{section:back_tiernan}).
Next, we propose a fine-grained parallel algorithm that addresses the aforementioned parallelisation challenges.

\subsection{Copy-on-steal}
\label{sect:fgj_copyOnSteal}

\begin{algorithm}[t]
\SetAlgoLined
\SetKwProg{Fn}{Function}{}{}
\SetKwProg{Proc}{Procedure}{}{}
\SetKwProg{Task}{\textcolor{orange}{Task}}{}{}
\SetKwInOut{InOut}{InOut}
\SetKwComment{Comment}{$\triangleright$\ }{}
\KwIn{$\mathrm{v}$ - the current vertex, $\mathrm{v_0}$ - the starting vertex\DontPrintSemicolon\;
\hspace*{10mm}$\mathrm{d}$ - the depth of this task}
\InOut{$\mathrm{T_1}$ - the thread that created this task\Comment*[f]{$\mathrm{T_1}$ maintains $\Pi_{\mathrm{T_1}}$, $\mathrm{Blk}_{\mathrm{T_1}}$, $\mathrm{Blist}_{\mathrm{T_1}}$, and $\mathrm{Mutex}_{\mathrm{T_1}}$}}

\KwOut{\textit{true} if a cycle was found}
    \BlankLine
    $\mathrm{T_2}$ = the thread executing this task\Comment*[r]{$\mathrm{T_2}$ maintains $\Pi_{\mathrm{T_2}}$, $\mathrm{Blk}_{\mathrm{T_2}}$, $\mathrm{Blist}_{\mathrm{T_2}}$, and $\mathrm{Mutex}_{\mathrm{T_2}}$}
    \lIf(\Comment*[f]{Check if this task is stolen}){$\mathrm{T_1} \neq \mathrm{T_2}$}{FGJ\_copyOnSteal($\mathrm{d}$, $\mathrm{T_1}$, $\mathrm{T_2}$)} \label{algline:fgj_task_stolen} 
    $\mathrm{Mutex}_{\mathrm{T_2}}.\mathrm{lock}()$\;
    $\Pi_{\mathrm{T_2}}.\mathrm{push}\mathrm{(v)}$; \label{algline:fgj_task_push}
    $\mathrm{Blk}_{\mathrm{T_2}} = \mathrm{Blk}_{\mathrm{T_2}} \cup \{\mathrm{v}\}$\;
    $\mathrm{Mutex}_{\mathrm{T_2}}.\mathrm{unlock}()$\;

    \BlankLine
    $\mathrm{found} = \mathit{false}$\;
 	\ForEach(\Comment*[f]{Recursively explore the neighbours of $\mathrm{v}$}){$\mathrm{u}$ : $\mathcal{N}\mathrm{(v)}$ $\mathbf{s.t.}$ $\mathrm{u.id} > \mathrm{v_0.id}$}{
 		\If{$\mathrm{u} = \mathrm{v_0}$}{
 			report cycle $\Pi_{\mathrm{T_2}}$\;
            $\mathrm{found} = \mathit{true}$\;
 		}
 		\ElseIf{$\mathrm{u} \notin \mathrm{Blk}_{\mathrm{T_2}}$}{
 			$\mathrm{f} =$ \textcolor{orange}{\textbf{spawn}} FGJ\_task($\mathrm{u}$, $\mathrm{v_0}$, $\mathrm{d+1}$, $\mathrm{T_2}$) \Comment*[r]{Create a child task}
 			$\mathrm{found} = \mathrm{found} \lor \mathrm{f}$\;
 		}
	}
	\textcolor{orange}{\textbf{wait}} for the spawned tasks\;
	
    $\mathrm{Mutex}_{\mathrm{T_2}}.\mathrm{lock}()$\;   \label{algline:fgj_task_lock}
	$\Pi_{\mathrm{T_2}}.\mathrm{pop}()$\;
    \BlankLine
	\If(\Comment*[f]{Unblock vertices if a cycle was found}){$\mathrm{found}$} {RecursiveUnblock($\mathrm{v}$, $\mathrm{Blk}_{\mathrm{T_2}}$, $\mathrm{Blist}_{\mathrm{T_2}}$)\;}
	\Else {\lForEach{$\mathrm{u}$ : $\mathcal{N}\mathrm{(v)}$}{$\mathrm{Blist}_{\mathrm{T_2}}[\mathrm{u}] = \mathrm{Blist}_{\mathrm{T_2}}\mathrm{[u]} \cup \{\mathrm{v}\}$}}
    $\mathrm{Mutex}_{\mathrm{T_2}}.\mathrm{unlock}()$\; \label{algline:fgj_task_unlock}
	\Return $\mathrm{found}$\;
\caption{FGJ\_task ($\mathrm{v}$, $\mathrm{v_0}$, $\mathrm{d}$, $\mathrm{T_1}$)}
\label{algo:fgj_task}
\end{algorithm}

To enable different threads to concurrently explore the recursion tree in a depth-first fashion while also taking advantage of the powerful pruning capabilities of the Johnson algorithm, each thread executing our fine-grained parallel Johnson algorithm maintains its own copy of the $\Pi$, $\mathit{Blk}$, and $\mathit{Blist}$ data structures.
These data structures are copied between threads only when these threads attempt to explore the same recursion tree.
To achieve this behaviour, our fine-grained parallel Johnson algorithm implements each recursive call of the Johnson algorithm as a separate task.
The pseudocode of this task is given in Algorithm~\ref{algo:fgj_task}, where a data structure $X$, maintained by the thread $T_i$, is denoted as $X_{T_i}$ (see Table~\ref{tab:notation}).
If a child task and its parent task are executed by the same thread $T_i$, the child task reuses the $\Pi_{T_i}$, $\mathit{Blk}_{T_i}$, and $\mathit{Blist}_{T_i}$  data structures of the parent task.
However, if a child task has been stolen---i.e., it is executed by a thread other than the thread that created it, the child task will allocate a new copy of these data structures (line~\ref{algline:fgj_task_stolen} of Algorithm~\ref{algo:fgj_task}).
We refer to this mechanism as \emph{copy-on-steal}.

\begin{figure}[t]
	\centerline{
		\includegraphics[width=0.74\linewidth]{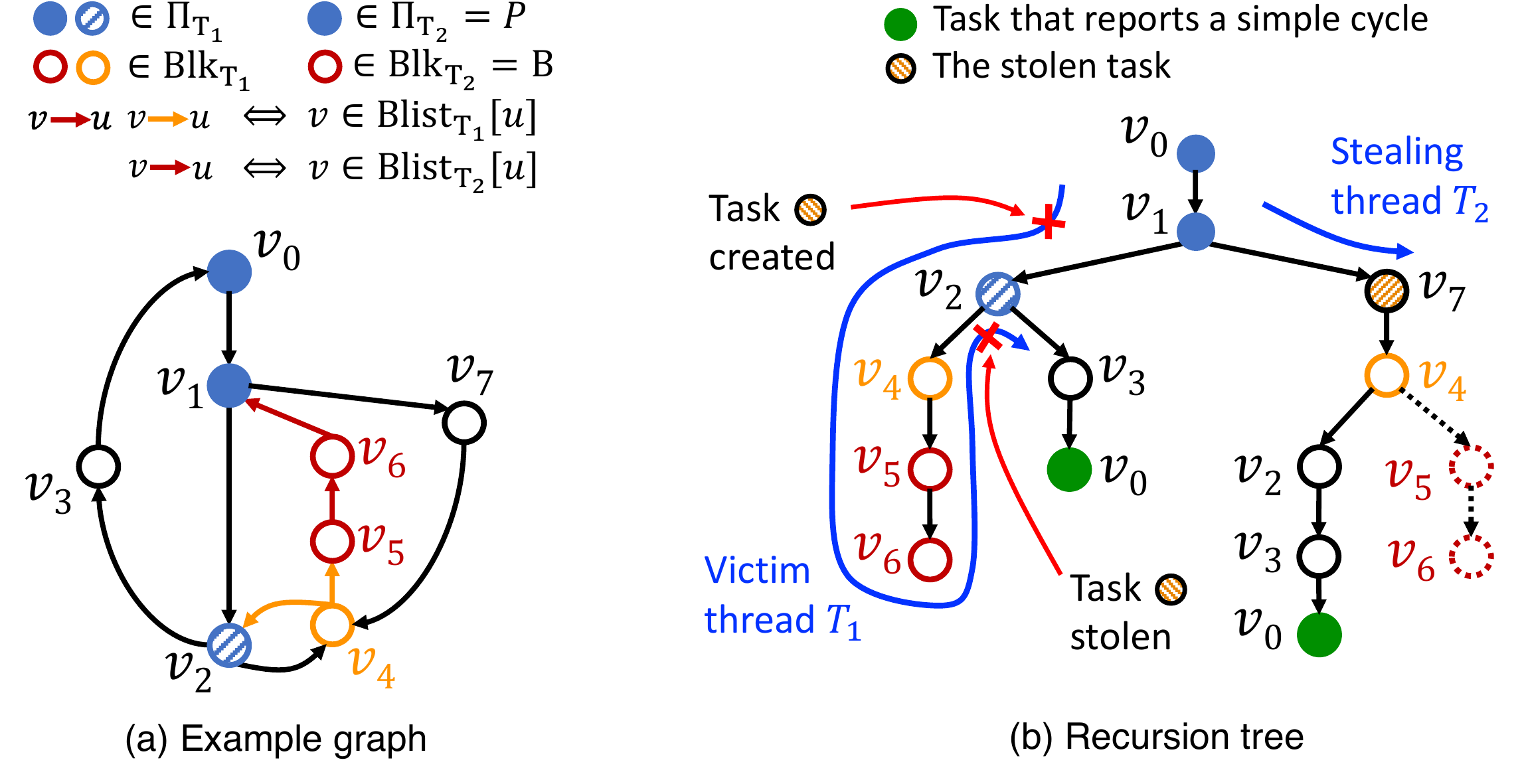}
	}
	\vspace{-.15in}
	\caption{
    (a) An example graph and (b) the recursion tree of our fine-grained Johnson algorithm when enumerating simple cycles that start from $v_0$.
        Here, $X_{T_i}$ denotes a data structure $X$ of the thread $T_i$.
	The thread $T_2$ can prune the dotted part of the tree by avoiding $v_5$ and $v_6$ that the thread $T_1$ has blocked after creating the task stolen by~$T_2$.
	}
	\label{fig:cos_example}
	\vspace{-.1in}
\end{figure}

The problem with copying data structures between different threads upon task stealing is that the thread that has created the stolen task (i.e., the \emph{victim thread}) can modify its data structures before this task is stolen by another thread (i.e., the \emph{stealing thread}). 
This problem can be observed in the example shown in \fig{fig:cos_example}.
There, the victim thread $T_1$ and the stealing thread $T_2$ explore the same recursion tree given in \fig{fig:cos_example}b while searching for cycles that start with $P_{1} = v_0 \rightarrow v_1 \rightarrow v_2$ and $P_{2} = v_0 \rightarrow v_1 \rightarrow v_7$, respectively.
In this case, $T_2$ steals a task created by $T_1$ that explores $v_7$, as indicated in \fig{fig:cos_example}b, and receives a copy of the blocked vertices $\mathit{Blk}_{T_1} = \{v_4,v_5,v_6\}$ discovered by $T_1$.
The thread $T_1$ blocked these vertices because they cannot participate in any simple cycle that begins with $P_1$.
If $T_2$ simply uses a copy of these blocked vertices $\mathit{Blk}_{T_1}$ without modifications, $T_2$ will be unable to find the cycle $v_0\rightarrow v_1 \rightarrow v_7 \rightarrow v_4 \rightarrow v_2 \rightarrow v_3 \rightarrow v_0$ because $v_4$ is blocked.
Therefore, a method for unblocking vertices after copy-on-steal is required to correctly find all cycles.

We explore two solutions for this problem:

\textbf{(i) Copy-on-steal with complete unblocking.}
To enable the threads of our algorithm to find cycles after performing copy-on-steal, the stealing thread could unblock all vertices that the victim thread had blocked after creating the stolen task.
In our example given in \fig{fig:cos_example}, the stealing thread $T_2$ unblocks all vertices $\mathit{Blk}_{T_1} = \{v_4,v_5,v_6\}$ it received from the victim thread $T_1$.
Although this approach enables $T_2$ to correctly find cycles, it also fails to take advantage of the information collected by $T_1$ to reduce the redundant work of $T_2$.
For instance, in \fig{fig:cos_example}, $T_2$ visits $v_5$ and $v_6$, even though $T_1$ already concluded that these vertices cannot participate in any simple cycle that begins with $P = v_0 \rightarrow v_1$, where $P$ is the largest common prefix of all the paths explored by $T_1$ and $T_2$.
As a result, $T_2$ redundantly visits the dotted part of the recursion tree given in \fig{fig:cos_example}b.

\textbf{(ii) Copy-on-steal with recursive unblocking.}
In this approach, the stealing thread capitalises on the information already discovered by the victim thread.
The stealing thread $T_2$ can reuse a subset $B \subset \mathit{Blk}_{T_1}$ of the blocked vertices discovered by $T_1$ if the vertices in $B$ cannot participate in simple cycles that begin with $P$, where $P$ is the largest common prefix of all the paths explored by $T_1$ and $T_2$.
Because any path discovered by $T_2$ begins with $P$, $T_2$ can avoid visiting vertices from $B$.
Thus, to correctly find simple cycles, it is sufficient for $T_2$ to unblock the vertices from $\mathit{Blk}_{T_1} \setminus B$.
To achieve this behaviour, $T_2$ invokes a recursive unblocking procedure of the Johnson algorithm for every vertex $v \in \mathit{\Pi}_{T_1} \setminus P$, as shown in Algorithm~\ref{algo:fgj_cos}, where $\mathit{\Pi}_{T_1}$ is the path $T_1$ is exploring during task stealing.
The vertices in $B$ can only be unblocked by a recursive unblocking invoked for $v \in P$; hence, the vertices in $B$ remain blocked.
In the example given in \fig{fig:cos_example}, $T_2$ invokes a recursive unblocking procedure for $\mathit{\Pi}_{T_1} \setminus P = \{v_2\} $, which results in unblocking of $v_4$.
Thus, $T_2$ is able to discover a cycle that contains $v_4$. 
The vertices $B = \{ v_5, v_6 \}$ will not be unblocked because they cannot take part in any simple cycle that begins with $P  = v_0 \rightarrow v_1$.
Therefore, thread $T_2$ avoids visiting the dotted part of the recursion tree given in \fig{fig:cos_example}b.

\begin{algorithm}[t]
\SetAlgoLined
\SetKwProg{Fn}{Function}{}{}
\SetKwProg{Proc}{Procedure}{}{}
\SetKwProg{Task}{\textcolor{orange}{Task}}{}{}
\SetKwInOut{InOut}{InOut}
\SetKwComment{Comment}{$\triangleright$\ }{}
\KwIn{$\mathrm{d}$ - the depth of the task executing this function}
\InOut{$\mathrm{T_1}$ - the victim thread\DontPrintSemicolon\;
\hspace*{10mm}$\mathrm{T_2}$ - the stealing thread}
$\mathrm{Mutex}_{\mathrm{T_1}}.\mathrm{lock}()$\; \label{algline:fgj_cos_lock}
$\{\Pi_{\mathrm{T_2}}, \mathrm{Blk}_{\mathrm{T_2}}, \mathrm{Blist}_{\mathrm{T_2}}\} = \mathrm{copy}\left(\{\Pi_{\mathrm{T_1}}, \mathrm{Blk}_{\mathrm{T_1}}, \mathrm{Blist}_{\mathrm{T_1}}\}\right)$\Comment*[r]{The data structures of $T_1$ are copied to $T_2$}
$\mathrm{Mutex}_{\mathrm{T_1}}.\mathrm{unlock}()$\; \label{algline:fgj_cos_unlock}
\While(\Comment*[f]{Copy-on-steal with recursive unblocking}){$\left|\Pi_{\mathrm{T_2}}\right| \geq \mathrm{d}$}{
    $\mathrm{u} = \Pi_{\mathrm{T_2}}.\mathrm{pop}()$\;
    RecursiveUnblock($\mathrm{u}$, $\mathrm{Blk}_{\mathrm{T_2}}$, $\mathrm{Blist}_{\mathrm{T_2}}$)\; \label{algline:fgj_cos_recunblock}
}
\caption{FGJ\_copyOnSteal ($\mathrm{d}$, $\mathrm{T_1}$, $\mathrm{T_2}$)}
\label{algo:fgj_cos}
\end{algorithm}

\begin{algorithm}[t]
\SetAlgoLined
\SetKwProg{Fn}{Function}{}{}
\SetKwProg{Proc}{Procedure}{}{}
\SetKwProg{Task}{\textcolor{orange}{Task}}{}{}
\SetKwInOut{InOut}{InOut}
\SetKwComment{Comment}{$\triangleright$\ }{}

\KwIn{$\mathcal{G}$ - the input graph with vertices $\mathcal{V}$ and edges $\mathcal{E}$}
\textcolor{orange}{\textbf{parallel}} \ForEach{$v_0 \rightarrow v$ : $\mathcal{E}$}{
     $\mathrm{T_0}$ = the thread executing this loop iteration\Comment*[r]{$\mathrm{T_0}$ maintains $\Pi_{\mathrm{T_0}}$, $\mathrm{Blk}_{\mathrm{T_0}}$, $\mathrm{Blist}_{\mathrm{T_0}}$, and $\mathrm{Mutex}_{\mathrm{T_0}}$}
    $\Pi_{\mathrm{T_0}} = v_0$;\hspace*{2mm} $\mathrm{Blk}_{\mathrm{T_0}} = \varnothing$\;
    \lForEach{$u$ : $\mathcal{V}$} {$\mathrm{Blist}_{\mathrm{T_0}}[u] = \varnothing$}
    \textcolor{orange}{\textbf{spawn}} FGJ\_task($v$, $v_0$, $1$, $\mathrm{T_0}$)\Comment*[r]{Create a task}
}
\textcolor{orange}{\textbf{wait}} for all spawned tasks\;

\caption{FGJ $\left(\mathcal{G}(\mathcal{V}, \mathcal{E})\right)$}
\label{algo:fgj}
\end{algorithm}

Without countermeasures, our algorithm can suffer from race conditions because its data structures can be accessed concurrently by different threads.
For instance, a stealing thread $T_2$ can copy the data structures of a victim thread $T_1$ while $T_1$ performs a recursive unblocking, in which case $T_2$ could receive the vertex set $\mathit{Blk}_{T_1}$ that is partially unblocked.
When using copy-on-steal with recursive unblocking, $T_2$ may not be able to continue the interrupted unblocking of $\mathit{Blk}_{T_1}$, causing the algorithm to miss certain cycles.
To avoid this problem, we define critical sections in lines~\ref{algline:fgj_task_lock}--\ref{algline:fgj_task_unlock} of Algorithm~\ref{algo:fgj_task} and in lines~\ref{algline:fgj_cos_lock}--\ref{algline:fgj_cos_unlock} of Algorithm~\ref{algo:fgj_cos} using coarse-grained locking by maintaining a mutex per thread. However, such a locking mechanism is not required when using copy-on-steal with complete unblocking because $T_2$ can correctly unblock vertices in $\mathit{Blk}_{T_1}$ simply by removing all vertices from $\mathit{Blk}_{T_1}$ inserted after the stolen task was created.
Thus, it is sufficient to enable thread-safe operations on $\Pi$, $\mathit{Blk}$, and $\mathit{Blist}$ using fine-grained locking.
As a result, the critical sections are shorter when the copy-on-steal with complete unblocking approach is used.

Nevertheless, we opt to use the copy-on-steal with recursive unblocking approach in our fine-grained parallel Johnson algorithm because this approach leads to less redundant work and rarely suffers from synchronisation overheads.
The pseudocode of our fine-grained parallel Johnson algorithm is given in Algorithm~\ref{algo:fgj}.

\subsection{Theoretical analysis}

We now show that the fine-grained parallel Johnson algorithm is scalable but not work-efficient.

\begin{theorem}
The fine-grained parallel Johnson algorithm is not work-efficient.
\label{theorem:fgj_workEfficiency}
\vspace{-.07in}
\end{theorem}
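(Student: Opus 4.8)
The plan is to negate Definition~\ref{def:workEfficiency}: I want to exhibit, for arbitrarily large thread counts $p$, a family of input graphs on which $W_p(n)/T_1(n)\to\infty$. Guided by the bound $W_p(n)\in O(n+e+\min\{pce,se\})$ recorded in Table~\ref{tab:workDepth}, the strategy is to arrange that the $\min\{pce,se\}$ term asymptotically dominates $T_1(n)\in O(n+e+ec)$, which happens precisely when $p\to\infty$ while $pc\le s$. So I would look for a graph with only polynomially many cycles --- it suffices to take $c$ constant, with at least one cycle so that the instance is not the degenerate cycle-free case --- but with exponentially many maximal simple paths. The graph of \fig{fig:fgj_example}a (or a chain of $m$ ``diamonds'' with one short cycle appended off $v_0$) is such an example: $n$, $e$ and $c$ are polynomial, so $T_1(n)\in O(n+e+ec)$ is polynomial, while $s=2^{\Theta(n)}$.

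The heart of the proof is to show that an adversarial work-stealing schedule with enough threads actually forces the fine-grained parallel Johnson algorithm to spend $\Omega(\min\{pce,se\})$ work on such a graph. The key structural fact is that FGJ\_copyOnSteal (Algorithm~\ref{algo:fgj_cos}) recursively unblocks every vertex of $\Pi_{T_1}\setminus P$, where $P$ is the largest common prefix of the victim's and the stealing thread's paths; hence a stolen task inherits none of the blocking information the victim had accumulated while exploring the sibling subtrees hanging below $P$ --- that is, exactly the cross-subtree pruning that makes the sequential Johnson algorithm fast (cf.\ the discussion around \fig{fig:background}). I would therefore let an adversary steal a sibling task at each branching vertex of the recursion tree the moment it is spawned, before the victim has blocked anything useful. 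Each stealing thread then re-explores the corresponding subtree with only ancestor-path blocking active, which on the diamond gadget prunes nothing, so its exploration degenerates into the brute-force Tiernan search over that subtree --- exactly as already noted for the naive ``fresh copy per recursive call'' scheme in Section~\ref{sect:tpJohnson}. Summing over branching vertices, $\Theta(p)$ threads each redo $\Omega(ce)$ redundant work (capped by the Tiernan bound $O(se)$), so $W_p(n)\in\Omega(\min\{pce,se\})$, which for the constant-$c$ instance is $\Omega(pe)$.

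Putting the two steps together, for every $p\le s$ the algorithm can be forced to perform $\Omega(pe)$ work while $T_1(n)\in O(n+e)$, so $W_p(n)/T_1(n)\in\Omega(p)$, which is unbounded as we take larger instances with larger $p\le s$; hence $W_p(n)\notin O(T_1(n))$ and, by Definition~\ref{def:workEfficiency}, the fine-grained parallel Johnson algorithm is not work-efficient. The step I expect to be hardest is the lower bound of the middle paragraph: I must pin down a concrete graph on which copy-on-steal with recursive unblocking can actually be defeated --- note that, unlike complete unblocking, it deliberately keeps ``globally dead'' vertices blocked, so the re-explored region has to be pruned only because of the victim's specific suffix $\Pi_{T_1}\setminus P$ and must nonetheless yield no new cycles for the stealing threads (otherwise $c$, and hence $T_1$, would blow up as well) --- and I must check that the adversarial schedule is a legal work-stealing execution that genuinely forces $\Theta(p)$ independent re-explorations rather than merely allowing them.
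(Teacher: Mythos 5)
Your proposal is correct and follows essentially the same route as the paper: the key fact in both is that the $p$ threads maintain disjoint copies of $\mathit{Blk}$ and $\mathit{Blist}$ and are unaware of each other's blocked vertices, so each edge can be visited up to $\min\{pc,s\}$ times, giving $W_p(n)$ up to $\Theta(\min\{pce,se\})$ against $T_1(n)\in O(n+e+ec)$ --- witnessed by the very diamond-chain graph of \fig{fig:fgj_example}a with small $c$ and exponential $s$. The only difference is one of rigor: the paper's proof derives the upper bound $W_p(n)\in O(n+e+\min\{pce,se\})$ and then simply asserts that the work exceeds $T_1(n)$ whenever $c>0$, $p>1$, and $s>c$, deferring the concrete witness to the informal discussion that follows, whereas you correctly insist on the logically necessary lower-bound direction via an adversarial work-stealing schedule forcing $\Omega(pe)$ work --- precisely the step you flag as hardest and the one the paper leaves implicit.
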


\begin{proof}
According to Lemma~3 presented by Johnson~\cite{johnson_finding_1975}, a vertex cannot be unblocked more than once unless a cycle is found, and once a vertex is visited, it can be visited again only after being unblocked.
Thus, the Johnson algorithm visits each vertex and edge at most $c$ times.
In the fine-grained parallel Johnson algorithm executed using $p$ threads, each thread maintains a separate set of data structures used for managing blocked vertices.
Because the threads are unaware of each other's blocked vertices, each edge is visited at most $p c$ times, $c$ times by each thread.
Additionally, an edge cannot be visited more than $s$ times because each maximal simple path of a graph is explored by a different thread in the worst case, and during each simple path exploration, an edge is visited at most once. 
Therefore, the maximum number of times an edge can be visited by the fine-grained parallel Johnson algorithm is $\min\left\{pc, s\right\}$.
Because the algorithm executes in $O(n+e)$ time if there does not exist a cycle or a path in the input graph, the work performed by the fine-grained parallel Johnson algorithm is
\vspace{-.05in}
 \begin{equation}
 W_p(n) \in  O\left(n+e+\min\{pce, se\}\right).
 \end{equation} 
When $c > 0$, $p>1$, and $s>c$, the work performed by the fine-grained parallel Johnson algorithm $W_p(n)$ is greater than the execution time $T_1(n)$ of the sequential Johnson algorithm. 
Thus, this algorithm is not work-efficient.
\vspace{-.05in}
\end{proof}

The work inefficiency of our fine-grained parallel Johnson algorithm occurs if more than one thread performs the work the sequential Johnson algorithm would perform between the discovery of two cycles.
This behaviour can be illustrated using the graph from \fig{fig:fgj_example}a, which contains $c = 4$ cycles and $s = c \times 2^{m-1}+3$ maximal simple paths, each starting from vertex $v_0$.
When discovering each cycle, our fine-grained algorithm explores an infeasible region of the recursion tree, as shown in \fig{fig:fgj_example}b, in which the vertices $b_1,\ldots,b_m$ are visited.
If this infeasible region is explored using a single thread, each vertex $b_i$, with $i\in \{1,\ldots,m\}$, will be visited exactly once.
However, if $p$ threads are exploring the same infeasible region of the recursion tree, vertices $b_1,\ldots,b_m$ will be visited up to $p$ times because the threads are unaware of each other's blocked vertices.
In this case, the fine-grained parallel Johnson algorithm performs more work than necessary, and, thus, it is not work-efficient.
Additionally, each infeasible region of the recursion tree that visits vertices $b_1,\ldots,b_k$ can be executed by at most $s/c = 2^{m-1}$ threads because there are $2^{m-1}$ maximal simple paths that can be explored in each infeasible region.
In this case, each vertex $b_i$, with $i\in \{1,\ldots,m\}$, is visited up to $s$ times, and, thus, the fine-grained parallel Johnson algorithm behaves as the Tiernan algorithm (see Section~\ref{section:back_tiernan}).

\begin{lemma}
The depth $T_{\infty}(n)$ of the fine-grained parallel Johnson algorithm is in $O(e)$.
\vspace{-.07in}
\label{lemma:fgJ_tinf}
\end{lemma}

\begin{proof}
The worst-case depth of this algorithm occurs when a thread performs copy-on-steal and explores a maximal simple path.
A thread explores such a path in $O(e)$ time because it visits at most $e$ edges.
As a result, $\Pi$ and $\mathit{Blk}$ contain at most $e$ vertices, and $\mathit{Blist}$ contains at most $e$ pairs of vertices.
Therefore, copy-on-steal requires $O(e)$ time to copy $\Pi$, $\mathit{Blk}$, and $\mathit{Blist}$, and to unblock vertices in $\mathit{Blk}$.
As a result, the depth of this algorithm is $T_{\infty}(n) \in O(e)$.
\vspace{-.05in}
\end{proof}

\begin{theorem}
\label{theorem:fgJohnScal}
The fine-grained parallel Johnson algorithm is scalable when $\lim\limits_{n \to \infty} c = \infty$.
\vspace{-.08in}
\end{theorem}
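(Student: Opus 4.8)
The plan is a standard work--depth argument built on two facts already in hand: the work bound $W_p(n)\in O(n+e+\min\{pce,se\})$ derived in the proof of Theorem~\ref{theorem:fgj_workEfficiency}, and the depth bound $T_\infty(n)\in O(e)$ established by the preceding lemma (also listed in Table~\ref{tab:workDepth}). Feeding these into the guarantee of the work--stealing scheduler of Section~\ref{sect:tlp}, which runs a computation of work $W_p(n)$ and depth $T_\infty(n)$ on $p$ threads in time $T_p(n)\in O\!\left(W_p(n)/p+T_\infty(n)\right)$, gives
\begin{equation}
T_p(n)\in O\!\left(\frac{n+e+\min\{pce,se\}}{p}+e\right).
\end{equation}
The crucial point is that relaxing the depth-first order and using copy-on-steal does not lengthen the critical path: the depth stays $O(e)$ no matter how large $p$ is, even though the total work $W_p(n)$ grows with $p$.

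Next I would take the inner limit of Definition~\ref{def:scalability}: fix $n$ (hence $e$, $s$, and $c$) and let $p\to\infty$. The term $(n+e)/p$ vanishes, and once $p>s/c$ we have $\min\{pce,se\}=se$, so $\min\{pce,se\}/p=se/p\to 0$ as well. Hence $\lim_{p\to\infty}T_p(n)\in O(e)$. Since the serial Johnson algorithm runs in worst-case time $T_1(n)=\Theta(n+e+ec)$,
\begin{equation}
\lim_{p\to\infty}\frac{T_p(n)}{T_1(n)}\in O\!\left(\frac{e}{n+e+ec}\right).
\end{equation}
Finally I would take the outer limit $n\to\infty$ using the hypothesis $\lim_{n\to\infty}c=\infty$. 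For $n$ large enough the graph contains a cycle, so $e\ge 1$ and $\frac{e}{n+e+ec}\le\frac{e}{ec}=\frac{1}{c}$, which tends to $0$. Therefore $\lim_{n\to\infty}\!\left(\lim_{p\to\infty}T_p(n)/T_1(n)\right)=0$, and the algorithm is scalable by Definition~\ref{def:scalability}.

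I expect the main obstacle to be the depth bound rather than the limit manipulation. Concretely, I must argue that the depth is $O(e)$ \emph{uniformly in $p$}: the longest chain of dependent operations in the DAG actually realised by copy-on-steal with recursive unblocking --- including the per-task iteration over $\mathcal{N}(v)$, the mutex-guarded critical sections of Algorithms~\ref{algo:fgj_task} and~\ref{algo:fgj_cos}, and the recursive unblocking triggered along a single root-to-leaf branch --- has to telescope to $O(e)$. A secondary subtlety is that the order of the limits is doing real work: the redundant-exploration term $\min\{pce,se\}/p$ is negligible only because $p\to\infty$ is taken before $n\to\infty$; for any fixed $p$ it contributes $\Theta(ec)$, which is precisely why this algorithm is scalable but (by Theorem~\ref{theorem:fgj_workEfficiency}) not work-efficient. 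Lastly, the step $T_1(n)=\Theta(n+e+ec)$ relies on the tightness of Johnson's worst-case bound, so that the $ec$ term dominates $e$ once $c\to\infty$.
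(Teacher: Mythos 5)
Your proposal is correct and takes essentially the same route as the paper's proof: both reduce the nested limit of Definition~\ref{def:scalability} to $\lim_{n\to\infty} T_{\infty}(n)/T_1(n)$ with $T_{\infty}(n)\in O(e)$ (Lemma~\ref{lemma:fgJ_tinf}) and $T_1(n)\in O(n+e+ec)$, which vanishes under the hypothesis $c\to\infty$. Your explicit pass through the work-stealing scheduler bound and the observation that $\min\{pce,se\}/p\to 0$ as $p\to\infty$ merely spell out what the paper leaves implicit by using the depth bound directly.
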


\begin{proof}
For this algorithm, $T_1(n) \in O(n+e+ec)$ and $T_{\infty}(n) \in O(e)$ (see Lemma~\ref{lemma:fgJ_tinf}).
Given $e < n^2$ and our assumption that $\lim\limits_{n \to \infty} c = \infty$, we have $\lim\limits_{n\to\infty}\dfrac{T_{\infty}(n)}{T_1(n)}  =  \lim\limits_{n\to\infty} \dfrac{e}{n+e+ec} = 0$.
Thus, this algorithm is scalable based on Definition~\ref{def:scalability}.
\vspace{-.07in}
\end{proof}

For the fine-grained parallel Johnson algorithm to be scalable, it is sufficient for $c$ to increase sublinearly with $n$.
Even though this algorithm is scalable, a strong or weak scalability is not guaranteed due to the work inefficiency of this algorithm.
Nevertheless, our experiments show that this algorithm is strongly scalable in practice (see \fig{fig:scalfig_sc}).

\subsection{Summary}
Our relaxation of the strictly depth-first-search-based recursion-tree exploration reduces the pruning efficiency of the Johnson algorithm. 
In the worst case, the fine-grained parallel Johnson algorithm could perform as much work as the brute-force Tiernan algorithm does---i.e., $O(se)$. However, in practice, this worst-case scenario does not happen (see Section~\ref{sect:experiments}).
In addition, our fine-grained parallel Johnson algorithm can suffer from synchronisation issues in some rare cases (see Section~\ref{sect:experiments}) because our copy-on-steal mechanism can lead to long critical sections. In the next section, we introduce a fine-grained parallel algorithm that is scalable, work-efficient, and less prone to synchronisation issues.

\section{Fine-grained parallel Read-Tarjan}
\label{sect:tpReadTarjan}

In this section, we first introduce several optimisations that reduce the number of unnecessary vertex visits performed by the sequential Read-Tarjan algorithm.
Then, we present our fine-grained parallel Read-Tarjan algorithm that includes these optimisations.
Finally, we show that our parallel algorithm is work-efficient and strongly-scalable.

\subsection{Improvements to the pruning efficiency}
\label{sect:impRT}

To improve the pruning efficiency of the sequential Read-Tarjan algorithm, we include the following optimisations:

\begin{figure}[t]
	\centerline{
		\includegraphics[width=0.75\linewidth]{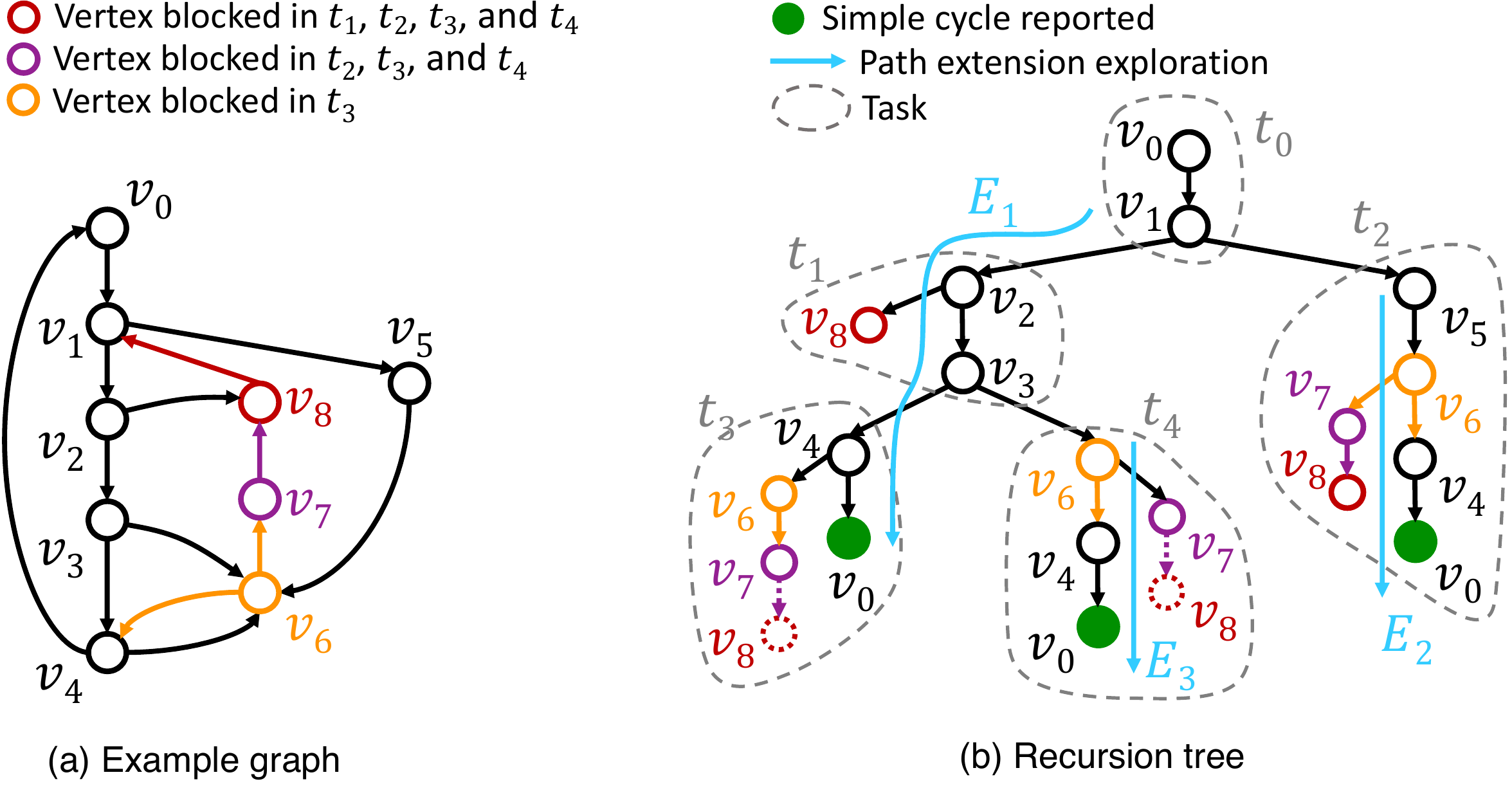}
	}
	\vspace{-.15in}
	\caption{
    (a) An example graph and (b) the recursion tree of our fine-grained parallel Read-Tarjan algorithm when enumerating cycles that start from $v_0$.
    The nodes of the recursion tree represent the recursive calls of the depth-first search.
        Tasks shown in (b) can be executed independently of each other.
	}
	\label{fig:fgrt_example}
	\vspace{-.1in}
\end{figure}

\textbf{(i) Blocked vertex set forwarding} enables a recursive call of the Read-Tarjan algorithm to reuse vertices blocked by its parent call, resulting in fewer vertex visits.
The original Read-Tarjan algorithm discards blocked vertices after each recursive call~\cite{read_bounds_1975}, even though this information could be reused later.
In this optimisation, the algorithm forwards the blocked vertices $\mathit{Blk}$ of a recursive call to its child recursive calls, preventing those child calls from unnecessarily visiting the vertices in $\mathit{Blk}$ again.
For example, in \fig{fig:fgrt_example}, the vertex $v_8$ is blocked the first time the algorithm visits $v_8$ while exploring the path extension $E_1$. 
This optimisation prevents the algorithm from visiting $v_8$ again when exploring the same extension $E_1$ or another extension $E_3$ that branches from $E_1$.
As a result of this optimisation, the algorithm can avoid the dotted part of the recursion tree.

\textbf{(ii) Path extension forwarding} prevents recomputation of the path extension $E$ found by a parent recursive call by forwarding this path extension to its child recursive call.
In this way, each child recursive call performs one fewer DFS invocation than the original Read-Tarjan algorithm~\cite{read_bounds_1975}.

\begin{algorithm}[t]
\SetAlgoLined
\SetKwProg{Fn}{Function}{}{}
\SetKwInOut{InOut}{InOut}
\SetKwComment{Comment}{$\triangleright$\ }{}
    \KwIn{$\mathrm{u}$ - the current vertex, $\mathrm{v_0}$ - the starting vertex}
    \InOut{$\mathrm{Blk}$ - blocked vertices\DontPrintSemicolon\;
    \hspace*{10mm}$\mathrm{Vis}$ - vertices visited during the DFS}
    \KwOut{$\mathrm{E}$ - the resulting path extension from $\mathrm{u}$ to $\mathrm{v_0}$}
    
    \lIf{$\mathrm{u} = \mathrm{v_0}$}{\KwRet{$\mathrm{u}$}}
    
    $\mathrm{Vis} = \mathrm{Vis} \cup \{\mathrm{u}\}$\;
    $\mathrm{block} = \mathit{true}$\;
    
    \ForEach{$\mathrm{w}$ : $\mathcal{N}\mathrm{(u)}$  $\mathbf{s.t.}$ $\mathrm{w}.\mathrm{id} > \mathrm{v_0}.\mathrm{id}$}{
 		\If{$\mathrm{w} = \mathrm{v_0}$}{\KwRet{$\mathrm{u} \rightarrow \mathrm{w}$}}
 		\ElseIf{$\mathrm{w} \notin \mathrm{Blk} \land \mathrm{w} \notin \mathrm{Vis}$}{
 			$\mathrm{E}$ = FGRT\_DFS($\mathrm{w}$, $\mathrm{v_0}$ $\mathrm{Blk}$, $\mathrm{Vis}$)\Comment*[r]{Recursively search for the path extension $E$}
 			\If{$\mathrm{E} \neq \varnothing$} {\KwRet{$\mathrm{E}.\mathrm{push\_front}(\mathrm{u})$}\;}
 		}
 		\If{$\mathrm{w} \notin \mathrm{Blk}$}{ $\mathrm{block} = \mathit{false}$}
    }
    \lIf(\Comment*[f]{Blocking on a successful DFS}){$\mathrm{block}$}{$\mathrm{Blk} = \mathrm{Blk} \cup \{\mathrm{u}\}$} \label{algline:fgrt_dfs_blocking}
    \KwRet{$\varnothing$}\;

\caption{FGRT\_DFS($\mathrm{u}$, $\mathrm{v_0}$, $\mathrm{Blk}$, $\mathrm{Vis}$)}
\label{algo:fgrt_dfs}
\end{algorithm}

\textbf{(iii) Blocking on a successful DFS} is another mechanism for discovering vertices to be blocked.
As a reminder, the Read-Tarjan algorithm searches for path extensions using a DFS.
In the original algorithm, a vertex is blocked only if it is visited during an unsuccessful DFS invocation, which fails to discover a path extension.
However, successful DFS invocations could also visit some vertices that have all their neighbours blocked.
Such vertices cannot lead to the discovery of new cycles and, thus, can also be blocked.
The pseudocode of the DFS function that includes this optimisation is given in Algorithm~\ref{algo:fgrt_dfs}.
In our example given in \fig{fig:fgrt_example}, a successful DFS invoked from $v_3$ finds a path extension $E_3$ and discovers that the only neighbour $v_8$ of $v_7$ is blocked.
The algorithm then blocks $v_7$, which enables it to avoid visiting $v_7$ again when exploring $E_3$. 
Therefore, fewer vertices are visited during the execution of the algorithm.

\subsection{Fine-grained parallelisation}
\label{sect:fgrt_algo}

Although the optimisations presented in Section~\ref{sect:impRT} eliminate some of the redundant work performed by the Read-Tarjan algorithm, this algorithm typically performs more work than the Johnson algorithm (see Section~\ref{sect:read_tarjan}). 
However, this redundancy makes it possible to parallelise the Read-Tarjan algorithm in a scalable and work-efficient manner.

Because the Read-Tarjan algorithm allocates a new $\mathit{Blk}$ set for each path extension exploration, a recursive call can explore different path extensions in an arbitrary order.
In addition, discovery of a new path extension $E$ results in the invocation of a single recursive call, and these calls can be executed in an arbitrary order.
As a result, several threads can concurrently explore different paths of the same recursion tree constructed by the Read-Tarjan algorithm for a given starting edge. 
There are neither data dependencies nor ordering requirements between different calls, apart from those that exist between a parent and a child.
To exploit the parallelism available during the recursion tree exploration, we execute each path extension exploration in each recursive call as a separate task, all of which can be independently executed.
Examples of such tasks are shown in \fig{fig:fgrt_example}.
We refer to the resulting algorithm as the \textit{fine-grained parallel Read-Tarjan} algorithm.

\begin{algorithm}[t]
\SetAlgoLined
\SetKwProg{Fn}{Function}{}{}
\SetKwProg{Proc}{Procedure}{}{}
\SetKwInOut{InOut}{InOut}
\SetKwComment{Comment}{$\triangleright$\ }{}

    \KwIn{$\mathrm{v}$ - the current vertex, $\mathrm{v_0}$ - the starting vertex\DontPrintSemicolon\;
    \hspace*{10mm}$\mathrm{E}$ - the path extension from $\mathrm{v}$ to $\mathrm{v_0}$\DontPrintSemicolon\;
    \hspace*{10mm}$\mathrm{d}$ - the depth of this task}
    \InOut{$\mathrm{T_1}$ - the thread that created this task\Comment*[f]{$\mathrm{T_1}$ maintains $\Pi_{\mathrm{T_1}}$ and $\mathrm{Blk}_{\mathrm{T_1}}$}}
    \BlankLine
    $\mathrm{T_2}$ = the thread executing this task\Comment*[r]{$\mathrm{T_2}$ maintains $\Pi_{\mathrm{T_2}}$ and $\mathrm{Blk}_{\mathrm{T_2}}$} \label{algline:fgrt_task_cos1}
    \If(\Comment*[f]{Check if this task is stolen}){$\mathrm{T_1} \neq \mathrm{T_2}$}{ 
        $\{\Pi_{\mathrm{T_2}}, \mathrm{Blk}_{\mathrm{T_2}} \} = \mathrm{copy}\left(\{\Pi_{\mathrm{T_1}}, \mathrm{Blk}_{\mathrm{T_1}}\}\right)$\Comment*[r]{Operations on $\Pi$ and $\mathrm{Blk}$ are thread-safe}} \label{algline:fgrt_task_cos2}
    \lWhile{$\Pi_{\mathrm{T_2}}.\mathrm{back}() \neq \mathrm{v}$}{$\Pi_{\mathrm{T_2}}.\mathrm{pop}()$} \label{algline:fgrt_task_rem1}
    Remove vertices from $\mathrm{Blk}_{\mathrm{T_2}}$ inserted at depth $\mathrm{d}^{\prime} \geq \mathrm{d}$;\label{algline:fgrt_task_rem2}
    
     $\mathrm{found} = \mathit{false}$\;
     \While(\Comment*[f]{Exploration of the path extension $\mathrm{E}$}){$\mathrm{E} \neq \varnothing$}{
         $\mathrm{v} = \mathrm{E}.\mathrm{pop\_front}$()\;
         $\Pi_{\mathrm{T_2}} = \Pi_{\mathrm{T_2}}.\mathrm{push}(\mathrm{v})$; $\mathrm{Blk}_{\mathrm{T_2}} = \mathrm{Blk}_{\mathrm{T_2}} \cup \{\mathrm{v}\}$\;
         \ForEach{$\mathrm{u}$ : $\mathcal{N}\mathrm{(v)}$ $\mathbf{s.t.}$ $\mathrm{u}.\mathrm{id} > \mathrm{v_0}.\mathrm{id}$} {
             \If{$\mathrm{u} \neq \mathrm{E}.\mathrm{front}()\land  \mathrm{u} \notin \mathrm{Blk}_{\mathrm{T_2}}$}{
                 $\mathrm{E}^{\prime}$ = FGRT\_DFS($\mathrm{u}$, $\mathrm{v_0}$, $\mathrm{Blk}_{\mathrm{T_2}}$, $\mathrm{Vis} = \varnothing$)\Comment*[r]{Find an alternate path extension $\mathrm{E}^{\prime}$}
                 \If{$\mathrm{E}^{\prime} \neq \varnothing$}{\textcolor{orange}{\textbf{spawn}} FGRT\_task($\mathrm{v}$, $\mathrm{v_0}$, $\mathrm{E}^{\prime}$, $\mathrm{d+1}$, $\mathrm{T_2}$)\Comment*[r]{Create a child task}  \label{algline:fgrt_task_spawn}
                    $\mathrm{found} = \mathit{true}$\;
                 }
                 \lElse{$\mathrm{Blk}_{\mathrm{T_2}} = \mathrm{Blk}_{\mathrm{T_2}} \cup \mathrm{Vis}$} \label{algline:fgrt_task_blk}
             }
         }
         
         \lIf{$\mathrm{found}$}{\textbf{break}}
     }
     
 	\lIf{$\mathrm{E} = \varnothing$}{   \label{algline:fgrt_goto} report cycle $\Pi_{\mathrm{T_2}}$}
 	\lElse(\Comment*[f]{Create a child task}){\textcolor{orange}{\textbf{spawn}} FGRT\_task($\mathrm{v}$, $\mathrm{v_0}$, $\mathrm{E}$, $\mathrm{d+1}$, $\mathrm{T_2}$)} \label{algline:fgrt_task_spawn2}

\caption{FGRT\_task($\mathrm{v}$, $\mathrm{v_0}$, $\mathrm{E}$, $\mathrm{d}$, $\mathrm{T_1}$)}
\label{algo:fgrt_task}
\end{algorithm}

\begin{algorithm}[t]
\SetAlgoLined
\SetKwProg{Fn}{Function}{}{}
\SetKwProg{Proc}{Procedure}{}{}
\SetKwProg{Task}{\textcolor{orange}{Task}}{}{}
\SetKwInOut{InOut}{InOut}
\SetKwComment{Comment}{$\triangleright$\ }{}

\KwIn{$\mathcal{G}$ - the input graph with vertices $\mathcal{V}$ and edges $\mathcal{E}$}
 \textcolor{orange}{\textbf{parallel}} \ForEach{$\mathrm{v_0} \rightarrow \mathrm{v}$ : $\mathcal{E}$}{
     $\mathrm{T_0}$ = the thread executing this loop iteration\Comment*[r]{$\mathrm{T_0}$ maintains $\Pi_{\mathrm{T_0}}$ and $\mathrm{Blk}_{\mathrm{T_0}}$}
     $\Pi_{\mathrm{T_0}} = \mathrm{v_0}$; $\mathrm{Blk}_{\mathrm{T_0}} = \varnothing$\Comment*[r]{Operations on $\Pi$ and $\mathrm{Blk}$ are thread-safe}
     $\mathrm{E}$ = FGRT\_DFS($\mathrm{v}$, $\mathrm{v_0}$, $\mathrm{Blk}_{\mathrm{T_0}}$, $\mathrm{Vis}  = \varnothing$)\;
     \lIf(\Comment*[f]{Create a task}){$\mathrm{E} \neq \varnothing$}{\textcolor{orange}{\textbf{spawn}} FGRT\_task($\mathrm{v}$, $\mathrm{v_0}$, $\mathrm{E}$, $1$, $\mathrm{T_0}$)}
}
\textcolor{orange}{\textbf{wait}} for all spawned tasks\;

\caption{FGRT $\left(\mathcal{G}(\mathcal{V}, \mathcal{E})\right)$}
\label{algo:fgrt}
\end{algorithm}

Our implementation shown in Algorithm~\ref{algo:fgrt_task} performs only a single path extension exploration in a recursive call and uses all the optimisations we introduced in Section~\ref{sect:impRT}.
We execute each such recursive call as a separate task using a dynamic thread scheduling framework (see Section~\ref{sect:tlp}).
To find all cycles of a graph, we execute a \textit{parallel for} loop iteration for each edge $v_0 \rightarrow v$ that uses Algorithm~\ref{algo:fgrt_dfs} to search for a path extension $E$ from $v$ to $v_0$, as shown in Algorithm~\ref{algo:fgrt}
If such $E$ exists, a task is created using $v$, $v_0$, and $E$ as its input parameters.
This task then recursively creates new tasks, as shown in lines~\ref{algline:fgrt_task_spawn} and~\ref{algline:fgrt_task_spawn2} of Algorithm~\ref{algo:fgrt_task}, until all cycles that start with the edge $v_0 \rightarrow v$ have been discovered.

To prevent different threads from concurrently modifying $\Pi$ and $\mathit{Blk}$, each task allocates and maintains its own $\Pi$ and $\mathit{Blk}$ sets. A task can receive a copy of $\Pi$ and $\mathit{Blk}$ directly from its parent task at the time of task creation.
However, it is possible to minimise the copy overheads by copying these sets only when a task is stolen.
For this purpose, we use the copy-on-steal with complete unblocking approach described in Section~\ref{sect:fgj_copyOnSteal}, which has shorter critical sections than the copy-on-steal with recursive unblocking approach used by our fine-grained parallel Johnson algorithm.

\subsection{Theoretical analysis}
\label{sect:theoryRT}
We now show that the fine-grained parallel Read-Tarjan algorithm is both work-efficient and 
 strongly scalable.

\begin{theorem}
The fine-grained parallel Read-Tarjan algorithm is work-efficient.
\vspace{-.03in}
\label{theorem:fgRT_workEfficiency}
\end{theorem}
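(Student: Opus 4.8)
The plan is to show that $W_p(n) \in O(T_1(n))$ by proving that the parallel algorithm performs asymptotically the same total work as the serial Read-Tarjan algorithm, namely $O(n + e + ec)$. The key observation is that, unlike the fine-grained Johnson algorithm, the Read-Tarjan algorithm already allocates a \emph{fresh} $\mathit{Blk}$ set for each path-extension exploration, so different recursive calls were \emph{never} sharing blocking information in the first place. Consequently, decomposing each recursive call into an independently schedulable task does not introduce any redundant vertex visits beyond what the serial algorithm already performs: the set of tasks (recursion-tree nodes) executed by the parallel algorithm is exactly the same as the set of recursive calls made by the serial algorithm. First I would make this correspondence precise: argue that Algorithm~\ref{algo:fgrt_task} spawns a child task exactly when the serial Read-Tarjan algorithm would make the corresponding recursive call (once per alternate path extension $E'$ found, plus one for the forwarded extension $E$), and that the optimisations of Section~\ref{sect:impRT} (blocked-vertex forwarding, path-extension forwarding, blocking on a successful DFS) only \emph{reduce} work relative to the original serial algorithm, hence do not affect the upper bound.

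Next I would account for the per-task cost and the copy-on-steal overhead. Each task performs a bounded number of DFS invocations from the vertices of its path extension, and each DFS runs in $O(e)$ time in the worst case; since the serial analysis already charges $O(e)$ per recursion-tree node and the number of nodes is $O(n + c)$ (or whatever bound yields the $O(n+e+ec)$ serial complexity), the total task-execution work is $O(n + e + ec)$. The one genuinely new source of work is copy-on-steal: when a task is stolen, lines~\ref{algline:fgrt_task_cos1}--\ref{algline:fgrt_task_rem2} copy $\Pi$ and $\mathit{Blk}$ (size $O(n)$ each) and truncate them, costing $O(n)$ per steal. I would bound the number of steals by the number of tasks, which is $O(n + c)$, so the cumulative copy-on-steal work is $O(n(n+c)) = O(n^2 + nc)$. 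The remaining step is to show this is absorbed into $O(n+e+ec)$ — this follows because each task already costs $\Omega(n)$ in the relevant regime (it pushes onto $\Pi$, runs a DFS, etc.), more carefully because $n^2 = O(ne) \subseteq O(ne \cdot \max\{1,c\})$ when the graph is connected and $c \ge 1$, and the $n$ and $e$ additive terms cover the cycle-free case.

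The main obstacle I anticipate is the copy-on-steal bookkeeping: I need to argue convincingly that the $O(n)$ copy cost per steal is dominated by the work the stolen task would do \emph{anyway}, rather than merely bounding steals by task count and hoping the arithmetic works out. The clean way to handle this is an amortisation/charging argument: charge the copy cost of a stolen task to that task's own subsequent execution, which includes at least one DFS of cost that, combined over the recursion tree, is already counted in the $O(ec)$ term. A secondary subtlety is making sure the ``blocking on a successful DFS'' optimisation (line~\ref{algline:fgrt_dfs_blocking} of Algorithm~\ref{algo:fgrt_dfs}) is still \emph{correct} — i.e. that it never blocks a vertex that could lead to a new cycle — but for work-efficiency I only need that it does not \emph{increase} work, which is immediate since it only adds vertices to $\mathit{Blk}$. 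I would close by assembling these bounds: $W_p(n) \in O(n + e + ec) = O(T_1(n))$, so by Definition~\ref{def:workEfficiency} the algorithm is work-efficient.
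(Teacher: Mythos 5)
Your overall strategy (bound the work per task, bound the number of tasks, add an $O(n+e)$ base case) matches the paper's, but the two load-bearing steps are missing or wrong. First, the task count: you hedge with ``$O(n+c)$ recursion-tree nodes, or whatever bound yields the serial complexity,'' but $O(n+c)$ nodes at $O(e)$ work each gives $O(ne+ce)$, which is \emph{not} $O(n+e+ec)$ in general (the $ne$ term survives whenever $c \in o(n)$). The paper's proof rests on a specific combinatorial observation you never make: every task either reports a cycle or spawns at least two children (one per alternate extension $E'$ found, plus the continuation task in line~\ref{algline:fgrt_task_spawn2} of Algorithm~\ref{algo:fgrt_task}), so the task tree has $c$ leaves and hence $O(c)$ nodes. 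Second, the per-task cost: you assert each task performs ``a bounded number of DFS invocations,'' but a single task may invoke FGRT\_DFS once per non-blocked neighbour of every vertex on its path extension, which is not $O(1)$. The paper's $O(e)$ per-task bound instead uses the fact that all \emph{unsuccessful} DFS invocations within a task share the same $\mathit{Blk}$ set (line~\ref{algline:fgrt_task_blk} folds $\mathrm{Vis}$ into $\mathit{Blk}$ after each failure), so collectively they traverse each edge $O(1)$ times, while each \emph{successful} DFS costs $O(e)$ but creates a child task to which that cost can be charged. Without both of these facts the bound $W_p(n) \in O(n+e+ec)$ does not follow.

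On the other hand, your explicit accounting of the copy-on-steal overhead is a point the paper's proof silently omits, and it is worth making: copying $\Pi$ and $\mathit{Blk}$ costs $O(n)$ per steal, the number of steals is at most the number of tasks, and with the correct $O(c)$ task bound this contributes $O(nc) \subseteq O(ec)$ for connected graphs, so it is absorbed. Your proposed charging argument (charge the copy to the stolen task's own DFS) is shakier than this direct count, since a stolen task may do very little work if its neighbours are all blocked; I would drop it in favour of the $O(c)$-steals bound. Your opening observation---that Read-Tarjan's per-extension allocation of $\mathit{Blk}$ means parallelisation introduces no new vertex visits, unlike for Johnson---is correct and is indeed the conceptual reason the theorem holds, but it substitutes for neither of the two quantitative steps above.
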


\begin{proof}
Because each task of our fine-grained parallel Read-Tarjan algorithm either discovers a cycle or creates at least two child tasks, our algorithm is executed using $O(c)$ tasks.
Each task performs several unsuccessful DFS invocations and one successful DFS per each child task it creates.
All unsuccessful DFS invocations explore at most $e$ edges in total because they share the same set of blocked vertices.
In the worst case, each edge is visited twice per task, once by a successful DFS and once by one of the unsuccessful DFS invocations.
Thus, this algorithm performs $O(e)$ work per task.
Because this algorithm performs $O\left(n+e\right)$ work if there are no cycles in the graph, the total amount of work this algorithm performs is $W_p(n) = O\left(n+e+ec\right)$.
Hence, this algorithm is work-efficient based on Definition~\ref{def:workEfficiency}.
\vspace{-.07in}
\end{proof}

The work-efficiency of our fine-grained parallel Read-Tarjan algorithm can be demonstrated using the example given in \fig{fig:fgj_example}a.
In this example, the threads of this algorithm independently explore four different path extensions $E_{i} = v_{1} \rightarrow u_{i} \rightarrow v_2 \rightarrow v_0$, with $i\in \{1\ldots4\}$.
A thread exploring a path extension $E_{i}$ invokes a DFS from $v_2$, which explores vertices $b_1,\ldots,b_m$ at most once and fails to find any other path extension.
Therefore, the amount of work the fine-grained parallel Read-Tarjan algorithm performs does not increase compared to its single-threaded execution.

\begin{lemma}
The depth $T_{\infty}(n)$ of the fine-grained parallel Read-Tarjan algorithm is in $O(ne)$.
\vspace{-.07in}
\label{lemma:fgRT_tinf}
\end{lemma}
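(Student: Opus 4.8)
The plan is to bound the depth $T_\infty(n)$ by analysing the longest chain of dependent tasks in the recursion tree of Algorithm~\ref{algo:fgrt_task}, and multiplying the length of that chain by the worst-case work of a single task (executed on a single thread). Since the only dependencies between tasks are those between a parent and its child (as argued in Section~\ref{sect:fgrt_algo}), the critical path of the computation corresponds exactly to a root-to-leaf path in the recursion tree, plus the work of the initial per-edge DFS in Algorithm~\ref{algo:fgrt}.

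First I would bound the depth of the recursion tree, i.e., the maximum task depth $\mathrm{d}$. Each task pushes at least one new vertex onto $\Pi_{T_2}$ (the first vertex popped from the nonempty path extension $E$), and a child task is spawned only with a strictly longer current path $\Pi$. Since $\Pi$ is a simple path, it has at most $n$ vertices, so the length of any parent-to-child chain of tasks is $O(n)$. Second, I would bound the work done inside one task: by the same argument used in the proof of Theorem~\ref{theorem:fgRT_workEfficiency}, the unsuccessful DFS invocations within a single task collectively visit at most $e$ edges (they share the same $\mathrm{Blk}$ set via line~\ref{algline:fgrt_task_blk}), the single successful DFS that produced $E$ costs $O(e)$, and the copy-on-steal bookkeeping in lines~\ref{algline:fgrt_task_cos1}--\ref{algline:fgrt_task_rem2} costs $O(n+e)$; hence one task runs in $O(e)$ time (absorbing $n \le e$ on connected components, or keeping an additive $O(n+e)$ term). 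Multiplying the $O(n)$ chain length by the $O(e)$ per-task cost, and adding the $O(n+e)$ for the initial DFS and empty-graph handling, gives $T_\infty(n) \in O(n e)$.

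The main obstacle I anticipate is making the "$O(e)$ work per task on the critical path" claim airtight when tasks are stolen: a stolen task must first execute FGRT\_copyOnSteal-style cleanup (lines~\ref{algline:fgrt_task_cos2}--\ref{algline:fgrt_task_rem2}), which touches $\Pi_{T_1}$, $\mathrm{Blk}_{T_1}$ of size $O(n)$ and $O(e)$ respectively; I need to confirm this does not blow up the per-task bound beyond $O(e)$ and, more subtly, that the critical path is still well-defined when the same logical recursion node may be re-executed after a steal. I would argue that a steal only replaces one task's execution by another of the same asymptotic cost and does not lengthen the parent-child dependency chain, so the depth bound is unaffected. A secondary point to check is that the $\mathbf{wait}$ in Algorithm~\ref{algo:fgrt} and the implicit wait for spawned tasks do not introduce dependencies longer than a single root-to-leaf path — which holds because a task only waits on its own descendants. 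With these points settled, $T_\infty(n) = O(ne)$ follows, matching the entry in Table~\ref{tab:workDepth}.
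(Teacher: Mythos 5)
Your proposal is correct and takes essentially the same approach as the paper: bound the parent-to-child task chain by $O(n)$ (each task extends the simple path $\Pi$ by at least one vertex, so the chain length is at most the length of the longest simple cycle) and bound the critical-path work per task by $O(e)$ (unsuccessful DFS invocations sharing $\mathit{Blk}$ plus a successful DFS). The paper's version is terser and does not spell out the copy-on-steal bookkeeping or the wait semantics that you verify explicitly, but the decomposition and the resulting $O(ne)$ bound are the same.
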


\begin{proof}
In the worst case, a thread executing this algorithm creates a task for each vertex of its longest simple cycle, which has a length of at most $n$.
Before invoking its first child task, a task executes a sequence of unsuccessful DFS invocations in $O(e)$ and a successful DFS invocation also in $O(e)$.
Thus, the depth of this algorithm is $O\left(ne\right)$.
\vspace{-.05in}
\end{proof}

The worst-case depth of our algorithm can be observed when this algorithm is executed on the graph given in \fig{fig:wc_example}a. 
This graph has $c = 2^{n-2}$ cycles and the length of its longest cycle $v_0 \rightarrow \ldots v_{n-1} \rightarrow v_0$ is $n$.
The algorithm creates a task for each vertex of the cycle and performs a successful DFS in each such call, which leads to $T_{\infty} \in O(ne)$.

\begin{theorem}
The fine-grained parallel Read-Tarjan algorithm is strongly scalable when $\lim\limits_{n \to \infty} c/n = \infty$.
\vspace{-.07in}
\label{theorem:fgRT_scalability}
\end{theorem}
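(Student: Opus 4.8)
The plan is to combine the work and depth bounds already established with a standard scheduler guarantee and then read off the speedup. First I would invoke Theorem~\ref{theorem:fgRT_workEfficiency}, which gives $W_p(n) \in O(n + e + ec)$, and Lemma~\ref{lemma:fgRT_tinf}, which gives $T_\infty(n) \in O(ne)$. A work-stealing scheduler (or, equivalently, any greedy list scheduler via Brent's theorem) executing a computation with work $W_p(n)$ and depth $T_\infty(n)$ on $p$ threads finishes in
\begin{equation}
T_p(n) \in O\!\left(\frac{W_p(n)}{p} + T_\infty(n)\right) \subseteq O\!\left(\frac{n + e + ec}{p} + ne\right).
\end{equation}
For the matching lower bound I would use the fact that $p$ threads cannot outrun the sequential algorithm by more than a factor $p$: since the algorithm is work-efficient, $W_p(n) = \Theta(T_1(n))$, hence $T_p(n) = \Omega(T_1(n)/p)$, and in particular $T_1(n)/T_p(n) = O(p)$.

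Next I would pin down $T_1(n)$ under the hypothesis. Because $\lim_{n\to\infty} c/n = \infty$, for large $n$ we have $c \ge n$, so the term $ec$ dominates both $n$ and $e$ in $T_1(n) \in O(n+e+ec)$; more precisely $T_1(n) = \Theta(ec)$ once $n$ is large enough. Dividing and absorbing the implied constants into a single $\alpha > 0$ gives
\begin{equation}
\frac{T_1(n)}{T_p(n)} \;\geq\; \frac{\alpha\, ec}{\dfrac{ec}{p} + ne} \;=\; \frac{\alpha\, p}{1 + \dfrac{pn}{c}}.
\end{equation}
Here is where the assumption does its work: for any fixed $p$, $c/n \to \infty$ forces $pn/c \to 0$, so for all sufficiently large $n$ the denominator is at most $2$, giving $T_1(n)/T_p(n) \ge (\alpha/2)\,p = \Omega(p)$. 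Combined with the $O(p)$ bound from the previous paragraph, this yields $T_1(n)/T_p(n) = \Theta(p)$ for large enough $n$, which is exactly Definition~\ref{def:strongScalability}.

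The step I expect to require the most care is showing that the depth term $T_\infty(n) = O(ne)$ is negligible relative to $W_p(n)/p = O(ec/p)$: after pulling out the fixed factor $p$ this comparison is precisely $n/c \to 0$, so the condition $\lim_{n\to\infty} c/n = \infty$ is not merely convenient but \emph{essential} — without it the depth term would dominate for some input families (as in \fig{fig:wc_example}a, where $c$ is exponential and the bound is safely met, versus sparse families where it is not) and strong scalability would fail. A secondary technicality is that the work-stealing time bound holds in expectation (or with high probability); if a deterministic statement is preferred I would instead invoke Brent's theorem for a greedy scheduler, and note that the per-task overhead introduced by copy-on-steal with complete unblocking is $O(e)$ and already accounted for in the $O(e)$ work-per-task estimate of Theorem~\ref{theorem:fgRT_workEfficiency}, so it does not affect the asymptotics.
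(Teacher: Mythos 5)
Your proposal is correct and follows essentially the same route as the paper: both apply Brent's rule with the work bound $O(n+e+ec)$ from Theorem~\ref{theorem:fgRT_workEfficiency} and the depth bound $O(ne)$ from Lemma~\ref{lemma:fgRT_tinf}, then use the hypothesis $\lim_{n\to\infty} c/n = \infty$ to make the depth term negligible against $T_1(n)/p$ and conclude $T_1(n)/T_p(n) = \Theta(p)$. Your added remarks on the scheduler model and on why the hypothesis is essential are sound elaborations, not a different argument (and note that both you and the paper implicitly use $T_1(n) = \Omega(ec)$ when bounding $T_\infty(n)/T_1(n)$).
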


\begin{proof}
Because the fine-grained parallel Read-Tarjan algorithm is work-efficient, we can apply Brent's rule~\cite{brent_parallel_1974}:
\begin{equation}
\dfrac{T_1(n)}{p} \leq T_p(n) \leq \dfrac{T_1(n)}{p} + T_{\infty}(n).
\end{equation}
Substituting $T_1(n)$ with $O(n+e+ec)$ and $T_{\infty}(n)$ with $O(ne)$ (see Lemma~\ref{lemma:fgRT_tinf}), for a positive constant $C_0$, it holds that
\begin{equation}
\slfrac{1}{\left(\dfrac{1}{p} + \mathit{C_0}\dfrac{n}{c}\right)} < \slfrac{1}{\left(\dfrac{1}{p} + \dfrac{T_{\infty}(n)}{T_1(n)}\right)} \leq \dfrac{T_1(n)}{T_{p}(n)} \leq p.
\end{equation}
Given that $\lim\limits_{n \to \infty} c/n = \infty$, there exist $n_0>0, C_1>0$ such that if $n > n_0$, then $c/n > C_1 p$.
Thus, for every $n > n_0$, it holds that $k p \leq \tfrac{T_1(n)}{T_{p}(n)} \leq p$, where $k = C_1/(C_0+C_1) < 1$.
As a result, $\tfrac{T_1(n)}{T_{p}(n)}  = \Theta(p)$, which, based on Definition~\ref{def:strongScalability}, completes the proof.
\vspace{-.05in}
\end{proof}

As shown in Table~\ref{tab:workDepth}, our fine-grained parallel Read-Tarjan algorithm has a higher depth than our fine-grained parallel Johnson algorithm, introduced in Section~\ref{sect:tpJohnson}.
Nevertheless, the former algorithm is strongly scalable when $c$ grows superlinearly with $n$, whereas strong scalability cannot be guaranteed for the latter algorithm.

\subsection{Summary}

The work of our fine-grained parallel Read-Tarjan algorithm does not increase after fine-grained parallelisation.
This parallel algorithm performs $W_p (n) \in O(n+e+ec)$ work: the same as the work performed by its serial version.
Our optimisations introduced in Section~\ref{sect:impRT} do not reduce the work $W_p (n)$ performed by our parallel algorithm in the worst case.
However, these optimisations significantly improve its performance in practice (see Section~\ref{sect:exp_fgrt_imp}).
In addition, the synchronisation overheads of the fine-grained parallel Read-Tarjan algorithm are not as significant as those of the fine-grained Johnson algorithm because of its shorter critical sections.
Furthermore, this algorithm is the only asymptotically-optimal parallel algorithm for cycle enumeration for which we are able to prove strong scalability.

\section{Parallelising constrained cycle search}
\label{sect:lcycle}

This section describes the methods for adapting our parallel algorithms to search for simple cycles under various constraints.
Because state-of-the-art algorithms for temporal and hop-constrained cycle enumeration are extensions of the Johnson algorithm~\cite{kumar_2scent_2018, peng_towards_2019}, our parallelisation approach described in Section~\ref{sect:tpJohnson} is also applicable to these algorithms.
In this section, we describe the changes to the fine-grained parallel Johnson algorithm needed for enumeration of temporal and hop-constrained cycles.
We also introduce modifications to the cycle enumeration algorithms required for finding time-window-constrained cycles.

\subsection{Cycles in a time window}
\label{sect:twCycles}

Cycle enumeration algorithms require minimal modifications to support time-window constraints.
Such constraints restrict the search for simple, temporal, and hop-constrained cycles to those that occur within a time window of a given size $\delta$, as illustrated in \fig{fig:time-window}.
To find time-window-constrained cycles that start with an edge that has timestamp $t_0$, only the edges with timestamps that belong to the time window $\left[t_0 : t_0 + \delta \right]$ are visited.
To avoid reporting the same cycle several times, another edge with the same timestamp $t_0$ is visited only if the source vertex of that edge has an ID that is smaller than the ID of the vertex from which the search for cycles was started.
Overall, imposing time-window constraints reduces the number of cycles discovered, which results in a more tractable problem.

Because temporal graphs may contain parallel edges, several simple cycles that contain the same sequence of vertices may exist in a temporal graph (see \fig{fig:time-window}a).
As a result, such cycles may be discovered simultaneously, which could accelerate the search for cycles in temporal graphs.
For that purpose, we use a method similar to Kumar and Calders~\cite{kumar_2scent_2018}, in which several simple cycles that contain the same sequence of vertices and whose edges belong to the same time window $\left[t_0 : t_0 + \delta \right]$ are grouped into a single cycle bundle and explored together.
Once a cycle bundle is discovered, the cycles that belong to that bundle can be extracted and reported.

A strongly-connected component (SCC) can be used to reduce the number of vertices visited during the search for time-window-constrained cycles.
The search for cycles that start with the edge $\varepsilon$ can be limited to use only the vertices from the SCC that contains $\varepsilon$~\cite{johnson_finding_1975}.
In the case of time-window-constrained cycles, we compute an SCC for $\varepsilon$ using only the edges with timestamps that belong to $\left[t_0 : t_0 + \delta \right]$, where $t_0$ is the timestamp of $\varepsilon$.
Because an SCC can be computed independently for each edge in $O(e)$ time~\cite{fleischer_identifying_2000}, our fine-grained parallel algorithms remain scalable.

\subsection{Temporal cycles}
\label{sect:temporalCycle}

To efficiently enumerate temporal cycles, the 2SCENT algorithm~\cite{kumar_2scent_2018} replaces the set of blocked vertices $\mathit{Blk}$ in the Johnson algorithm with \textit{closing times}.
The closing time $\mathit{ct}$ of a vertex $v$ indicates that the outgoing temporal edges of $v$ with a timestamp greater than or equal to $\mathit{ct}$ cannot participate in a temporal cycle and are therefore blocked.
Increasing the closing time of $v$ to a new value $\mathit{ct}^{\prime}$ unblocks the blocked outgoing edges of $v$ that have timestamps smaller than $\mathit{ct}^{\prime}$.
This operation triggers the recursive unblocking procedure that unblocks the incoming edges of $v$ with a timestamp smaller than the maximal timestamp among the unblocked outgoing edges of $v$.
This process is repeated for every vertex with unblocked outgoing edges.

\begin{algorithm}[t]
\SetAlgoLined
\SetKwProg{Fn}{Function}{}{}
\SetKwProg{Proc}{Procedure}{}{}
\SetKwProg{Task}{\textcolor{orange}{Task}}{}{}
\SetKwInOut{InOut}{InOut}
\SetKwComment{Comment}{$\triangleright$\ }{}
\KwIn{$\mathrm{d}$ - the depth of the task executing this function}
\InOut{$\mathrm{T_1}$ - the victim thread\DontPrintSemicolon\;
\hspace*{10mm}$\mathrm{T_2}$ - the stealing thread}
$\mathrm{Mutex}_{\mathrm{T_1}}.\mathrm{lock}()$\;
$\{\Pi_{\mathrm{T_2}}, \mathrm{Blk}_{\mathrm{T_2}} , \mathrm{PrevLocks}_{\mathrm{T_2}}\} = \mathrm{copy}\left(\{\Pi_{\mathrm{T_1}}, \mathrm{Blk}_{\mathrm{T_1}}, \mathrm{PrevLocks_{\mathrm{T_1}}}\}\right)$\Comment*[r]{$\mathrm{Blk}$ contains closing times or barriers}
$\{\mathrm{Blist}_{\mathrm{T_2}}\} = \mathrm{copy}\left(\{\mathrm{Blist}_{\mathrm{T_1}}\}\right)$\Comment*[r]{$\mathrm{Blist}$ is not used for hop-constrained cycles}
$\mathrm{Mutex}_{\mathrm{T_1}}.\mathrm{unlock}()$\;
\While(\Comment*[f]{Copy-on-steal with recursive unblocking}){$\left|\Pi_{\mathrm{T_2}}\right| > \mathrm{d}$}{
    $\mathrm{u} = \Pi_{\mathrm{T_2}}.\mathrm{pop}()$\;
    $\mathrm{lock} = \mathrm{PrevLocks}_{\mathrm{T_2}}.\mathrm{pop}()$\;
    RecursiveUnblock($\mathrm{u}$, $\mathrm{lock}$, $\mathrm{Blk}_{\mathrm{T_2}}$, $\mathrm{Blist}_{\mathrm{T_2}}$)\;
}
\caption{CFGJ\_copyOnSteal($\mathrm{d}$, $\mathrm{T_1}$, $\mathrm{T_2}$)}
\label{algo:fgj_cos_mod}
\end{algorithm}

Because the backtracking phase of 2SCENT is based on the Johnson algorithm, it can be parallelised using our fine-grained approach described in Section~\ref{sect:tpJohnson}.
For this purpose, we use our copy-on-steal mechanism with recursive unblocking, introduced in Section~\ref{sect:fgj_copyOnSteal}, which enables a thread to maintain its own set of data structures used for recursion tree pruning.
However, this mechanism is not directly applicable in this case because the recursive unblocking procedure of 2SCENT requires the new closing time for a vertex as a parameter in addition to the vertex itself.
For this reason, an additional data structure called $\mathit{PrevLocks}$ is used alongside the current path $\Pi$ that records the closing time that each vertex $v$ had before it was added to $\Pi$.
Copy-on-steal then performs the recursive unblocking procedure for each vertex $v$ removed from $\Pi$ using the original closing time of the vertex $v$ obtained from $\mathit{PrevLocks}$, as shown in Algorithm~\ref{algo:fgj_cos_mod}.
We refer to the resulting algorithm as the \textit{fine-grained parallel temporal Johnson} algorithm.

\begin{figure}[t]
	\centerline{
		\includegraphics[width=0.82\linewidth]{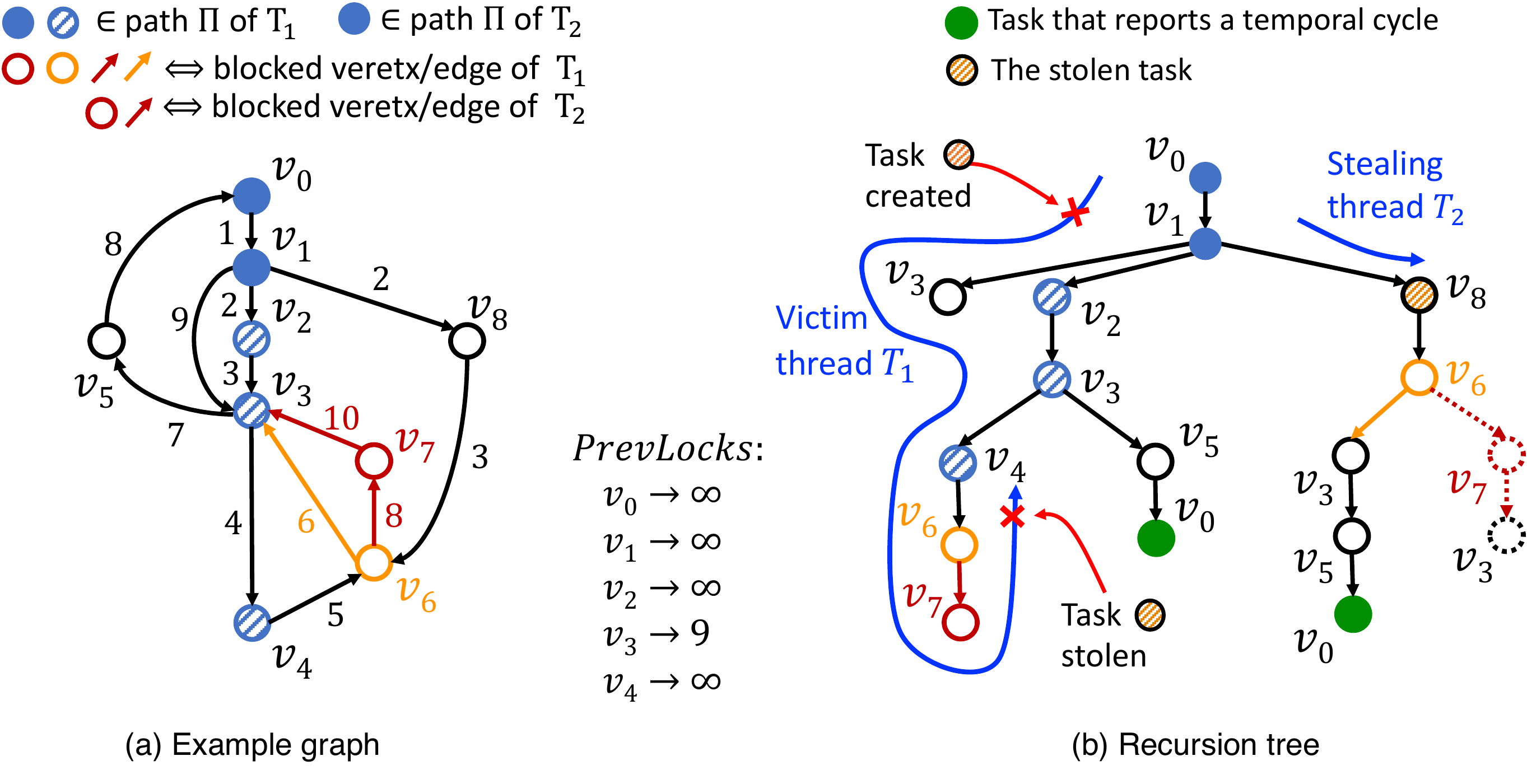}
	}
	\vspace{-.15in}
	\caption{(a) An example graph and (b) the recursion tree of our fine-grained temporal Johnson algorithm when enumerating temporal cycles that start from $v_0$. 
	The thread $T_2$ can avoid the dotted part of the tree by reusing the blocked edges $v_6 \rightarrow v_7$ and $v_7 \rightarrow v_3$ discovered by $T_1$.
	}
	\label{fig:temp_cos_example}
	\vspace{-.15in}
\end{figure}

The aforementioned modification to the copy-on-steal with recursive unblocking approach also enables a thread of our fine-grained parallel algorithm to reuse the edges blocked by another thread.
This behaviour can be observed in the example shown in \fig{fig:temp_cos_example}, where the thread $T_2$ steals the task indicated in \fig{fig:temp_cos_example}b from the thread $T_1$.
Copy-on-steal executed by $T_2$ invokes recursive unblocking that restores the closing time of $v_3$ to its original value of $9$ obtained from $\mathit{PrevLocks}$.
Note that this original closing time of $v_3$ was previously set by $T_1$ while exploring the path $v_0\rightarrow v_1 \rightarrow v_3$.
The recursive unblocking that $T_2$ invokes for $v_3$ unblocks only the edge $v_6 \rightarrow v_3$ because it is the only incoming edge of $v_3$ with a timestamp smaller than the closing time $9$ of the vertex $v_3$.
Without recording the previous closing times, $T_2$ could instead unblock all incoming edges of $v_3$ by invoking recursive unblocking for $v_3$ with a closing time $\infty$, which also unblocks the edges $v_6 \rightarrow v_7$ and $v_7 \rightarrow v_3$.
However, because there is no temporal cycle that contains these two edges and starts with $v_0$, $T_2$ would unnecessarily visit them in this case.
Thus, restoring the closing time of $v_3$ to its original value $9$ prevents $T_2$ from performing this redundant~work.

We also adapt the Read-Tarjan algorithm and its fine-grained and coarse-grained versions to enumerate temporal cycles using closing times. 
The necessary changes to the algorithm are trivial, and we omit discussing them for~brevity.

To reduce the number of vertices visited during the search for temporal cycles, we use a method similar to the SCC-based technique discussed in Section~\ref{sect:twCycles}.
Instead of computing an SCC for each edge $\varepsilon$, we compute a \textit{cycle-union} that represents an intersection of temporal ancestors and temporal descendants of $\varepsilon$.
The temporal descendants and the temporal ancestors of $\varepsilon$ are the vertices that belong to the temporal paths in which $\varepsilon$ is the first edge and the last edge, respectively.
Defined as such, a cycle-union contains only the vertices that participate in temporal cycles that have $\varepsilon$ as their starting edge.
Thus, the search for temporal cycles that start with $\varepsilon$ can be limited to only those vertices.

\subsection{Hop-constrained cycles}
\label{sect:lcCycles}

An efficient algorithm for enumerating hop-constrained cycles and paths, called BC-DFS~\cite{peng_towards_2019}, replaces the set of blocked vertices $\mathit{Blk}$ in the Johnson algorithm with \emph{barriers}.
A barrier value $\mathit{bar}$ of a vertex $v$ indicates that the starting vertex $v_0$ of a cycle cannot be reached within $\mathit{bar}$ hops from $v$.
As a result, $v$ is blocked if the length of the current path $\Pi$ when the algorithm attempts to visit $v$ is greater than or equal to $L - \mathit{bar}$, where $L$ is the hop constraint.
BC-DFS modifies the recursive unblocking of the Johnson algorithm to reduce the barrier $\mathit{bar}$ of $v$ to a specified value $\mathit{bar}^{\prime} < \mathit{bar}$.
This procedure also sets the barrier of any vertex $u$ that can reach $v$ in $k$ hops to $\mathit{bar}^{\prime}+k$ if the previous barrier of $u$ was greater than $\mathit{bar}^{\prime}+k$.
Maintaining barriers in such a way minimises redundant vertex visits when searching for hop-constrained cycles.

\begin{figure}[t]
	\centerline{
		\includegraphics[width=0.78\linewidth]{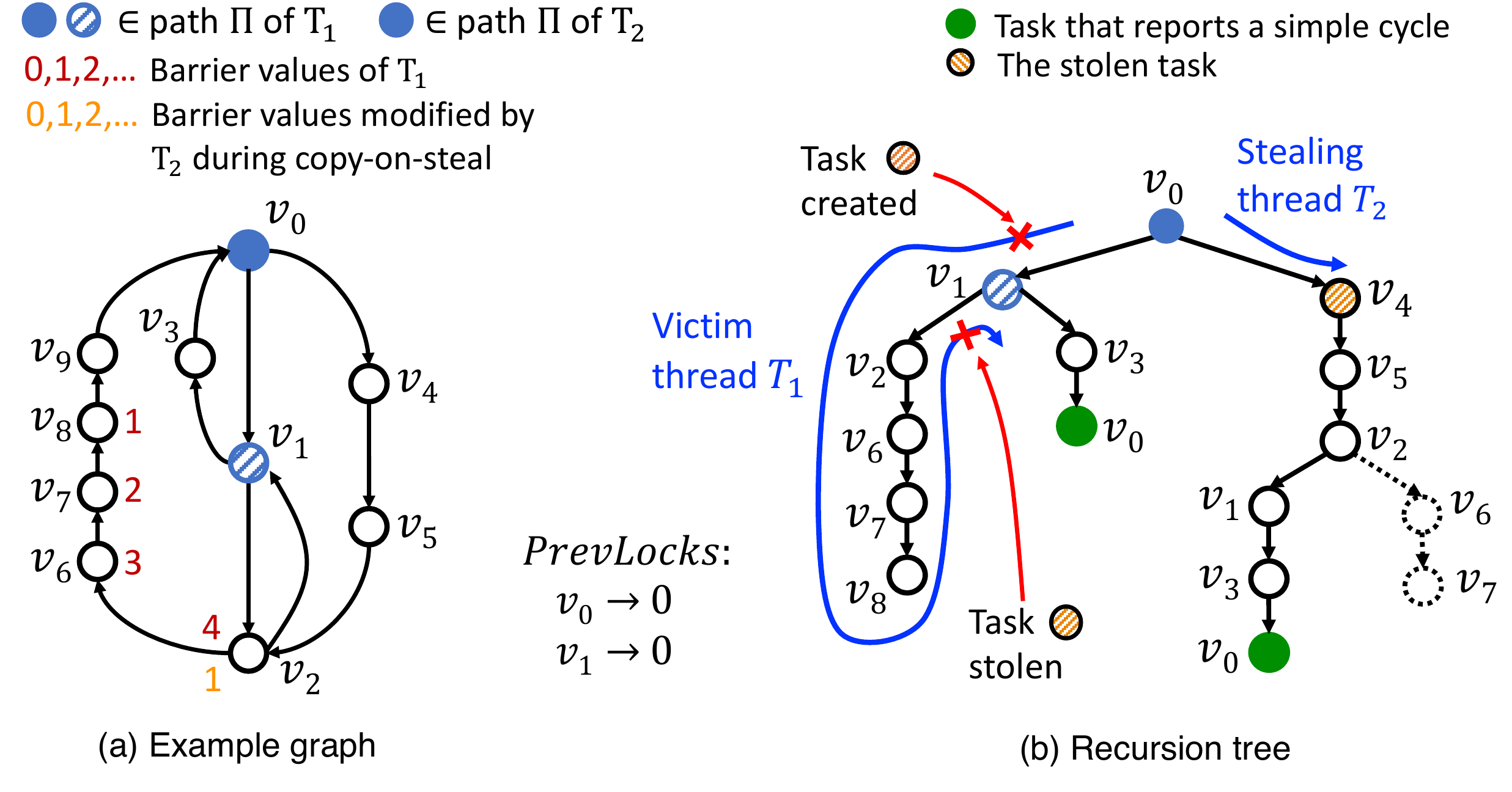}
	}
	\vspace{-.15in}
	\caption{(a) An example graph and (b) the recursion tree of our fine-grained hop-constrained Johnson algorithm when enumerating cycles of length $L=6$ that start from $v_0$.
	Barrier values of unmarked vertices are $0$.
	Copy-on-steal enables the thread $T_2$ to reuse barriers discovered by the thread $T_1$ and to avoid exploring the dotted part of the tree.
	}
	\label{fig:hc-cycle_cos_example}
	\vspace{-.1in}
\end{figure}

To parallelise BC-DFS in a fine-grained manner, we use the same technique as that used for fine-grained parallelisation of the Johnson algorithm (Section~\ref{sect:tpJohnson}) and the 2SCENT algorithm (Section~\ref{sect:temporalCycle}).
In this case, threads exploring a recursion tree of BC-DFS maintain separate data structures, such as the current path $\Pi$ and barrier values for each vertex, and use the copy-on-steal with the recursive unblocking approach to copy these data structures among threads.
Similarly to our algorithm from Section~\ref{sect:temporalCycle}, each thread also maintains a data structure $\mathit{PrevLocks}$ that records the original barrier value of each vertex $v$ from $\Pi$.
When a thread steals a task, it performs a recursive unblocking procedure for each vertex $v$ removed from $\Pi$ using its original barrier value obtained from $\mathit{PrevLocks}$, as shown in Algorithm~\ref{algo:fgj_cos_mod}.
This procedure reduces the barrier value of the vertices that can reach $v$, enabling the stealing thread to visit those vertices.
We refer to the resulting algorithm as the \textit{fine-grained parallel hop-constrained Johnson} algorithm.

The modified copy-on-steal with recursive unblocking approach given in Algorithm~\ref{algo:fgj_cos_mod} enables a stealing thread of the aforementioned fine-grained parallel algorithm to reuse barriers discovered by other threads.
This behaviour can be observed in the example given in \fig{fig:hc-cycle_cos_example}.
In that example, the thread $T_1$ first visits the vertices $v_2, v_6, v_7, v_8$ and sets the barrier value of each visited vertex to $L - |\Pi| + 1$ (values in red shown in \fig{fig:hc-cycle_cos_example}a) because it was not able to find a cycle of length $L=6$~\cite{peng_towards_2019}.
Here, $|\Pi|$ denotes the length of $\Pi$ at the moment of exploration of each vertex.
When the thread $T_2$ steals the task indicated in \fig{fig:hc-cycle_cos_example}b from $T_1$, the copy-on-steal mechanism executed by $T_2$ performs a recursive unblocking of the vertex $v_1$ using the original barrier value $0$ of $v_1$ obtained from $\mathit{PrevLocks}$.
This recursive unblocking reduces the barrier value of $v_2$ from $4$ to $1$, which enables $T_2$ to find the cycle that contains $v_2$.
The barrier values of the vertices $v_6$, $v_7$, and $v_8$ are not modified, and, thus, the thread $T_2$ avoids visiting these vertices unnecessarily.

\subsection{Summary}
In this section, we described a method to adapt the cycle enumeration algorithms, such as our fine-grained algorithms introduced in Sections~\ref{sect:tpJohnson} and~\ref{sect:tpReadTarjan}, to search for cycles under time window constraints.
In addition, we introduced a modified version of our copy-on-steal with recursive unblocking approach, introduced in Section~\ref{sect:tpJohnson}, that supports fine-grained parallelisation of temporal and hop-constrained cycle enumeration algorithms~\cite{kumar_2scent_2018, peng_towards_2019} derived from the Johnson algorithm.
As a result, our fine-grained parallel algorithms can enumerate cycles under time-window, temporal, and hop~constraints.

\section{Experimental evaluation}
\label{sect:experiments}

This section evaluates the performance of our fine-grained parallel algorithms for simple, temporal, and hop-constrained cycle enumeration\footnote{The open-source implementations of our algorithms are maintained here: \url{https://github.com/IBM/parallel-cycle-enumeration}.}.
As Table~\ref{tab:relWork} shows, we are the only ones to offer fine-grained parallel versions of the asymptotically-optimal cycle enumeration algorithms, such as the Johnson and the Read-Tarjan algorithms. 
However, the methods covered in Table~\ref{tab:relWork} can be parallelised using the coarse-grained approach covered in Section~\ref{sect:vertEdgePar}. 
Thus, we use the coarse-grained approach as our main comparison~point.

\begin{table}[t]
\centering
\caption{Hardware platforms used in the experiments. Here, $\mathrm{P}$, $\mathrm{C/P}$, and $\mathrm{T/C}$ represent the number of processors, the number of cores per processor, and the number of hardware threads per core, respectively.}
\vspace{-.10in}
\addtolength{\tabcolsep}{-1pt}
\begin{tabular}{l|cc}
\textbf{platform}    & Intel KNL~\cite{sodani_knights_2015} & Intel Xeon Skylake~\cite{gcloud_n1}  \\ \hline
\textbf{$\mathbf{P \times C/P \times T/C}$}   & $4\times64\times4$                                  & $5\times48\times2$                           \\
\textbf{Total no. threads}   & $1024$                                   & $480$                           \\
\textbf{Frequency}      & $1.3$ GHz                              & $2$ GHz                       \\
\textbf{Memory per proc.}      & $110$ GB                               & $360$ GB                        \\
\textbf{L1d/L2/L3 cache}   & $32$ KB/$512$ KB/none                     & $32$ KB/$1$ MB/$38.5$ MB    \\ \hline
\end{tabular}
\label{tab:platform}
\end{table}

\begin{table}[t]
\centering
\small 
\caption{Temporal graphs used in the experiments.
In this table, $\Delta_{\mathit{avg}}$ and $\Delta_{\mathit{max}}$ refer to the average and maximum values of $\Delta$, respectively, where $\Delta$ is the number of outgoing edges of a vertex, i.e., vertex degree.
Similarly, $P_{\mathit{avg}}$ and $P_{\mathit{max}}$ refer to the average and maximum values of $P$, respectively, where $P$ represents the number of parallel edges for a given source and destination vertex. 
Time span refers to the difference between the maximum and minimum timestamps in a graph.
}
\vspace{-.1in}

\begin{tabular}{ll|ccccccc}
graph           & abbr. & $n$          & $e$       & $\Delta_{\mathit{avg}}$ & $\Delta_{\mathit{max}}$ & $P_{\mathit{avg}}$ & $P_{\mathit{max}}$     &  Time span [days] \\ \hline
bitcoinalpha & BA    & 3.3 k & 24 k  & 7.4        & 490        & 1.0   & 1.0     & 1901    \\
bitcoinotc   & BO    & 4.8 k & 36 k  & 7.4        & 763        & 1.0   & 1.0     & 1903    \\
CollegeMsg   & CO    & 1.3 k & 60 k  & 44.3       & 1.1 k      & 2.9   & 98    & 193     \\
email-Eu-core& EM    & 824   & 332 k & 403.3      & 9.8 k      & 13.3  & 2.8 k & 803     \\
mathoverflow & MO    & 16 k  & 390 k & 23.7       & 4.5 k      & 1.7   & 225   & 2350    \\
transactions & TR    & 83 k  & 530 k & 6.4        & 1.7 k      & 1.5   & 290   & 1803    \\
higgs-activity & HG    & 278 k & 555 k & 2.0      & 655        & 1.2   & 95    & 6.0       \\
askubuntu    & AU    & 102 k & 727 k & 7.1        & 7.6 k      & 1.3   & 154   & 2613    \\
superuser    & SU    & 138 k & 1.1 M & 8.1        & 26 k       & 1.3   & 78    & 2773    \\
wiki-talk    & WT    & 140 k & 6.1 M & 43.7       & 233 k      & 1.9   & 1.1 k & 2277    \\
friends2008  & FR    & 481 k & 12 M  & 25.5       & 9.1 k      & 1.0   & 6.0     & 1826    \\
wiki-dynamic-nl           & NL    & 1.0 M   & 20 M  & 19.5       & 30 k  & 1.5   & 352   & 3602    \\
messages     & MS    & 313 k & 26 M  & 83.4       & 48 k       & 4.3   & 10 k  & 1880    \\
AML-Data     & AML   & 10 M  & 34 M  & 3.4        & 26 k       & 4.6   & 96    & 30      \\
stackoverflow& SO    & 2.0 M & 48 M  & 24.2       & 72 k       & 1.3   & 594   & 2774    \\ \hline
\end{tabular}
\label{tab:dataset}
\end{table}

The experiments are performed using two different clusters: Intel\footnote{Intel and Intel Xeon are trademarks or registered trademarks of Intel Corporation or its subsidiaries in the United States and other countries.} KNL~\cite{sodani_knights_2015} and Intel Xeon Skylake~\cite{gcloud_n1}.
The details of these two clusters are given in Table~\ref{tab:platform}.
We developed our code on the Intel KNL cluster and ran most of the analyses there; yet, for completeness, we also provide the comparisons to competing implementations on the Intel Xeon Skylake cluster available in Google Cloud's Compute Engine~\cite{gcloud_n1}.
Scalability experiments are conducted on the Intel KNL cluster. 
In these experiments, the data points that use $64$ threads or less are executed on a single Intel KNL processor; two processors are used to execute the data points that use $128$ threads; and all four processors are used otherwise.
Furthermore, we use more than one thread per core only if the number of threads used is greater than $256$.

We use the \textit{Threading Building Blocks} (TBB)~\cite{kukanov_foundations_2007} library to parallelise the algorithms on a single processor.
We distribute the execution of the algorithms across multiple processors using the Message Passing Interface (MPI)~\cite{mpi_1993}.
When using distributed execution, each processor stores a copy of the input graph in its main memory and searches for cycles starting from a different set of graph edges.
The starting edges are divided among the processors such that when the edges are ordered in the ascending order of their timestamps, $k$ consecutive edges in that order are assigned to $k$ different processors.
Each processor then uses its own dynamic scheduler to balance the workload across its hardware threads.
In this setup, workload imbalance across processors may still occur, but its impact is limited in our experiments because we use at most five processors.

We perform the experiments using the graphs listed in Table~\ref{tab:dataset}.
The TR, FR, and MS graphs are from \emph{Harvard Dataverse}~\cite{jankowski_spreading_2017}, the NL graph is from \emph{Konect}~\cite{kunegis_konect_2013}, the AML graph is from the \emph{AML-Data} repository~\cite{amldata}, and the rest are from \emph{SNAP}~\cite{snapnets}.
Except for BA and BO, all of the graphs have parallel edges, as shown in Table~\ref{tab:dataset}.
To make cycle enumeration problems tractable, we use time-window constraints in all of our experiments.
The time window sizes used in our experiments are given in the figures next to the graph names.
We stop the execution of an algorithm if it takes more than $24h$ on the Intel KNL cluster or more than $6h$ on the Intel Xeon Skylake cluster.

\subsection{Temporal cycle enumeration}

\begin{figure*}[t!]
	\centerline{
    \includegraphics[width=1\linewidth]{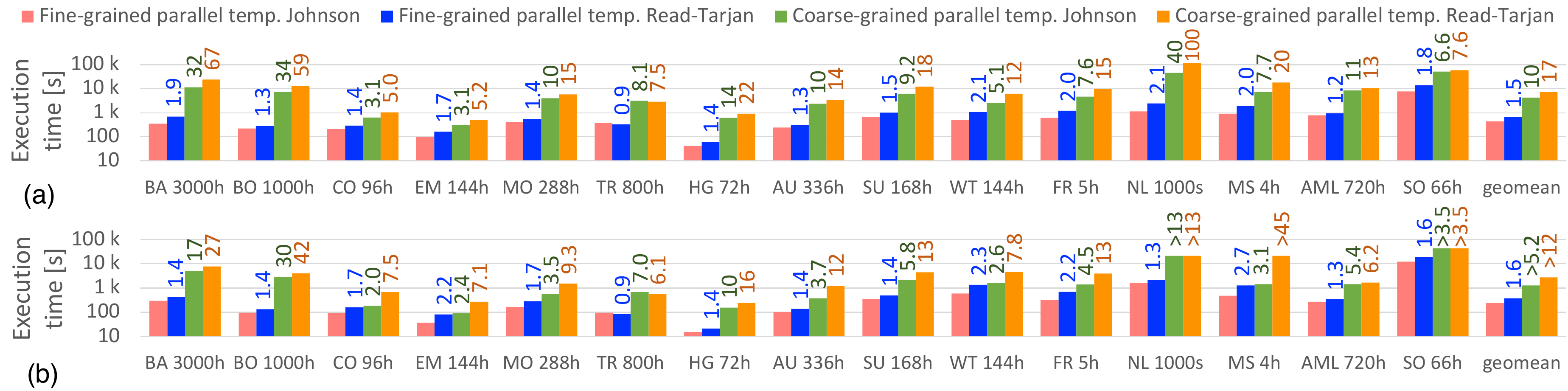}
    }
	\vspace{-.1in}
	\caption{Performance of parallel algorithms for temporal cycle enumeration on (a) the Intel KNL cluster using $1024$ threads and (b) the Intel Xeon Skylake cluster using $480$ threads.
    The values above the bars show the execution time of each algorithm relative to that of our fine-grained parallel temporal Johnson for the same benchmark.
    The values that contain the symbol $>$ represent the experiments that did not finish within the given time limit.
	}
	\vspace{-.1in}
	\label{fig:temporal-cycle_comparison}
\end{figure*}

\begin{figure*}[t]
	\centerline{
		\includegraphics[width=1\linewidth]{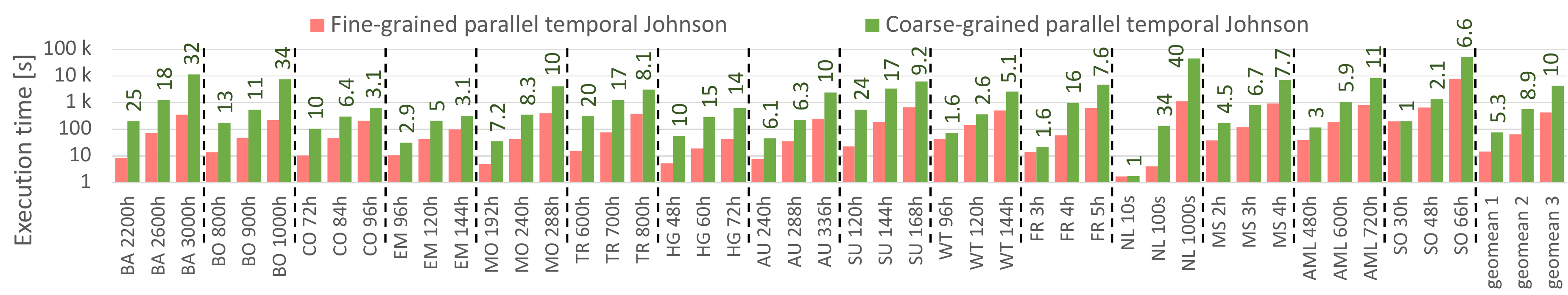}
	}
	\vspace{-.15in}
	\caption{
 Longer time windows increase the performance gap between the algorithms. 
 The algorithms are executed on the Intel KNL cluster using $1024$ threads.
 The numbers above the bars show the execution times of the coarse-grained algorithm relative to that of the fine-grained algorithm.
	}
	\label{fig:tw-effect}
	\vspace{-.1in}
\end{figure*}

The goal of a temporal cycle enumeration problem is to find all simple cycles with edges ordered in time. 
Here, we evaluate the performance of our fine-grained parallel algorithms for this problem introduced in Section~\ref{sect:temporalCycle}. 
Our main comparison points are the coarse-grained parallel versions of the temporal Johnson and temporal Read-Tarjan algorithms.
We refer to the backtracking phase of the state-of-the-art 2SCENT algorithm~\cite{kumar_2scent_2018} for temporal cycle enumeration as the temporal Johnson algorithm and parallelise it in a coarse-grained manner for the experiments.
We do not parallelise the entire 2SCENT algorithm because the preprocessing phase of 2SCENT is strictly sequential and has a time complexity in the order of the complexity of its backtracking phase.
We also provide direct comparisons with the 2SCENT algorithm.

\fig{fig:temporal-cycle_comparison} shows that our fine-grained parallel algorithms achieve an order of magnitude speedup compared to the coarse-grained algorithms on the Intel KNL cluster.
For the NL graph, this speedup reaches up to $40\times$.
Because the Intel Xeon Skylake cluster contains fewer physical cores than the Intel KNL cluster, the speedup between our fine-grained and the coarse-grained parallel Johnson algorithms is smaller on the former cluster.
As can be observed in \fig{fig:tw-effect}, this speedup increases as we increase the time window size used in the algorithms.
Note that enumerating cycles in longer time windows is more challenging because longer time windows contain a larger number of cycles.

\begin{figure*}[t]
	\centerline{
		\includegraphics[width=1\linewidth]{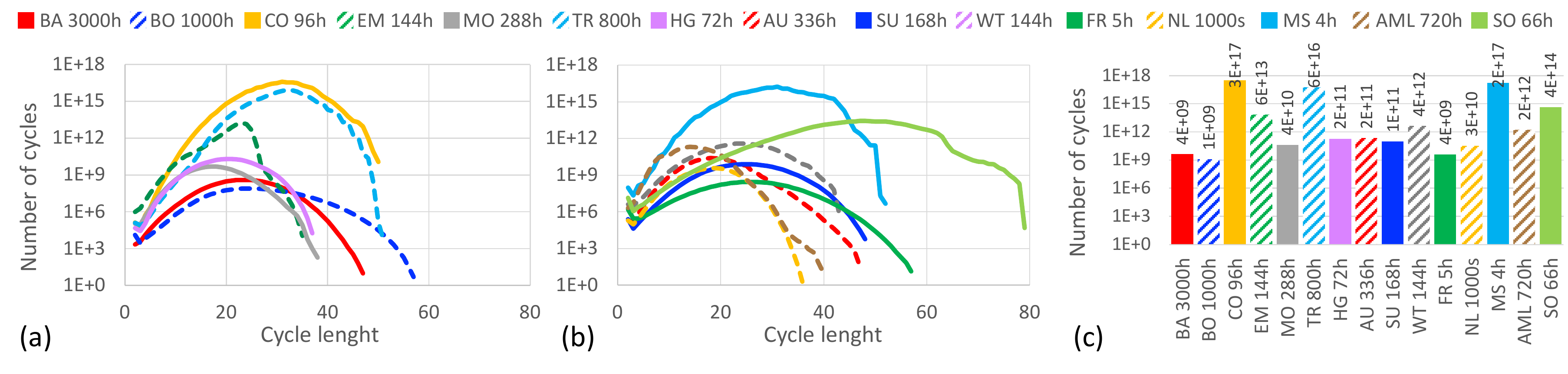}
	}
	\vspace{-.1in}
	\caption{(a), (b) Frequency distribution of temporal cycles for different cycle lengths and (c) the total number of temporal cycles discovered during the experiments shown in \fig{fig:temporal-cycle_comparison}. The number of temporal cycles discovered is several orders of magnitude greater than the number of vertices or edges of a graph.
	}
	\label{fig:numcyc_temp}
	\vspace{-.1in}
\end{figure*}
\begin{figure*}[t]
	\centerline{
		\includegraphics[width=0.92\linewidth]{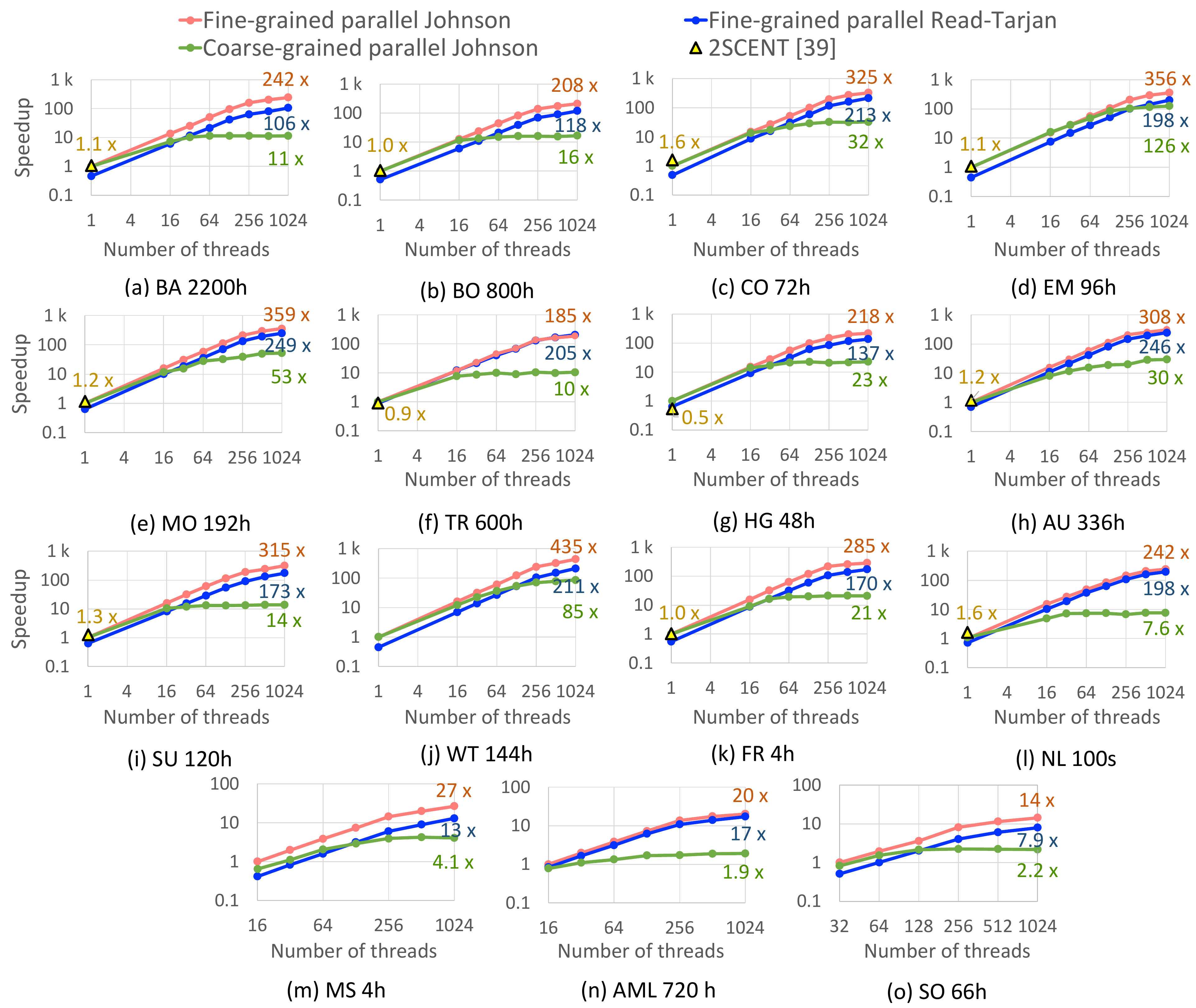}
	}
	\vspace{-.1in}
	\caption{Scalability evaluation of parallel temporal cycle enumeration algorithms executed on the Intel KNL cluster. The baseline is our fine-grained parallel temporal Johnson algorithm.
	The relative performance of 2SCENT~\cite{kumar_2scent_2018} is shown when it completes in $24$ hours. Note that the 2SCENT implementation is single-threaded and the single-threaded execution results are not available for all~graphs.
	}
	\label{fig:scalfig}
	\vspace{-.15in}
\end{figure*}

\fig{fig:numcyc_temp} shows the number of temporal cycles enumerated in the experiments shown in \fig{fig:temporal-cycle_comparison} and their frequency distribution for the given cycle length.
The execution time of the cycle enumeration algorithms typically depends on the number of cycles discovered.
However, due to the existence of parallel edges, many cycles may consist of the same sequence of vertices and can be explored simultaneously by grouping such cycles into a cycle bundle~\cite{kumar_2scent_2018}.
For example, in the cases of the CO, TR, and MS graphs, a cycle bundle discovered by our algorithms contains more than 10 M cycles on average.
For this reason, despite discovering several orders of magnitude more cycles in the CO, TR, and MS graphs than in the other graphs, the execution time of our fine-grained algorithms on the CO, TR, and MS graphs is comparable to their execution time on the other graphs.
In addition, in the cases of BA, BO, FR, and NL, where one temporal cycle per cycle bundle is discovered, our fine-grained algorithms are more time-consuming on the NL graph because more cycles are discovered in the NL graph than in the BA, BO, and FR graphs.
Thus, the execution time of our fine-grained algorithms depends more on the number of cycle bundles explored than the number of cycles.

The scalability evaluation of the parallel temporal cycle enumeration algorithms is given in \fig{fig:scalfig}.
We also report the performance of the sequential 2SCENT algorithm in the same figure.
The performance of our fine-grained parallel algorithms improves linearly until $256$ threads, after which it becomes sublinear due to simultaneous multithreading.
As a result, our fine-grained versions of the Johnson and the Read-Tarjan algorithms reach $435\times$ and $470\times$ speedups, respectively, compared to their serial versions. 
Additionally, when using $1024$ threads, our fine-grained Johnson algorithm is on average $260\times$ faster than 2SCENT when 2SCENT completes in $24$ hours.
On the other hand, the coarse-grained Johnson algorithm does not scale as well as the fine-grained algorithms.
As a result, the performance gap between the fine-grained and the coarse-grained algorithms increases as we increase the number of threads.

Overall, the fastest algorithm for temporal cycle enumeration that we tested is our fine-grained Johnson algorithm, which is, on average, $60\%$ faster than our fine-grained Read-Tarjan algorithm.
When using $1024$ threads, both fine-grained algorithms are an order of magnitude faster than their coarse-grained counterparts.
Moreover, our fine-grained parallel algorithms, executed on the Intel KNL cluster using 1024 threads, are two orders of magnitude faster than the state-of-the-art algorithm 2SCENT~\cite{kumar_2scent_2018}.

\subsection{Hop-constrained cycle enumeration}

In hop-constrained cycle enumeration, we search for all simple cycles in a graph that are shorter than the specified hop constraint.
Here, we compare our fine-grained parallel hop-constrained Johnson algorithm, introduced in Section~\ref{sect:lcCycles}, with the state-of-the-art algorithms BC-DFS and JOIN~\cite{peng_towards_2019} for this problem.
For this evaluation, we parallelised BC-DFS and JOIN in the coarse-grained manner.
Because adapting the Read-Tarjan algorithm to enumerate hop-constrained cycles is not trivial, we do not report the performance of the fine-grained and coarse-grained versions of this algorithm.
We also omit the performance results for the MS graph because our fine-grained algorithm did not finish under $12h$ when using the smallest time window size.

\fig{fig:hc-cycle_comparison} shows that our fine-grained parallel algorithm is, on average, more than $10\times$ faster than the coarse-grained parallel BC-DFS algorithm for the two largest hop constraints tested.
When using the hop-constraint that is less than or equal to ten, the coarse-grained parallelisation approach is able to achieve workload balance across cores, and thus the performance of this approach is similar to that of our fine-grained approach in this case.
As we increase the hop constraint, the probability of encountering deeper recursion trees also increases.
Exploring such trees using the coarse-grained approach leads to workload imbalance (see Section~\ref{sect:vertEdgePar}).
Our fine-grained algorithm is designed to resolve this problem by exploring a recursion tree using several threads.
Therefore, increasing the hop constraint increases the speedup of our fine-grained algorithm with respect to the coarse-grained algorithm.

According to \fig{fig:numcyc_hc}, the number of cycles increases exponentially as the hop constraint is increased.
Thus, increasing the hop constraint could lead to an exponential increase in the execution time of our fine-grained parallel algorithm for hop-constrained cycle enumeration, which can be observed in \fig{fig:hc-cycle_comparison}.
Note that Figs.~\ref{fig:numcyc_temp} and~\ref{fig:numcyc_simple} indicate that the frequency distributions of the cycles have a bell shape.
As a result, the increase in the number of cycles with increasing hop constraints shown in \fig{fig:numcyc_hc} may not be exponential when the hop constraint is increased beyond $20$.

\begin{figure*}[t!]
       	\centerline{
            \includegraphics[width=1\linewidth]{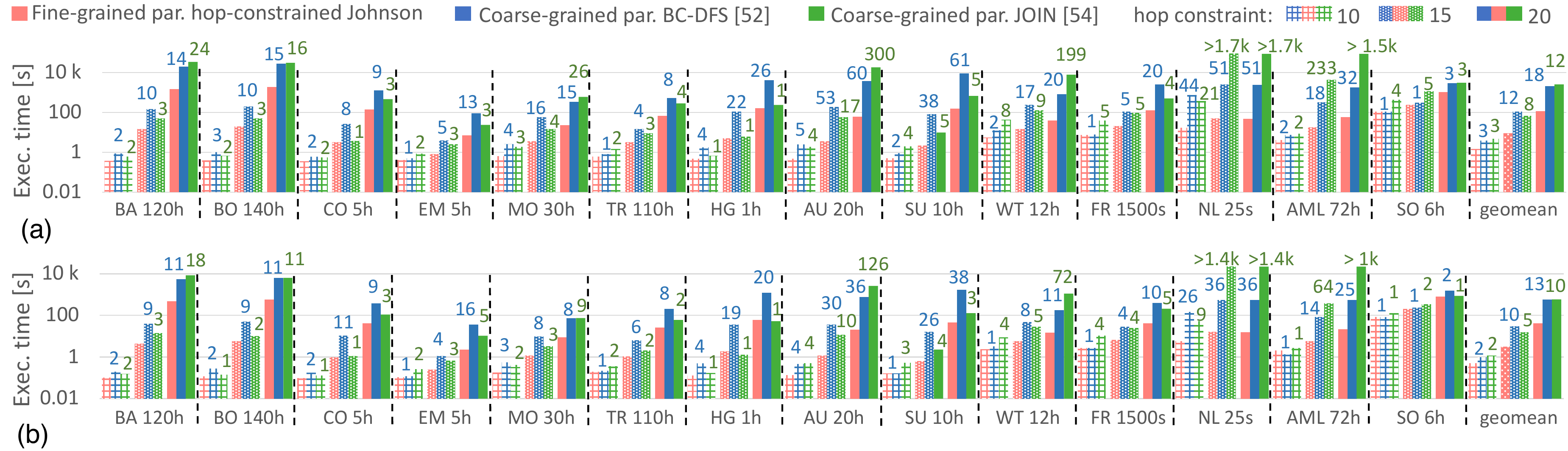}
        }
	\vspace{-.05in}
	\caption{Performance of parallel algorithms for hop-constrained simple cycle enumeration on (a) the Intel KNL cluster using $1024$ threads and (b) the Intel Xeon Skylake cluster using $480$ threads.
    The values above the bars show the execution time of the coarse-grained parallel algorithms relative to that of our fine-grained parallel algorithm.  
    The values that contain the symbol $>$ represent the experiments that did not finish within the given time limit.
    Larger hop constraints increase the performance gap between the two~algorithms.
	}
	\label{fig:hc-cycle_comparison}
	\vspace{-.1in}
\end{figure*}

\begin{figure*}[t]
	\centerline{
		\includegraphics[width=1\linewidth]{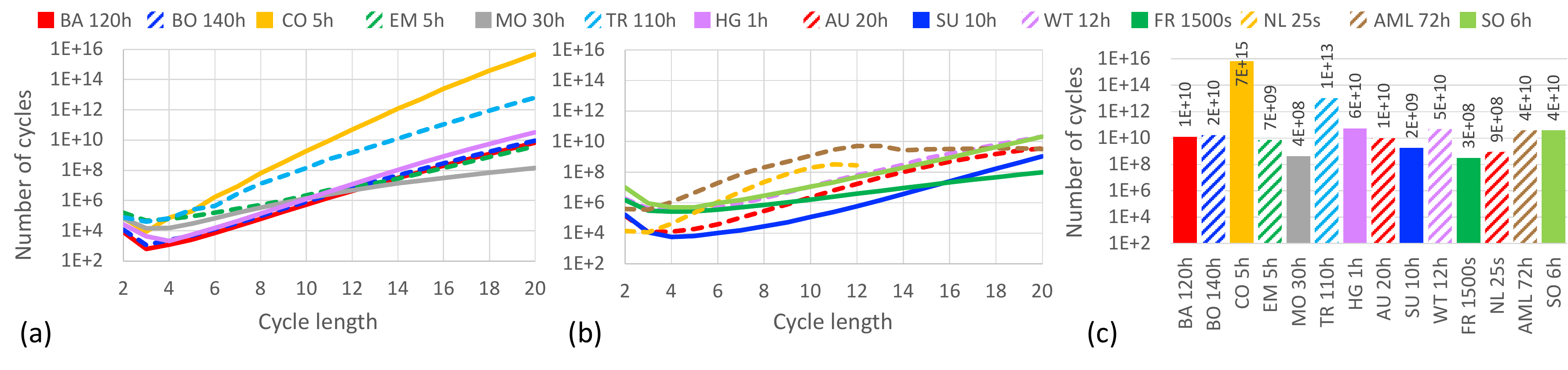}
	}
	\vspace{-.1in}
	\caption{(a), (b) Frequency distribution of hop-constrained cycles for different cycle lengths and (c) the total number of hop-constrained cycles discovered during the experiments shown in \fig{fig:hc-cycle_comparison} using the hop-constraint of $20$. In most cases, the number of cycles increases exponentially with hop-constraint.
	}
	\label{fig:numcyc_hc}
	\vspace{-.1in}
\end{figure*}

\begin{figure*}[t]
	\centerline{
		\includegraphics[width=0.87\linewidth]{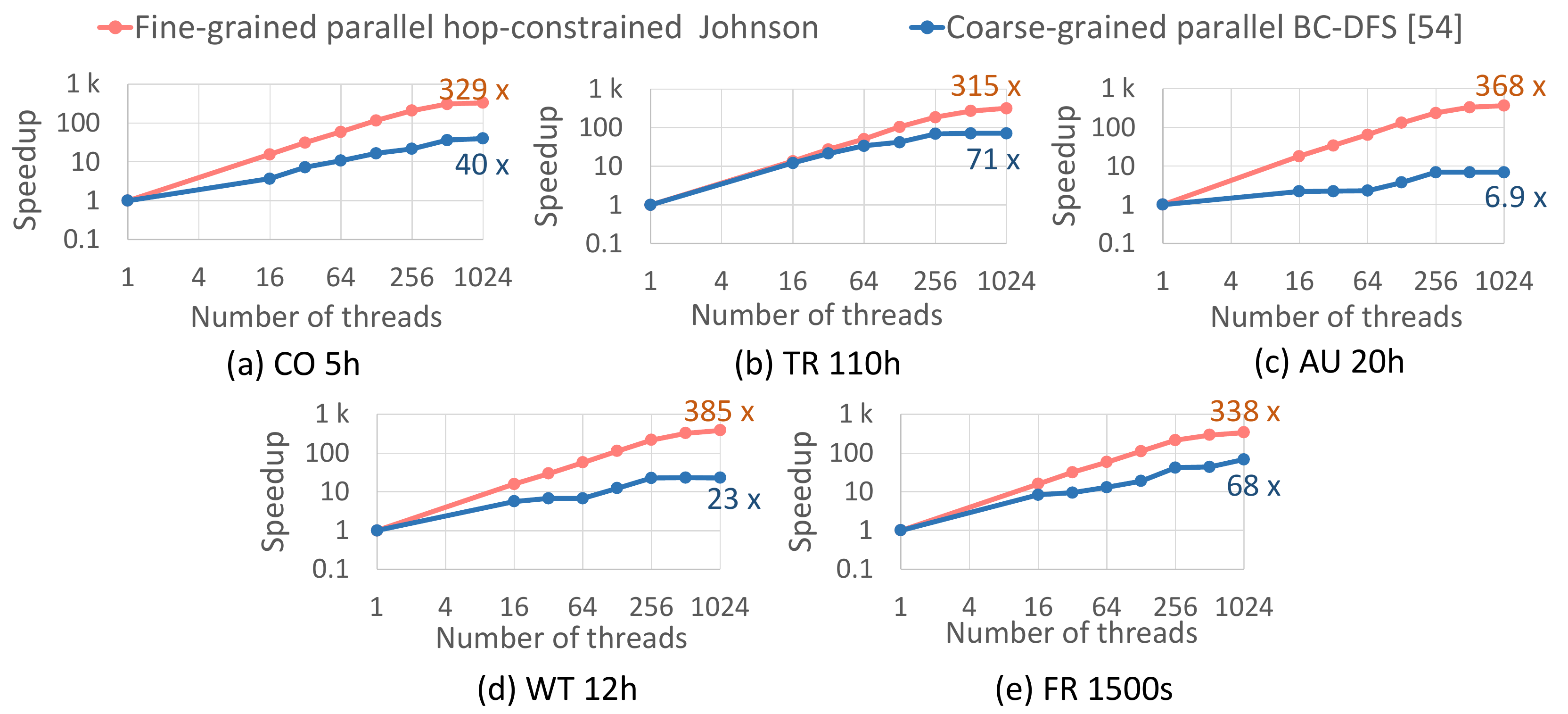}
	}
	\vspace{-.05in}
	\caption{Scalability evaluation of parallel hop-constrained cycle enumeration algorithms executed on the Intel KNL cluster using the hop constraint of $15$. The speedup values are relative to the single-threaded execution of BC-DFS.
 Evaluation on other graphs is omitted for brevity.
	}
	\label{fig:scalfig_hc}
	\vspace{-.1in}
\end{figure*}

When the hop constraint is set to $20$, our fine-grained parallel algorithm is, on average, $10\times$ faster than the coarse-grained parallel JOIN algorithm, as shown in \fig{fig:hc-cycle_comparison}.
Although the latter algorithm can be competitive with our fine-grained algorithm, it can also suffer from long execution times, such as in the cases of the AU, NL, and AML graphs.
The reason for these long execution times is the fact that the JOIN algorithm might temporarily construct many non-simple cycles while searching for simple cycles.
Because this algorithm constructs cycles by combining simple paths, it is not guaranteed that each combination results in a simple cycle.
The overhead of combining paths can dominate the execution time of JOIN if this algorithm constructs orders of magnitude more non-simple cycles than simple cycles.
For instance, this situation occurs in the case of AU and hop constraint of $20$, where JOIN discovers $600\times$ more non-simple cycles than simple cycles.
As a result, the speedup of our fine-grained algorithm compared to the coarse-grained JOIN algorithm can reach up to three orders of magnitude.

\fig{fig:scalfig_hc} shows that the speedup of our fine-grained parallel Johnson algorithm with respect to the coarse-grained parallel BC-DFS can be increased by using more threads.
The performance of our fine-grained parallel algorithm scales linearly with the number of threads, whereas the scaling of the coarse-grained parallel BC-DFS eventually slows down.
Thus, in addition to being, on average, an order of magnitude faster than the coarse-grained parallel BC-DFS, our fine-grained algorithm is also more scalable.

\subsection{Simple cycle enumeration}

Here, we evaluate our fine-grained parallel algorithms for simple cycle enumeration.
The computational complexity of simple cycle enumeration is higher than the complexity of temporal and hop-constrained cycle enumeration because simple cycle enumeration does not impose temporal ordering or hop constraints. 
The only constraint we impose is the time-window constraint.
Because the complexity of enumerating simple cycles is higher, we use smaller time windows compared to the cases of temporal and hop-constrained cycle enumeration.
We use the coarse-grained parallel versions of the Johnson and the Read-Tarjan algorithms as our main comparison points.
We do not report the results for the MS graph because our algorithms did not finish in $12h$ even if we set the time window to one second.
We also provide a comparison with the Tiernan algorithm~\cite{tiernan_efficient_1970} parallelised in a fine-grained manner, which is a more efficient version of the previous algorithm by Qing et al.~\cite{nah_efficient_2020} (see Table~\ref{tab:relWork} and Section~\ref{sect:related_work}).
We parallelise the Tiernan algorithm in a fine-grained manner by wrapping each recursive call of this algorithm into a task and by using a dynamic task scheduler to balance the workload across the threads.
Note that the algorithm by Qing et al.~\cite{nah_efficient_2020} uses a static load balancing mechanism, which makes it less efficient than our fine-grained parallelisation of the Tiernan algorithm.

\begin{figure*}[t!]
 	\centerline{
            \includegraphics[width=1\linewidth]{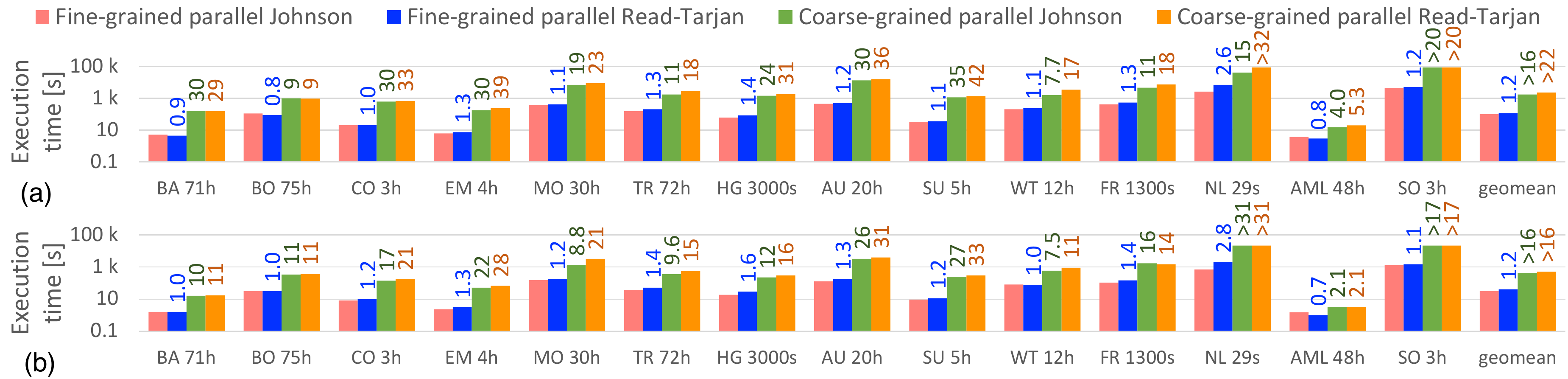}
        }
	\vspace{-.05in}
	\caption{Performance of parallel algorithms for simple cycle enumeration on (a) the Intel KNL cluster using $1024$ threads and (b) the Intel Xeon Skylake cluster using $480$ threads.
    The values above the bars show the execution time of each algorithm relative to that of our fine-grained parallel Johnson algorithm for the same benchmark.
    The values that contain the symbol $>$ represent the experiments that did not finish within the given time limit.
	}
	\vspace{-.1in}
	\label{fig:simple-cycle_comparison}
\end{figure*}
\begin{figure*}[t]
	\centerline{
		\includegraphics[width=0.87\linewidth]{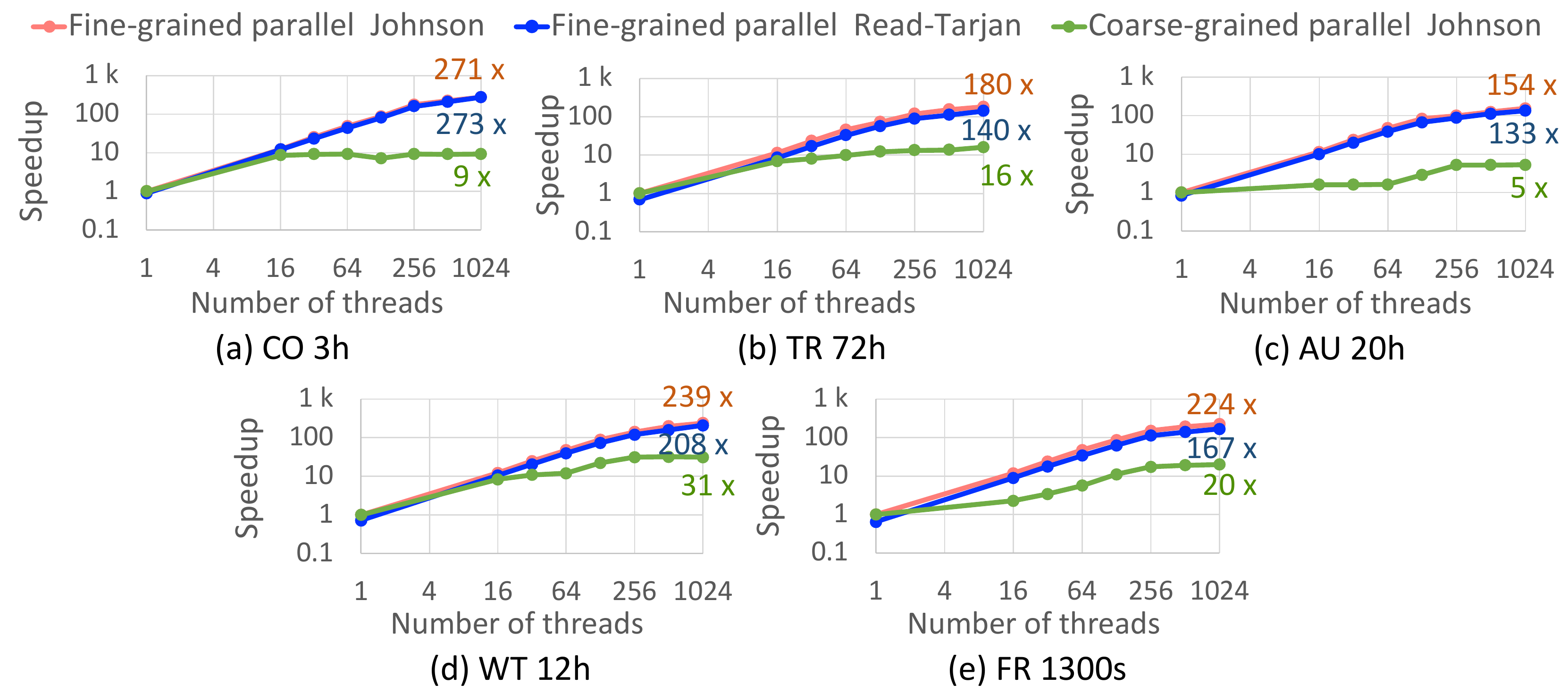}
	}
	\vspace{-.1in}
	\caption{Scalability evaluation of parallel simple cycle enumeration algorithms executed on the Intel KNL cluster. The speedup values are relative to the single-threaded execution of the Johnson algorithm.
 Evaluation on other graphs is omitted for brevity.
	}
	\label{fig:scalfig_sc}
	\vspace{-.1in}
\end{figure*}

As we can see in \fig{fig:simple-cycle_comparison}, our fine-grained parallel algorithms show an order of magnitude average speedup compared to coarse-grained parallel algorithms on two different platforms.
The reason for this speedup is better scalability of our fine-grained algorithms, which we demonstrate in \fig{fig:scalfig_sc}.
Similarly to the cases of temporal and hop-constrained cycle enumeration (see Figs.~\ref{fig:scalfig} and~\ref{fig:scalfig_hc}), our fine-grained parallel algorithms scale linearly with the number of physical cores used whereas the coarse-grained parallel Johnson algorithm does not scale as well.
Thus, the speedup between the fine-grained and the coarse-grained algorithms increases by utilising more threads.

\fig{fig:numcyc_simple} shows the number of simple cycles enumerated in the experiments shown in \fig{fig:simple-cycle_comparison} and their frequency distribution for the given cycle length.
Similarly to temporal cycle enumeration, the execution time mainly depends on the number of cycle bundles explored rather than on the number of cycles enumerated.
For example, each cycle bundle explored in the BA, BO, SU, FR, and NL graphs contains only two or fewer simple cycles on average, and the execution time of our fine-grained Johnson algorithm is the longest for the NL graph, which also has the most reported cycles (see \fig{fig:numcyc_simple}).
Furthermore, despite BA and BO having similar sizes, the execution time of our fine-grained Johnson algorithm is an order of magnitude longer for BO than for BA.
The reason for this difference is that more cycles were discovered in the BO graph than in the BA graph for the time window sizes given in \fig{fig:simple-cycle_comparison}.
As a result, the execution time of the simple cycle enumeration can significantly vary, even for graphs of similar sizes.

\fig{fig:simple-cycle_comparison_tiernan} presents the comparison of our fine-grained parallel Johnson and Read-Tarjan algorithms with the Tiernan algorithm~\cite{tiernan_efficient_1970} parallelised in a fine-grained manner.
Our fine-grained parallel Johnson algorithm is up to $7\times$ faster than the fine-grained parallel Tiernan algorithm.
The main reason for this performance gap is that the Tiernan algorithm performs more redundant work than the Johnson algorithm (see Section~\ref{section:back_tiernan}).
The fine-grained parallel Read-Tarjan algorithm is slower in the case of NL than the fine-grained parallel Tiernan algorithm because the redundant work performed by the algorithms is limited, and the Tiernan algorithm performs less work per visited edge than the Read-Tarjan algorithm.
However, the fine-grained parallel Read-Tarjan algorithm can be up to $5.3 \times$ faster than the fine-grained parallel Tiernan algorithm for other benchmarks.
Therefore, our fine-grained parallel Johnson and Read-Tarjan algorithms are preferable to the parallel formulation of the Tiernan algorithm, such as the algorithm by Qing et al.~\cite{nah_efficient_2020}.

\begin{figure*}[t]
	\centerline{
		\includegraphics[width=1\linewidth]{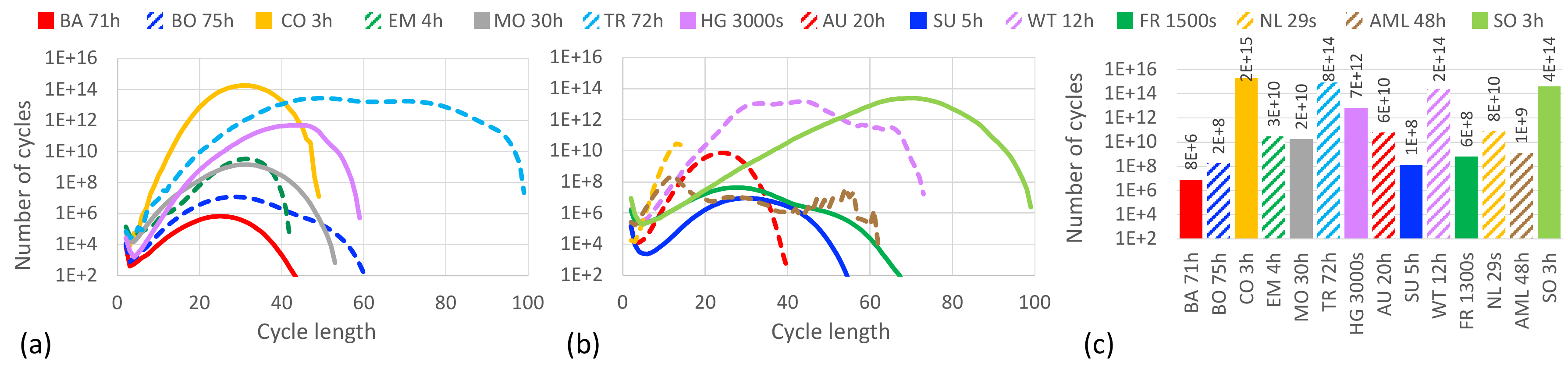}
	}
	\vspace{-.1in}
	\caption{(a), (b) Frequency distribution of simple cycles for different cycle lengths and (c) the total number of simple cycles discovered during the experiments shown in \fig{fig:simple-cycle_comparison}. Simple cycles tend to be longer than temporal cycles, despite using a smaller time window for simple cycle enumeration (see \fig{fig:numcyc_temp}).
	}
	\label{fig:numcyc_simple}
	\vspace{-.1in}
\end{figure*}

\begin{figure*}[t!]
 	\centerline{
            \includegraphics[width=0.82\linewidth]{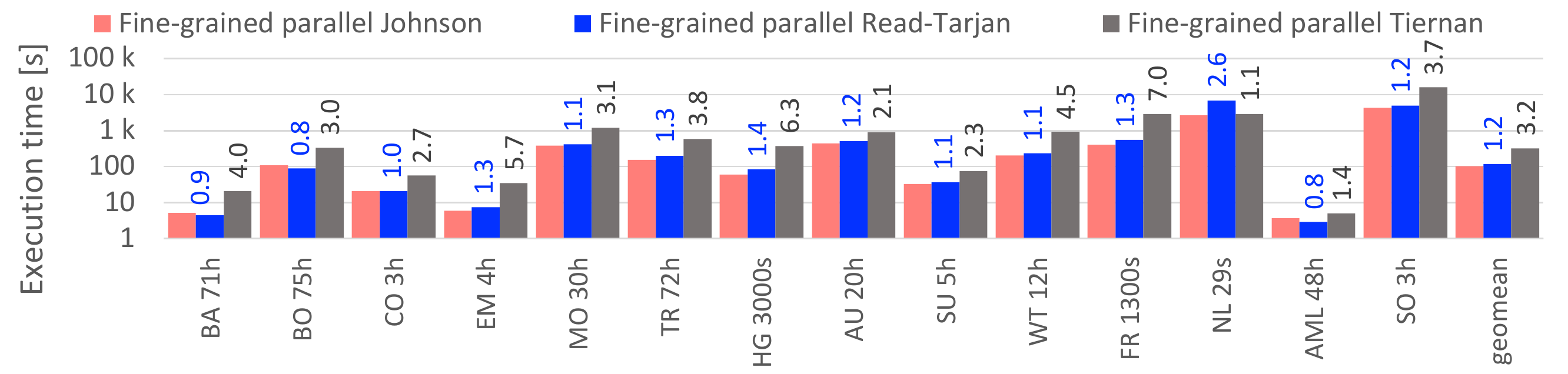}
        }
	\vspace{-.1in}
	\caption{Comparison of our fine-grained parallel algorithms for simple cycle enumeration with the fine-grained parallel Tiernan algorithm~\cite{tiernan_efficient_1970} on the Intel KNL cluster using $1024$ threads.
     The values above the bars show the execution time of each algorithm relative to that of our fine-grained parallel Johnson algorithm for the same benchmark. 
    Our fine-grained parallel Johnson algorithm is up to $7\times$ faster than the fine-grained parallel Tiernan algorithm.
	}
	\vspace{-.1in}
	\label{fig:simple-cycle_comparison_tiernan}
\end{figure*}

The synchronisation overheads caused by recursive unblocking of our fine-grained parallel Johnson algorithm (see Section~\ref{sect:fgj_copyOnSteal}) are visible only in the case of AML.
In this case, the fine-grained parallel Johnson algorithm performs $60\%$ fewer edge visits than the fine-grained parallel Read-Tarjan; however, it is $25\%$ slower. 
These synchronisation overheads can be explained by a very low cycle-to-vertex ratio.
Because a vertex is blocked if it cannot take part in a cycle, the probability of a vertex being blocked is higher when the cycle-to-vertex ratio is lower (see Table~\ref{tab:dataset} and \fig{fig:numcyc_simple}).
In consequence, more vertices are unblocked during the recursive unblocking of the fine-grained parallel Johnson algorithm, leading to longer critical sections and more contention on the locks. 
Nevertheless, our fine-grained parallel Johnson algorithm achieves a good trade-off between pruning efficiency and lock contention in most cases.

Overall, our fine-grained parallel Johnson and fine-grained parallel Read-Tarjan algorithms have comparable performance, as shown in \fig{fig:simple-cycle_comparison}.
Although the former algorithm is slightly faster, it can suffer from synchronisation overheads in some cases.
Nevertheless, both parallel algorithms achieve linear scaling with the number of physical cores used and achieve, on average, more than $10\times$ speedup with respect to coarse-grained parallel versions of the algorithms.
These conclusions also hold in the cases of temporal and hop-constrained cycle enumeration.

\subsection{Improvements to the Read-Tarjan algorithm}
\label{sect:exp_fgrt_imp}

\fig{fig:rt-opt-exp} shows the effect of our pruning improvements, introduced in Section~\ref{sect:impRT}, on the performance of our fine-grained Read-Tarjan algorithm.
The experiments are performed using a single Intel KNL processor using $256$ threads.
Note that using one processor instead of the entire cluster results in longer execution times, but it enables us to eliminate the effect of workload imbalance across processors in this experiment.
The execution time of the fine-grained parallel Read-Tarjan algorithm decreases after activating each optimisation because fewer redundant vertex and edge visits are performed during the execution of this algorithm.
When all optimisations are enabled, the average speedup of our algorithm for simple cycle enumeration compared to its unoptimised version is $2\times$.
In the case of temporal cycle enumeration, the average speedup increases to $3.4\times$.
As a result, our pruning improvements enable the fine-grained parallel Read-Tarjan algorithm to be competitive with the fine-grained parallel Johnson algorithm.

\begin{figure}[t]
	\centerline{
		\includegraphics[width=1\linewidth]{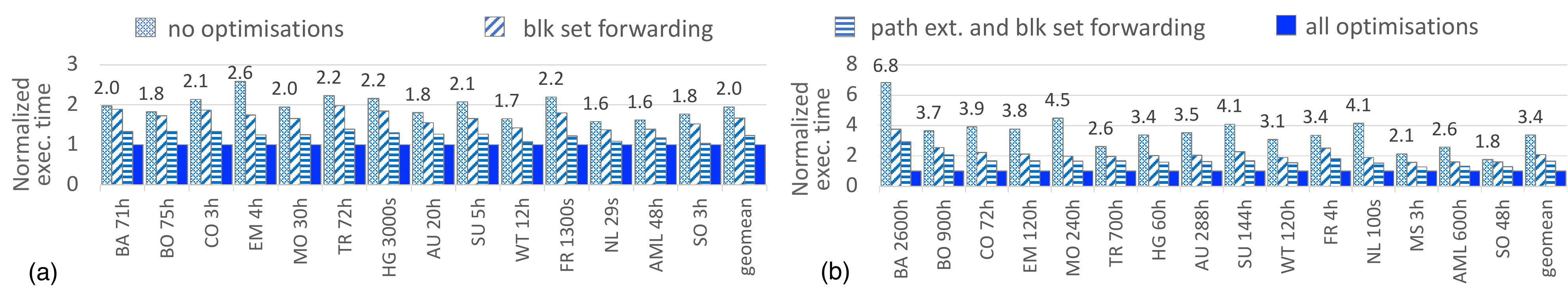}
	}
	\vspace{-.1in}
	\caption{Effect of the pruning improvements to our fine-grained parallel Read-Tarjan algorithm for (a) simple and (b) temporal cycle enumeration.
    Execution times are normalised to the case that includes all optimisations. 
    Our optimisations accelerate this algorithm by up to~$6.8\times$.
	}
	\label{fig:rt-opt-exp}
	\vspace{-.15in}
\end{figure}

\section{Conclusions}
\label{sect:conclusion}

This work has made three contributions to the area of parallel cycle enumeration.
First, we have introduced scalable fine-grained parallel versions of the state-of-the-art Johnson and Read-Tarjan algorithms for enumerating simple cycles.
In particular, we have shown that the novel fine-grained parallel approach we contributed for parallelising the Johnson algorithm can be adapted to support the enumeration of temporal and hop-constrained cycles as well.
Our fine-grained parallel algorithms for enumerating the aforementioned types of cycles achieve a near-linear performance scaling on a compute cluster with a total number of $256$ CPU cores that can execute $1024$ simultaneous software threads.

Secondly, we have shown that our fine-grained parallel cycle enumeration algorithms are scalable both in theory and in practice.
In contrast, their coarse-grained parallel versions do not share this property.
When using $1024$ software threads, our fine-grained parallel algorithms are on average an order of magnitude faster than their coarse-grained counterparts.
In addition, the performance gap between the fine-grained and coarse-grained parallel algorithms widens as we use more physical CPU cores. This performance gap also widens when increasing the time window in the case of temporal cycle enumeration and when increasing the hop constraint in the case of hop-constrained cycle enumeration.

Thirdly, we have shown that, whereas our fine-grained parallel Read-Tarjan algorithm is work efficient, our fine-grained parallel Johnson algorithm is not.
In general, the former is competitive against the latter because of the new pruning methods we introduced, yet the latter outperforms the former in most experiments.
In some rare cases, our fine-grained parallel Johnson algorithm can suffer from synchronisation overheads. 
In such cases, our fine-grained parallel Read-Tarjan algorithm offers a more scalable alternative.

\begin{acks}
The support of Swiss National Science Foundation (project number 172610) for this work is gratefully acknowledged.
\end{acks}

\bibliographystyle{ACM-Reference-Format}
\bibliography{References}

\end{document}